\DeclareMathOperator{\dist}{dist}
\newcommand{\E}{\mathbb{E}}
\newcommand{\eps}{\varepsilon}
\DeclareMathOperator{\bern}{Bernoulli}
\DeclareMathOperator{\maj}{MAJ}
\newcommand{\poly}{\textsf{poly}}
\newcommand{\ALG}{\textsf{ALG}}
\newcommand{\calD}{\mathcal{D}}
\newcommand{\calJ}{\mathcal{J}}
\newcommand{\bx}{\mathbf{x}}
\newcommand{\bff}{\mathbf{f}}
\newcommand{\bgg}{\mathbf{g}}
\newcommand{\bi}{\mathbf{i}}
\newcommand{\bJ}{\mathbf{J}}
\newcommand{\bz}{\mathbf{z}}
\newcommand{\by}{\mathbf{y}}
\newcommand{\bX}{\mathbf{X}}
\newtheorem{theorem}{Theorem}
\newtheorem{corollary}{Corollary}
\newtheorem{claim}{Claim}
\newtheorem{lemma}{Lemma}
\title{New Lower Bounds for Adaptive Tolerant Junta Testing}
\author{Xi Chen\footnote{Email: \texttt{xichen@cs.columbia.edu}}
\\ \textsl{Columbia University} \and
Shyamal Patel\thanks{Email: \texttt{shyamalpatelb@gmail.com}} \\ \textsl{Columbia University} }
\date{}
\begin{document}

\maketitle

% !TEX root =  ../TolerantJuntaLowerBound.tex

\begin{abstract}
We prove a $k^{-\Omega(\log(\varepsilon_2 - \varepsilon_1))}$ lower bound for adaptively testing whether a Boolean function is $\varepsilon_1$-close to or $\varepsilon_2$-far from $k$-juntas. 
Our results provide the first superpolynomial separation between tolerant and non-tolerant testing for a natural property of boolean functions under the adaptive setting. Furthermore, our techniques generalize to show that adaptively testing whether a function is $\varepsilon_1$-close to a $k$-junta or $\varepsilon_2$-far from $(k + o(k))$-juntas cannot be done with $\textsf{poly} (k, (\varepsilon_2 - \varepsilon_1)^{-1})$ queries. This is in contrast to an algorithm by Iyer, Tal and Whitmeyer [CCC 2021] which uses $\textsf{poly} (k, (\varepsilon_2 - \varepsilon_1)^{-1})$ queries to test whether a function is  $\varepsilon_1$-close to a $k$-junta or $\varepsilon_2$-far from $O(k/(\varepsilon_2-\varepsilon_1)^2)$-juntas.\end{abstract}
\newpage

\section{Introduction}

\noindent\textbf{Junta Testing.} We say  a Boolean function $f: \{0,1\}^n \rightarrow \{0,1\}$ is a \emph{$k$-junta} if it only depends on $k$ of its input variables. As one of the most fundamental classes of Boolean functions, 
juntas have received significant attention during the past few decades in extensive areas such as learning theory \cite{blum1994relevant, blum1997selection, mossel2003learning, valiant2015finding} 
  (where juntas are used to model learning concepts in the 
  presence of many irrelevant features) and analysis of Boolean functions \cite{ODonell2014analysis}
  (where juntas~are~used~as~good approximations of other classes of Boolean functions). Juntas have also been studied intensively in property testing.
Under the standard model,
  the goal of the junta testing problem is to understand how many queries are~needed by a randomized algorithm to decide whether a function $f: \{0,1\}^n \rightarrow \{0,1\}$ is~a $k$-junta or $\eps$-far from $k$-juntas, where we say $f$ is $\eps$-far from $k$-juntas if $f$ and $g$ disagree on at least $\eps 2^n$ entries for every $k$-junta $g: \{0,1\}^n \rightarrow \{0,1\}$. Testing juntas, in particular, was initially of interest because of its connections to testing long-codes (which is related to PCPs) \cite{bellare1998free, parnas2002testing}, but it is also a very natural question and highly motivated by feature selection in machine learning.

This problem has been settled now under both the adaptive and non-adaptive settings.\footnote{A 
  testing algorithm is adaptive if its queries can depend on results from previous queries.}
After a~sequence of work \cite{parnas2002testing, chockler2004lower, fischer2004testing, blais2008improved, blais2009testing, buhrman2012non, servedio2015adaptivity, chen2018settling, sauglam2018near}, starting with \cite{parnas2002testing}, it was shown that $\tilde{\Theta}(k/\eps)$ queries are necessary \cite{chockler2004lower, sauglam2018near} and sufficient \cite{blais2009testing} for adaptive junta testing; 
  $\tilde{\Theta}(k^{3/2}/\eps)$ queries are 
  necessary \cite{chen2018settling} and sufficient \cite{blais2008improved} for non-adaptive 
  algorithms.
Note that all the bounds are independent of $n$, the number of variables in the function, which can be
  much larger than $k$.
  \medskip
  
\noindent\textbf{Tolerant Testing.}
A drawback of the standard testing model is that the algorithm is allowed~to reject functions 
  that are very close to having the property.
Indeed all aforementioned junta testing algorithms would reject a function immediately once
  it has been found to have more than $k$ relevant variables, no matter how big their influences are in the function.
This makes algorithms under the standard model less applicable in more realistic scenarios
  that arise from experimental design and data analysis,
 where 
  even a correct instance with the desired property may be subject to a small amount of noise.

To address this question,
Parnas, Ron, and Rubinfeld \cite{parnas2006tolerant} introduced \emph{tolerant testing},~a natural generalization of the standard testing model. For the tolerant testing of juntas, the goal of a $(k,\eps_1, \eps_2)$-tester, for some $0\le \eps_1 < \eps_2 < 1$, is to tell whether a given function $f:\{0,1\}^n\rightarrow \{0,1\}$ is $\eps_1$-close to a $k$-junta or is $\eps_2$-far from $k$-juntas, where we say $f$ is $\eps_1$-close to a $k$-junta $g$ if $f$ and $g$ disagree on no more than $\eps_1 2^n$ entries.
So the standard model corresponds to the case of $\eps_1= 0$. 
In general, tolerant testing of a property can be much more challenging than its standard testing, as intuitively seeing a single violation of the property is no longer sufficient to reject the input and the tester needs to estimate the distance of the input to the property.\footnote{It was shown in \cite{parnas2006tolerant} that distance approximation and (fully) tolerant testing are equivalent up to a $\log(1/\eps)$ factor in the query complexity.}

While it is believed that tolerant testing is much harder 
  and current best upper bounds for well studied properties of Boolean functions such as juntas (which we review next)
  and monotonicity all remain exponential in their counterparts under the non-tolerant model
  \cite{iyer2021junta, fattal2010approximating},
there is no known superpolynomial separation between tolerant and non-tolerant testings
  for natural properties of Boolean functions.\footnote{Using PCPs, \cite{fischer2005tolerant} showed the existence
  of a class of Boolean functions that has a strong separation 
  between the two models.}  
\emph{Our work proves the first such superpolynomial separation, using junta testing.}  
We also believe our approach may be fruitful in proving tolerant testing lower bounds for other properties such as monotonicity.

\noindent\textbf{Previous Results on Tolerant Junta Testing.}
Indeed  testing juntas under the tolerant~model turns out to be challenging. Even after much effort,
  there remains an exponential  
  gap in 
  our understanding of its query complexity.
After a sequence of work \cite{chakraborty2012junto, blais2019tolerant, de2019junta, iyer2021junta}, the best algorithm up to date \cite{iyer2021junta} (which is highly adaptive) still needs $\exp ( \tilde{O} ( \sqrt{ {k}/{(\eps_2 - \eps_1)}} ) )$ queries. 
And there has been no progress on the lower bound side: the current best 
  lower bound for adaptive testers remains to be the $\tilde{\Omega}(k/\eps)$ lower bound from the standard testing model.  

There has been more success on lower bounds when the tester is non-adaptive.
 \cite{levi2019lower} showed that any non-adaptive algorithm requires $\tilde{\Omega}(k^2)$ queries, for some constants $0<\eps_1 < \eps_2$. 
Later, for $\eps_1 = O(1/k^{1-\eta})$ and $\eps_2 = O(1/\sqrt{k})$,
  the bound was improved to $\smash{2^{k^{\eta}}}$ for any $0 < \eta < 1/2$ \cite{pallavoor2022approximating}.\medskip

\noindent\textbf{Relaxed Tolerant Junta Testing.} 
 Given the state of the art,
   the~following relaxed, easier model has been considered in the literature \cite{fischer2004testing,blais2011property,ron2013approximating,de2019junta,iyer2021junta}, with the hope~of developing  
   algorithms that are significantly more efficient than the best tolerant algorithm to date: For some $k\le k'$ and $\eps_1<\eps_2$,
the goal of a $(k, k', \eps_1, \eps_2)$-tester is to decide whether a given function is  $\eps_1$-close to a $k$-junta or is $\eps_2$-far from  $k'$-juntas. 

The relaxed model was first posed by Fischer et al. in \cite{fischer2004testing} under the 
	  non-tolerant setting (with $\eps_1=0$), who asked whether a $(k, 2k, 0, \eps)$-tester still requires $\Omega(k)$ queries. Blais~et~al.~\cite{blais2011property} proved that any $(k, k + O(\sqrt{k}), 0, \eps)$-tester must make $\Omega(k)$ queries for $\eps = \Theta(1)$. This was improved by~Ron~and Tsur \cite{ron2013approximating} to an $\tilde{\Omega}(k)$ lower bound for $(k, 2k, 0, \eps)$-testers for some constant $\eps$.
	
	As before, under the relaxed, tolerant setting ($\eps_1>0$), the best known lower bound remains~to be the $\tilde{\Omega}(k)$ lower bound of \cite{ron2013approximating} inherited from the non-tolerant setting. In contrast, much more efficient algorithms are known when $k'$ is sufficiently larger than $k$. First~Blais~et~al.~\cite{blais2019tolerant}~gave a $(k, 4k, \eps/16, \eps)$-tester that makes $\poly(k, \eps^{-1})$ queries.   De et al.~\cite{de2019junta} removed~the restriction of $\eps_2 / \eps_1 \ge 16$ and proved that $\poly(k,  ({\eps_2 - \eps_1})^{-1})$ queries suffice for a $(k, O(k^2/(\eps_2-\eps_1)^2), \eps_1, \eps_2)$-tester. More recently, Iyer et al. in \cite{iyer2021junta} obtained a $(k, O(k/(\eps_2-\eps_1)^2), \eps_1, \eps_2)$-tester that makes $\poly(k, ({\eps_2 - \eps_1})^{-1})$ queries.\medskip

\noindent\textbf{Our Contribution.} Obtaining stronger adaptive lower bounds for tolerant junta testing has~been an important open problem in the property testing of Boolean functions (see e.g., the open problem posed in \cite{iyer2021junta}).
In this paper, we make progress on this question by giving a superpolynomial separation between
  tolerant and non-tolerant adaptive junta testing:

\begin{theorem}\label{theo1}
Let $\eps_1,\eps_2 :0.01 \leq \eps_1 < \eps_2 \leq 0.49$ be two parameters such that $\eps_2 - \eps_1 \geq 2^{-O(k^{.99})}$.\footnote{Both constants $0.01$ and $0.49$ are arbitrary; any constants that are positive and strictly less than $1/2$ would work.}
Then any $(k, \eps_1, \eps_2)$-tolerant junta tester requires 
$k^{-\Omega(\log(\eps_2  - \eps_1))}$ queries.
\end{theorem}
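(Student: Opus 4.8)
The plan is to prove this lower bound via the standard two-distribution method for adaptive testers: construct a ``yes'' distribution $\calD_{\text{yes}}$ supported on functions that are $\eps_1$-close to some $k$-junta, and a ``no'' distribution $\calD_{\text{no}}$ supported on functions that are $\eps_2$-far from $k$-juntas, and then show that any deterministic adaptive decision tree of depth $q = k^{o(\log(\eps_2-\eps_1))}$ cannot distinguish a draw from $\calD_{\text{yes}}$ from a draw from $\calD_{\text{no}}$ with advantage bounded away from $0$. By Yao's principle this gives the lower bound against randomized adaptive testers. The natural architecture, following the non-adaptive constructions of \cite{levi2019lower, pallavoor2022approximating} but pushed to the adaptive setting, is to designate a large set of $N \gg k$ potentially-relevant coordinates, plant on a random $k$-subset (``yes'' case) or $(k+1)$-subset, or more aggressively a $\Theta(k/(\eps_2-\eps_1)^2)$-subset as in the $k'$-variant (``no'' case) of them a carefully chosen multiplexer/addressing gadget whose distance to the class of $k$-juntas is sharply different in the two cases, while on each fixed set of coordinates the induced sub-function looks identically distributed.

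The key steps I would carry out, in order. First, specify the gadget: on the relevant block use an ``addressing function'' in which a small set of control coordinates selects which of several independent random Boolean functions (on disjoint address blocks) is evaluated, tuning the number of branches so that in the yes-case the whole thing collapses to a $k$-junta after rounding low-influence variables, but in the no-case every $k$-junta disagrees on a $\geq \eps_2$ fraction. Second, prove the distance bounds: the yes-case distance-to-$k$-juntas $\leq \eps_1$ is a direct calculation from the construction; the no-case $\geq \eps_2$ requires an anti-concentration / averaging argument showing that no choice of $k$ coordinates can capture enough of the function's variance — this is where the random Boolean leaf functions do the work, since a random function restricted to fixing any bounded set of coordinates stays far from constant. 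Third, and this is the crux, establish adaptive indistinguishability: I would argue that for any depth-$q$ decision tree, the transcript distribution is close in total variation under $\calD_{\text{yes}}$ versus $\calD_{\text{no}}$. The cleanest route is a hybrid/coupling argument: as long as the tree has not queried enough points to ``hit'' the planted control coordinates in an informative way — which by a birthday-type bound requires roughly $N^{\Omega(1)}$ or $k^{\Omega(\log(\cdot))}$ queries given the $\log(\eps_2-\eps_1)$-fold nesting of address blocks — the conditional distribution of the answer to the next query is a fair coin in both worlds, so the transcripts couple. The $\log(\eps_2-\eps_1)$ in the exponent should emerge precisely from iterating the gadget to depth $\Theta(\log(1/(\eps_2-\eps_1)))$, each level multiplying the query cost by a factor of $k^{\Omega(1)}$.

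The main obstacle I expect is the adaptive indistinguishability step. In the non-adaptive setting one can fix the query set in advance and reason about a single random restriction, but an adaptive tester can use early answers to ``home in'' on the planted structure, so I need a genuinely adaptive argument: I would track, after each query, the ``revealed'' portion of the hidden permutation/planting and show that the tester learns at most one new bit of it per query in expectation, via a potential-function or martingale argument, so that $q$ queries reveal $O(\log q)$ bits — far short of the $\Omega(\log N)$ bits needed to locate the control coordinates. Making this quantitatively tight enough to yield $k^{-\Omega(\log(\eps_2-\eps_1))}$ rather than a weaker bound, and simultaneously ensuring the yes/no distance gap survives the recursive gadget composition without degrading, is the delicate part; I would handle it by choosing $N = k^{\Theta(\log(1/(\eps_2-\eps_1)))}$ and verifying that the per-level losses in the distance bounds form a convergent product.
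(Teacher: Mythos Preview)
Your high-level framework (Yao, two distributions, indistinguishability) matches the paper, but the actual construction and the mechanism for handling adaptivity are quite different, and the difference is exactly at the point you flag as the main obstacle.

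The paper does \emph{not} use a recursive addressing/multiplexer gadget, does not take $N\gg k$, and does not use a martingale or potential argument for adaptivity. Instead it works on only $n=k+\ell$ variables with $\ell=\Theta(-\log(\eps_2-\eps_1))$. In $\calD_{NO}$ the function is (essentially) uniformly random; in $\calD_{YES}$ one picks a random $\ell$-subset $\bJ\subseteq[n]$ and, on each affine subcube $\{x:x|_{[n]\setminus\bJ}=\rho\}$, replaces the $2^\ell$ independent uniform bits by $2^\ell$ bits constrained to have odd parity. The point is that these $2^\ell$ bits are then $(2^\ell-1)$-wise independent and marginally uniform, so any set of queries that does not contain two points $x,x^{\oplus\bJ}$ differing exactly on $\bJ$ has \emph{literally the same} answer distribution under $\calD_{YES}$ and $\calD_{NO}$. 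A birthday bound over the random $\bJ$ then gives that $o\big(\sqrt{\binom{k}{\ell}}\big)=k^{o(\ell)}$ queries hit such a pair with probability $o(1)$, and this argument is completely indifferent to adaptivity. This is the paper's main technical insight: $k$-wise independence collapses the adaptive case to the non-adaptive one, so your proposed martingale step is simply unnecessary.

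The distance analysis is also different in character from what you sketch. The gap between $\eps_1$ and $\eps_2$ comes from a short coupling lemma (Lemma~\ref{lem:k-wise-distance} in the paper): an odd-parity string in $\{0,1\}^{2^\ell}$ is in expectation $\Theta(2^{-\ell})$ closer to $\{0^{2^\ell},1^{2^\ell}\}$ than a uniform string, which translates into a $\Theta(2^{-\ell})$ gap in distance to the nearest $k$-junta. The parameters $r,p$ in the construction are then just knobs to shift $\eps_2$ into the desired constant range; there is no recursive composition and no ``per-level loss'' to control.

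Regarding your proposal on its own terms: the step ``the tester learns at most one new bit of the hidden planting per query in expectation, via a potential/martingale argument'' is asserted but not substantiated, and this is precisely where non-adaptive constructions in the style of \cite{levi2019lower,pallavoor2022approximating} have historically failed to transfer to the adaptive setting. Without a concrete gadget and a concrete potential, it is not clear your outline can be completed; the paper's parity/$k$-wise-independence trick is what makes the adaptive case go through cleanly.
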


Theorem \ref{theo1} rules out the possibility of any tolerant junta tester that makes $\poly(k)\cdot F(\eps_1, \eps_2)$ many queries, where $F$ is an arbitrary function.
In particular, this means that there is no tolerant junta tester that makes $\poly(k)$ queries for 
  all constants $\eps_1,\eps_2: 0<\eps_1<\eps_2<1$.
Moreover,
  when $\eps_2-\eps_1$ is polynomially small (e.g., $\eps_1=1/3$ and $\eps_2=1/3+1/k^a$ for any constant $a>0$),
  Theorem \ref{theo1} gives a lower bound of $k^{\Omega(\log k)}$ for adaptive tolerant junta testing.

We remark that our proof naturally extends the linear lower bound of Chockler and Gutfreund \cite{chockler2004lower}. Namely, we prove Theorem \ref{theo1} by showing that determining whether a function on $(k + \ell)$ variables is $\eps_2$-far or $\eps_1$-close to a $k$-junta requires $k^{\Omega(\ell)}$ queries. Our main technical insight is that utilizing $k$-wise independence in the construction of hard instances essentially allows us to assume that the tester is non-adaptive in the lower bound proof. 

Next, by making some slight modifications to our construction, we can extend our techniques to the relaxed tolerant junta testing setting. Namely, we prove that

	\begin{theorem}
	\label{thm:tolerant-gap-lower-bound}
Let $\eps_1,\eps_2,k$ and $\gamma$ be parameters such that $0.01 \leq \eps_1 < \eps_2 \leq 0.49$, 
$\eps_2 - \eps_1 \geq 2^{-k^{0.1}}$ and 
  $\gamma \ge 1/k$ but is sufficiently small. 
  Then any $(k, (1 + \gamma) k, \eps_1, \eps_2)$-tester has query complexity $$\left( \frac{1}{\gamma} \right)^{\Omega(-\log(\eps_2 - \eps_1))}.$$ 
	\end{theorem}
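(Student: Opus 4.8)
The plan is to recycle the hard-instance construction behind Theorem~\ref{theo1} almost verbatim, changing only the way the \textsf{no}-instances fail to be juntas. Recall that Theorem~\ref{theo1} is obtained by building two distributions $\calD_{\mathsf{yes}}$ and $\calD_{\mathsf{no}}$ over functions on roughly $k+\ell$ coordinates, with $\ell = \Theta(\log(1/(\eps_2-\eps_1)))$: a draw from $\calD_{\mathsf{yes}}$ is $\eps_1$-close to a $k$-junta, a draw from $\calD_{\mathsf{no}}$ is $\eps_2$-far from $k$-juntas, the two ensembles are designed to be $k$-wise independent over the coordinates (so that the adaptive-to-non-adaptive reduction underlying Theorem~\ref{theo1} applies), and the hidden hierarchical structure has roughly $k^{\ell}$ configurations, forcing $k^{\Omega(\ell)}$ queries. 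I would re-run this template on a universe of $(1+\Theta(\gamma))k$ coordinates, reshaping the hierarchy so that the ``branching'' at each of the $\ell$ levels is $\Theta(1/\gamma)$ rather than $\Theta(k)$ — intuitively, the decoy coordinates that must be located at each level now live inside a block of size $\Theta(1/\gamma)$ — and so that the relevant coordinates of a \textsf{no}-instance genuinely spread over more than $(1+\gamma)k$ of the universe.

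The steps, in order: (1) Write down the modified $\calD_{\mathsf{yes}},\calD_{\mathsf{no}}$ with the new block sizes and verify that a \textsf{yes}-instance is still $\eps_1$-close to a genuine $k$-junta: as in Theorem~\ref{theo1}, the decoy coordinate(s) at level $i$ contribute only $O(2^{-i})$ to the distance from the core $k$-junta, and these telescope to at most $\eps_1$. (2) Verify that a \textsf{no}-instance is $\eps_2$-far from $(1+\gamma)k$-juntas. This is the crux: it suffices to show that in a \textsf{no}-instance more than $(1+\gamma)k$ coordinates carry influence $\Omega(\eps_2)$ and that the function is ``robustly spread'' over them, so that fixing any $\gamma k$ of the $\approx (1+\Theta(\gamma))k$ coordinates still leaves distance $\ge\eps_2$ to every $(1+\gamma)k$-junta; I would prove this via the same influence/anti-concentration estimate used for Theorem~\ref{theo1}, now applied after an arbitrary restriction of $\gamma k$ coordinates. (3) Check that the reshaped construction is still $k$-wise independent over its coordinates, so that the reduction to non-adaptive testers carries over unchanged. (4) Finish with the non-adaptive argument: a uniformly random hidden configuration agrees with any fixed query point at all $\ell$ levels with probability only $\gamma^{\Theta(\ell)}$, so a union bound over $q$ query points shows that any non-adaptive (hence, by (3), any) tester with $q \le (1/\gamma)^{\Omega(\ell)}$ queries has statistically indistinguishable transcripts under $\calD_{\mathsf{yes}}$ and $\calD_{\mathsf{no}}$; substituting $\ell=\Theta(\log(1/(\eps_2-\eps_1)))$ yields the stated bound.

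The main obstacle is step (2) together with the parameter bookkeeping it forces. Making the \textsf{no}-instances $\eps_2$-far from $(1+\gamma)k$-juntas requires the hidden structure to occupy enough ``excess'' coordinates beyond the core, while simultaneously keeping the per-level branching as large as $\Theta(1/\gamma)$ and the \textsf{yes}-instances $\eps_1$-close to $k$-juntas; balancing these constraints ties the depth $\ell$, the gap $\eps_2-\eps_1$, and $\gamma$ together (this is the source of the hypotheses $\eps_2-\eps_1 \ge 2^{-k^{0.1}}$ and ``$\gamma\ge 1/k$ but sufficiently small''). A secondary point to check carefully is that reshuffling the hidden structure into size-$\Theta(1/\gamma)$ blocks does not break the $k$-wise independence on which the non-adaptive reduction relies — this should be fine because each individual instance still has only $O(k)$ relevant coordinates and the configuration is drawn uniformly, but it must be verified against the exact definition of the ensemble.
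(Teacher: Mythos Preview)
Your plan misreads the construction behind Theorem~\ref{theo1} and, more importantly, misses the key structural step the paper uses to pass from Theorem~\ref{theo1} to Theorem~\ref{thm:tolerant-gap-lower-bound}. The construction is not ``hierarchical'' with $\ell$ levels and per-level branching; it is a single random choice of a size-$\ell$ set $\bJ\subseteq [n]$, and indistinguishability comes from the fact that the $2^\ell$ values $\{\bff(\rho\sqcup y):y\in\{0,1\}^\bJ\}$ are $(2^\ell-1)$-wise independent (not ``$k$-wise independent over coordinates''). More seriously, your direct construction on $(1+\Theta(\gamma))k$ coordinates cannot work across the stated range of $\gamma$. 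To make $\calD_{\mathsf{no}}$ $\eps_2$-far from $(1+\gamma)k$-juntas the hidden set must have size $|\bJ|\gtrsim \gamma k$, but in any construction of this type the separation between the junta-distances of $\calD_{\mathsf{yes}}$ and $\calD_{\mathsf{no}}$ is at most $2^{-\Omega(|\bJ|)}$ (this is exactly what Lemmas~\ref{lem:no-distribution-distance} and~\ref{lem:d-yes-distance} compute). Hence you are forced into $\gamma k = O(\ell)=O(\log(1/(\eps_2-\eps_1)))$, i.e., your argument only covers $\gamma = O(\ell/k)$, not all sufficiently small $\gamma\ge 1/k$.

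The paper resolves this by \emph{not} building hard instances directly at scale $k$. It first proves a weak-gap statement (Theorem~\ref{thm:weak-gap-tolerant-lower-bound}): on $k'+\ell$ coordinates, with a modified $\calD_{\mathsf{yes}}$ that uses a distance-$d$ coloring of $\{0,1\}^{\bJ}$ (Lemma~\ref{lem:radial-coloring}) so that $\calD_{\mathsf{yes}}$ is close to $k'$-juntas while the random $\calD_{\mathsf{no}}$ is far even from $(k'+\ell/10)$-juntas, any $(k',k'+\Theta(\ell),\eps_1,\eps_2)$-tester needs $(k'/\ell)^{\Omega(\ell)}$ queries. Then it applies this with the \emph{small} value $k'=\Theta(\ell/\gamma)$ and uses an XOR-lifting lemma (Lemma~\ref{lem:lifting} and Corollary~\ref{cor:lower-bound-lift}): replacing each coordinate by a block of $b=\lfloor k/k'\rfloor$ XOR'd coordinates preserves junta-distances exactly while multiplying both $k'$ and the gap by $b$, yielding a lower bound for $(k'b,(k'+\Theta(\ell))b+b-1,\eps_1,\eps_2)$-testers that covers $(k,(1+\gamma)k,\eps_1,\eps_2)$. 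This lifting is the missing idea in your proposal; it is what decouples the junta gap $\gamma k$ from the distance gap $\eps_2-\eps_1$ and produces the $(1/\gamma)^{\Omega(-\log(\eps_2-\eps_1))}$ bound for the full parameter range.
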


Setting $\eps_2 - \eps_1 = 1/{k}$, we get a superpolynomial lower bound (in $k$) whenever $\gamma = o(1)$. Hence Theorem \ref{thm:tolerant-gap-lower-bound} rules out the possibility of any $(k, k + o(k), \eps_1, \eps_2)$-tester that makes   $\poly(k,  ({\eps_2 - \eps_1})^{-1})$ queries,
  which complements the need of a sufficiently large gap between $k'$ and $k$ to obtain
  testers with  $\poly(k,  ({\eps_2 - \eps_1})^{-1})$ queries in \cite{blais2019tolerant,de2019junta,iyer2021junta}. \medskip

\noindent\textbf{Notation.}
We start with some simple notation. Given a set $S \subseteq [n]$ and a binary string $x \in \{0,1\}^n$, we let $x|_S$ be the restriction of $x$ to the coordinates in $S$. Similarly, given a string $x \in \{0,1\}^S$ and $y \in \{0,1\}^T$ for disjoint sets $S, T \subseteq [n]$ we let $x \sqcup y$ denote the string $z \in \{0,1\}^{S \cup T}$ with $z|_S = x$ and $z|_T = y$. For a set $S \subseteq [n]$ and $x \in \{0,1\}^n$, we take $x^{\oplus S}$ to denote the string $y \in \{0,1\}^n$ with $y|_{[n] \setminus S} = x|_{[n] \setminus S}$ and $y_s \not = x_s$ for all $s \in S$. Finally, we let $x_{-i} = x_1 x_2 ... x_{i-1} x_{i+1} ... x_n$.

	$B_n(r)$ will denote the hamming ball of radius $r$ centered at $0^n$ and $B(x,r)$ will be the hamming ball around a point $x \in \{0,1\}^n$ of radius $r$.

	We let $\mathbb{I}_E$ denote the indicator variable for $E$. Additionally, $\maj$ denotes the majority function, breaking ties arbitrarily. 
% !TEX root =  ../TolerantJuntaLowerBound.tex

\section{Lower Bounds for Tolerant Junta Testing}
We will actually prove a stronger lower bound which allows for non-constant $\eps_1$ and $\eps_2$.

\begin{theorem}
\label{thm:tolerant-lower-bound}
For any $0 < \eps_1 < \eps_2 \leq 0.49$ and integer $k$ with $\eps_2 - \eps_1 \geq 2^{-O(k^{.99})}$ and 
\[(1 - \eps_1/\eps_2)^{-1} \geq \log^{O(1)} (1/\eps_2),\] 
any $(k, \eps_1, \eps_2)$-tolerant junta tester must make at least $k^{-\Omega(\log(1 - \eps_1/\eps_2))}$ queries.
\end{theorem}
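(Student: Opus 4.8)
The plan is to prove the lower bound via Yao's principle by exhibiting two distributions $\calD_{\text{yes}}$ and $\calD_{\text{no}}$ over Boolean functions on $n$ variables such that a draw from $\calD_{\text{yes}}$ is (with high probability) $\eps_1$-close to some $k$-junta, a draw from $\calD_{\text{no}}$ is (with high probability) $\eps_2$-far from every $k$-junta, yet no deterministic algorithm making $q = k^{o(\log(1-\eps_1/\eps_2))}$ queries can distinguish the two. Following the paper's stated insight, I would build functions on a set of $k+\ell$ ``relevant'' coordinates (with $\ell \approx \log(1/(1-\eps_1/\eps_2))$ up to constants, and $n$ much larger so the tester cannot find the relevant block by brute force), where the function's value depends on the input restricted to these coordinates through a carefully chosen combining function. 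The key device is $k$-wise independence: the relevant block is a uniformly random $(k+\ell)$-subset $R$ of $[n]$, and the function is defined so that querying any set of $\le k$ of the relevant coordinates reveals nothing — the restricted function looks like a fixed reference function. This forces the tester, in order to gain information, to either (a) locate more than $k$ coordinates of $R$, which by a birthday/union-bound argument takes $\gg q$ queries since $|R|$ is a tiny fraction of $n$, or (b) distinguish the distributions while effectively seeing only $k$-coordinate ``shadows,'' which I will argue is information-theoretically impossible by construction.

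The concrete construction I would use: on the $k+\ell$ relevant coordinates, partition into a ``core'' of $k$ coordinates and $\ell$ ``extra'' coordinates; the yes-function is (a random embedding of) a $k$-junta perturbed on an $\eps_1$-fraction, and the no-function uses the $\ell$ extra coordinates nontrivially — e.g. an address-function / tribes-like gadget on the extras selecting among many $k$-juntas on the core — so that its distance to $k$-juntas is forced up to $\eps_2$. The numerology $\eps_2 - \eps_1$ versus $1-\eps_1/\eps_2 = (\eps_2-\eps_1)/\eps_2$ is exactly what lets $\ell$ scale like $-\log(1-\eps_1/\eps_2)$: each extra coordinate roughly halves the ``agreement slack,'' so $\ell$ coordinates buy a factor $2^{-\ell}$, and one needs $2^{-\ell} \lesssim 1 - \eps_1/\eps_2$. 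The side condition $(1-\eps_1/\eps_2)^{-1} \ge \log^{O(1)}(1/\eps_2)$ and $\eps_2-\eps_1 \ge 2^{-O(k^{.99})}$ are there to guarantee the embedding dimension is $o(k)$ relative to $k$ and that concentration (for the distance-to-junta estimates on a random perturbation) holds with room to spare; I would verify the closeness/farness claims by a Chernoff bound over the random perturbation plus a counting argument that no $k$-junta can agree with the no-function on more than a $(1-\eps_2)$-fraction, using that the gadget on the extras is ``spread out'' among $2^\ell$ essentially-independent sub-functions.

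For the indistinguishability core, I would set up a hybrid/coupling argument: couple the query-answer transcripts under $\calD_{\text{yes}}$ and $\calD_{\text{no}}$ so that they agree until the ``bad event'' that the algorithm's queries pin down $>k$ coordinates of the random relevant block $R$ (more precisely, that two queries differ only in coordinates of $R$ in a way that exposes the gadget). Conditioned on the complement of this bad event, the transcript is a deterministic function of the core-restricted reference function, which is identically distributed in the two cases by the $k$-wise-independence / random-embedding design — this is the step that ``makes the tester non-adaptive,'' since adaptivity cannot help when every partial view is distributed identically. Then I bound $\Pr[\text{bad event}]$ by $q \cdot \text{poly}(k) \cdot (\ell/n)$-type terms, or more carefully by the probability that a fixed set of $q$ query strings induces, on a random $R$, a pattern witnessing $>k$ relevant coordinates — this is where I expect the main technical obstacle: adaptivity means the query set is not fixed in advance, so I must run the union bound over the decision tree, arguing that along any root-to-leaf path of length $q$ the number of relevant coordinates discovered grows slowly, and that $q = k^{o(\ell)}$ keeps the failure probability $o(1)$. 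Assembling these pieces gives $\mathrm{TV}(\text{transcript}_{\text{yes}}, \text{transcript}_{\text{no}}) = o(1)$ for $q = k^{-o(\log(1-\eps_1/\eps_2))}$, which by Yao yields the theorem; Theorem~\ref{theo1} then follows by specializing to constant $\eps_1,\eps_2$ (where $1-\eps_1/\eps_2 = \Theta(\eps_2-\eps_1)$), and Theorem~\ref{thm:tolerant-gap-lower-bound} by replacing $k$ with $1/\gamma$ in the role of the embedding-overhead parameter.
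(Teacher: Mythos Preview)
Your proposal misses the paper's actual mechanism in two essential ways.

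First, the ambient dimension: the paper takes $n = k + \ell$ exactly, not $n \gg k$. There is no hidden ``relevant block'' in a large coordinate set; the tester knows all $k+\ell$ relevant coordinates from the start. The randomness is instead over which $\ell$-subset $\bJ \subseteq \{r+1,\dots,n\}$ is the ``special'' one, and the hardness comes from there being $\binom{k-r}{\ell} = k^{\Theta(\ell)}$ candidates. Your birthday/union-bound over $(\ell/n)$-type probabilities would yield a bound depending on $n$, which is not what the theorem claims and is not how the paper proceeds.

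Second, and more importantly, you misread what ``$k$-wise independence'' means here. It is not independence of coordinate locations but of \emph{function values}: in $\calD_{\text{YES}}$, for each $\rho \in \{0,1\}^{[n]\setminus \bJ}$ with $\rho|_{[r]}=0^r$, the $2^\ell$ bits $\{\bff(\rho \sqcup z): z\in\{0,1\}^{\bJ}\}$ are uniform subject to a single parity constraint $\bigoplus_z \bff(\rho\sqcup z)=1$, hence $(2^\ell-1)$-wise independent. This is what makes the transcripts identical unless the algorithm queries a pair $x, x^{\oplus \bJ}$ (Lemma~\ref{lem:almost-same-distribution}); the probability of that over random $\bJ$ is at most $Q^2/\binom{k-r}{\ell}$, yielding the $k^{\Omega(\ell)}$ bound directly without any decision-tree union bound. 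Your address/tribes-style gadget does not have this property, and it is unclear how you would prevent an adaptive tester from detecting the extra structure via local queries. Finally, the distance gap in the paper is not created by ``halving agreement slack per extra coordinate'' but by the delicate fact (Lemma~\ref{lem:k-wise-distance}) that the parity constraint lowers the expected distance to $\{0^n,1^n\}$ by $\Theta(1/n)$; the parameters $r$ and $p$ are then tuned so that this $2^{-2\ell}\cdot 2^{-r}$ gap lands exactly between $\eps_1$ and $\eps_2$.
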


We'll assume that $1-\eps_1/\eps_2 $ is sufficiently small. Note that we can always take the constant in the $\Omega$ notation to be sufficiently small, so that when $1-\eps_1/\eps_2 $ is large the theorem only gives an $\Omega(k)$ lower bound, which follows from \cite{chockler2004lower}. 

The key insight for our lower bound is the following observation: 

\begin{lemma}
\label{lem:k-wise-distance}
Let $n$ be an integer divisible by $4$, $\calD$ be the uniform distribution over $\{0,1\}^n$, and $\calD'$ be the uniform distribution over $ \{x \in \{0,1\}^n : \oplus_{i=1}^n x_i = 1\}$. Then
	\[\E_{\bx \sim \calD'} \big [ \dist(\bx, \{0^n, 1^n\}) \big] \leq \E_{\bx \sim \calD} \big [ \dist(\bx, \{0^n, 1^n\}) \big] - \frac{1}{n}, \]
where $0^n$ and $1^n$ are the all zeros and all ones strings respectively and $\dist$ is the hamming distance.
\end{lemma}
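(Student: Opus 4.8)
The plan is to reduce the statement to an exact evaluation of a single alternating binomial sum. First I would use that $\dist(x,\{0^n,1^n\}) = \min(|x|,n-|x|)$ depends only on the Hamming weight. Writing $f(w):=\min(w,n-w)$ and $p_w := \binom{n}{w}/2^n$, and using that exactly $2^{n-1}$ strings have odd weight (so $\Pr_{\calD'}[|\bx|=w] = 2p_w$ for odd $w$ and $0$ for even $w$), the left‑hand side minus the right‑hand side collapses to
\[
\E_{\bx\sim\calD}\big[f(|\bx|)\big]-\E_{\bx\sim\calD'}\big[f(|\bx|)\big]
=\sum_{w}p_w f(w)-2\!\!\sum_{w\ \text{odd}}\!\! p_w f(w)
=\sum_{w}(-1)^w p_w f(w)=:S .
\]
Equivalently, $S$ is the top Fourier coefficient $\wh f([n])$ of $f$ viewed as a real function on $\{0,1\}^n$. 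So the lemma is exactly the claim $S\ge 1/n$, and everything reduces to computing $S$.

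Next I would evaluate $S$ in closed form. Using $\min(w,n-w)=\tfrac n2-|w-\tfrac n2|$ and $\sum_w(-1)^w\binom nw=0$, it remains to evaluate $\sum_w(-1)^w\binom nw\,|w-m|$ with $m:=n/2$. Writing $|w-m|=\sum_{t=1}^{m}\big(\mathbb{I}[w\le m-t]+\mathbb{I}[w\ge m+t]\big)$, then applying the partial alternating‑sum identity $\sum_{j=0}^{\ell}(-1)^j\binom{N}{j}=(-1)^{\ell}\binom{N-1}{\ell}$ together with the symmetry $\binom{n-1}{m+t-1}=\binom{n-1}{m-t}$, a short manipulation should give
\[
\sum_{w}(-1)^w\binom{n}{w}\min(w,n-w)=2(-1)^{n/2}\binom{n-2}{\,n/2-1\,}.
\]
This is precisely where the hypothesis $4\mid n$ enters: it forces $n/2$ to be even, so the sign is $+1$ and $S=\binom{n-2}{n/2-1}/2^{n-1}>0$; for $n\equiv 2\pmod 4$ the sign is negative and the inequality of the lemma in fact reverses, which is a good sanity check that the closed form is right.

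Finally, $S\ge 1/n$ is equivalent to $n\binom{n-2}{n/2-1}\ge 2^{n-1}$. Setting $m=n/2\ge 2$ this reads $m\binom{2(m-1)}{m-1}\ge 4^{m-1}$, which follows from the standard estimate $\binom{2\ell}{\ell}\ge 4^{\ell}/(2\sqrt{\ell})$ (one‑line induction) applied with $\ell=m-1$, since the leftover factor $m/(2\sqrt{m-1})\ge 1$ is equivalent to $(m-2)^2\ge 0$. I expect the main obstacle to be the closed‑form evaluation of the alternating sum: one must track signs carefully, and a softer argument does not seem to suffice. In particular, coupling $\calD'$ with the uniform distribution over even‑weight strings by flipping one uniformly random coordinate expresses the target gap as a difference of two quantities that are each only $\Theta(1/\sqrt n)$, so crude bounds do not obviously yield the required $\Theta(1/n)$ slack — pinning down $\wh f([n])$ exactly appears to be the crux.
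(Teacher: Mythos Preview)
Your proof is correct but takes a different route from the paper. You compute the gap exactly as the top Fourier coefficient $\wh f([n])=\binom{n-2}{n/2-1}/2^{n-1}$ via an alternating partial-sum identity, and then verify $n\binom{n-2}{n/2-1}\ge 2^{n-1}$ with the central-binomial lower bound; this yields an exact closed form (and the sanity check that the sign flips when $n\equiv 2\pmod 4$). The paper instead uses precisely the kind of coupling you set aside: couple $\calD$ and $\calD'$ on the first $n-1$ coordinates, observe that for odd-parity $x$ with $4\mid n$ one has $\maj(x)=\maj(x_{-n})$, so the gap equals $\tfrac12-\Pr_{\calD'}[\bx_n\neq\maj(\bx_1,\ldots,\bx_{n-1})]$; then by coordinate symmetry this probability is $\E_{\calD'}[\dist(\bx,\{0^n,1^n\})]/n\le (n/2-1)/n$, giving $1/n$ directly. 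What made the coupling work is not estimating two $\Theta(1/\sqrt n)$ quantities separately (as in the even-weight coupling you tried) but rewriting $\Pr[\bx_n\neq\maj(\bx_{-n})]$ as an average over all coordinates, which collapses it to the normalized distance itself. Your approach buys the exact value; theirs is shorter and avoids any binomial estimates.
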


We remark that the $\frac{1}{n}$ term in the lemma is not tight, but will suffice for our purposes. The proof uses a coupling argument; we defer it to the end of the section.

Our construction will be parametrized by integers $\ell$ and $r$ as well as a value $p \in [0, \frac{1}{2})$, which will be specified at the end (in the proof of 
Theorem \ref{thm:tolerant-lower-bound}) by $k,\eps_1$ and $\eps_2$ and in particular, we will make sure that $r<k$. 
It may help the reader to set $r = 0$ and $p = 0$, which will lead to a lower bound construction that works for some $\eps_2 = \frac{1}{2} - o_\ell(1)$ and $\eps_1 = \eps_2 - 2^{-O(\ell)}$ (in which case $\calD_{NO}$ is simply the uniform distribution over all functions). Setting $r$ and $p$ appropriately (as we do later in the proof of Theorem \ref{thm:tolerant-lower-bound}) can help us shift $\eps_1$ and $\eps_2$ to where we want.

We will proceed by Yao's principle and give two distributions $\calD_{YES}$ and $\calD_{NO}$, both over boolean functions from $\{0,1\}^n$ to $\{0,1\}$ with $n := k + \ell$.
We show that they are  $\eps_1$-close and $\eps_2$-far from $k$-juntas, respectively, and show that it is difficult for any deterministic algorithm to distinguish them with few queries.
\begin{flushleft}
\begin{enumerate}
    	\item[] $\calD_{NO}$: 
	A boolean function $\bff:\{0,1\}^n\rightarrow\{0,1\}$ is drawn as follows:
	\begin{enumerate}
	\item For each $x\in \{0,1\}^n$ with $x|_{[r]} \not = 0^r$, independently draw $\bff(x)$ from $\bern(p)$.
	\item For each $x\in \{0,1\}^n$ with $x|_{[r]} = 0^r$, draw $\bff(x)$ uniformly
	and independently at random.
    \end{enumerate}
	    \item[] $\calD_{YES}$: A boolean function $\bff:\{0,1\}^n\rightarrow \{0,1\}$ is drawn as follows:
	\begin{enumerate}
	    \item Randomly choose a set $\bJ \subseteq \{r+1, ..., n\}$ of size $\ell$. 
	    \item For each $x\in \{0,1\}^n$ with $x|_{[r]} \not = 0^r$, independently draw $\bff(x)$ from $\bern(p)$. 
	    \item For each $x\in \{0,1\}^n$ with $x|_{[r]} = 0^r$ and 
	    $x |_\bJ \not = 1^\ell$, draw $\bff(x)$ uniformly and independently at random. 		\item For each $x\in \{0,1\}^n$ with $x|_{[r]} = 0^r$ and $x|_\bJ = 1^\ell$,
	    set $\bff(x)$ to be such that
    	    \[\bigoplus_{y \in \{0,1\}^\bJ} \bff(x|_{[n] \setminus \bJ} \sqcup y) = 1.\]
	\end{enumerate}
\end{enumerate}
\end{flushleft}	
The key property of the two distributions is that to distinguish $\calD_{YES}$ and $\calD_{NO}$ we must at least query a pair of points $x, x^{\oplus \bJ}$ for some $x$. (In fact, we must query, for some $x$, $x^{\oplus S}$ for all $S \subseteq \bJ$ to get any evidence, but this is more than we need for the lower bound.)

\begin{lemma}
\label{lem:almost-same-distribution}
Consider a set of points $x^{(1)}, ..., x^{(m)} \in \{0,1\}^n$ and $y_1,\ldots,y_m\in \{0,1\}$.
Then
	\[\Pr_{\bff, \bJ \sim \calD_{YES}} \left[\forall i\hspace{2mm}  \bff(x^{(i)}) = y_i \mid \forall i,j  \quad x^{(i)} \not = (x^{(j)})^{\oplus \bJ} \right] = \Pr_{\bff \sim \calD_{NO}} \left[\forall i \hspace{2mm} \bff(x^{(i)}) = y_i \right]\]
\end{lemma}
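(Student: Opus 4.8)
The plan is to condition on the random set $\bJ$ of $\calD_{YES}$. For a fixed $J\subseteq\{r+1,\dots,n\}$ of size $\ell$, write $E(J)$ for the event $\{\forall i,j:\ x^{(i)}\ne (x^{(j)})^{\oplus J}\}$ appearing in the conditioning. I would prove that for \emph{every} $J$ with $E(J)$ true, the conditional law of $(\bff(x^{(1)}),\dots,\bff(x^{(m)}))$ under $\calD_{YES}$ given $\bJ=J$ coincides with the law of $(\bff(x^{(1)}),\dots,\bff(x^{(m)}))$ under $\calD_{NO}$. Granting this, the lemma follows by averaging over $\bJ$:
\[\Pr_{\bff,\bJ\sim\calD_{YES}}\!\left[\forall i:\ \bff(x^{(i)})=y_i \,\middle|\, E(\bJ)\right]=\sum_{J:\,E(J)}\Pr\!\left[\bJ=J\mid E(\bJ)\right]\cdot\Pr_{\calD_{YES}}\!\left[\forall i:\ \bff(x^{(i)})=y_i \;\middle|\; \bJ=J\right],\]
and replacing each inner factor by $\Pr_{\calD_{NO}}[\forall i:\ \bff(x^{(i)})=y_i]$ pulls it out of the sum.

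The engine behind the fixed-$J$ claim is an elementary fact: if $\bs$ is uniform over the odd-weight strings in $\{0,1\}^N$ with $N\ge 2$, then $\bs|_T$ is uniform on $\{0,1\}^T$ for every proper subset $T\subsetneq[N]$ (fix $j\notin T$; for every assignment to $[N]\setminus(T\cup\{j\})$ the parity constraint pins down $s_j$, so exactly $2^{N-|T|-1}$ odd-weight strings restrict to any prescribed value on $T$). Now fix $J$ with $E(J)$ true and split the queried points into those with $x^{(i)}|_{[r]}\ne 0^r$ and those with $x^{(i)}|_{[r]}=0^r$, grouping the latter into \emph{columns} by their restriction to $[n]\setminus J$; since $[r]\subseteq[n]\setminus J$, each column is a set of $2^\ell$ points all having restriction $0^r$ on $[r]$. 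On the first group, both $\calD_{NO}$ and ``$\calD_{YES}$ given $\bJ=J$'' assign i.i.d.\ $\bern(p)$ bits, independently of everything else. On a single column, $\calD_{NO}$ assigns i.i.d.\ uniform bits, whereas ``$\calD_{YES}$ given $\bJ=J$'' makes the $2^\ell$ values jointly uniform over odd-weight strings (the $2^\ell-1$ points with $x|_J\ne 1^\ell$ get i.i.d.\ uniform bits and the remaining value is forced by the parity constraint of step (d)), with distinct columns mutually independent. The key point is that $E(J)$ forces the queried points inside any column to form a \emph{proper} subset of it: a fully queried column, with restriction $z$ on $[n]\setminus J$, would contain both $z\sqcup 0^\ell$ and $z\sqcup 1^\ell=(z\sqcup 0^\ell)^{\oplus J}$, contradicting $E(J)$. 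Hence by the elementary fact the queried bits in each column are uniform and mutually independent, and combining this with the first group shows that ``$\calD_{YES}$ given $\bJ=J$'' induces on $(\bff(x^{(i)}))_i$ the same product of $\bern(p)$ bits and uniform bits as $\calD_{NO}$.

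I do not expect a genuine obstacle: everything reduces to the two observations above once one checks the bookkeeping that ``no antipodal pair among the $x^{(i)}$'' is exactly the hypothesis needed for proper-subset-within-each-column, and that coincidences among the $x^{(i)}$ or mutually incompatible target values $y_i$ cause no trouble, since the relevant probability then vanishes simultaneously on both sides.
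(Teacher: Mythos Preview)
Your proposal is correct and follows essentially the same approach as the paper: both condition on a fixed $J$ satisfying the no-antipodal-pair hypothesis, split the queries according to whether $x^{(i)}|_{[r]}=0^r$, and use that the $2^\ell$ values in each column are $(2^\ell-1)$-wise independent (your ``elementary fact'' about marginals of the uniform odd-parity distribution) to conclude that the queried bits are a product of $\bern(p)$'s and fair coins under both $\calD_{YES}$ and $\calD_{NO}$. Your write-up is in fact slightly more explicit than the paper's in verifying that $E(J)$ forces a proper subset in each column and in handling the edge cases of repeated or inconsistent queries.
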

	
\begin{proof}
Consider a $J \subseteq \{r+1, ..., n\}$ such that $x^{(i)} \not = (x^{(j)})^{\oplus J}$ for all $i,j\in [m]$. We will show that
		\[\Pr_{\bff, \bJ \sim \calD_{YES}} \left[\forall i\hspace{2mm}  \bff(x^{(i)}) = y_i \land \bJ = J \right] = \Pr_{\bff \sim \calD_{NO}} \left[\forall i \hspace{2mm} \bff(x^{(i)}) = y_i \right] \cdot  \Pr_{\bff, \bJ \sim \calD_{YES}}\big[\bJ = J\big]. \]
	Indeed, without loss of generality let $x^{(1)}, ..., x^{(a)}$ be such that $x^{(i)}|_{[r]} \not = 0^r$ and $x^{(a+1)} ... x^{(m)}$ be such that $x^{(i)}|_{[r]} = 0^r$. Now note that for any $\rho \in \{0,1\}^{[n] \setminus \bJ}$, the following $2^\ell$ bits
		\[\{\bff(\rho \sqcup z): z \in \{0,1\}^\ell \}\]
	are $(2^{\ell}-1)$-wise independent. So it follows that the events $\bff(x^{(i)}) = y_i$ are all independent and 
		\[\Pr_{\bff, \bJ \sim \calD_{YES}} \left[\forall i\hspace{2mm}  \bff(x^{(i)}) = y_i \mid \bJ = J \right] = \left( \prod_{i=1}^a p^{y_i} (1 - p)^{1 - y_i} \right) \cdot \frac{1}{2^{m-a}} = \Pr_{\bff \sim \calD_{NO}} \left[\forall i \hspace{2mm} \bff(x^{(i)}) = y_i \right]  \]
This finishes the proof of the lemma.
\end{proof}

It now follows that $\calD_{YES}$ and $\calD_{NO}$ are hard to distinguish.

\begin{lemma}
\label{lem:query-lower-bound}
	Any deterministic algorithm $\ALG$ that distinguishes between $\calD_{YES}$ and $\calD_{NO}$ with probability at least $2/3$, i.e., $\Pr_{\bff \sim \calD_{YES}} [\ALG(\bff) \text{ accepts}] \geq 2/3$ and $\Pr_{\bff \sim \calD_{NO}} [\ALG(\bff) \text{ rejects}] \geq 2/3$, must make at least $\Omega \left( \sqrt{\binom{k-r}{\ell}} \right)$ queries. 
\end{lemma}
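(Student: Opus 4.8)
The plan is to reduce the distinguishing task to finding a "collision" — a pair of queried points $x^{(i)}, x^{(j)}$ with $x^{(i)} = (x^{(j)})^{\oplus \bJ}$ — and then argue that an adaptive algorithm making few queries is unlikely to hit such a pair, since $\bJ$ is a uniformly random $\ell$-subset of $\{r+1,\dots,n\}$ (a set of size $k-r$). By Lemma~\ref{lem:almost-same-distribution}, conditioned on no collision among the queried points, the algorithm's view under $\calD_{YES}$ is distributed exactly as under $\calD_{NO}$, so it cannot distinguish; hence to succeed with probability $2/3$ it must, with constant probability over $\calD_{YES}$, query a colliding pair.

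First I would set up the standard decision-tree formalism: since $\ALG$ is deterministic, its behaviour on $\calD_{NO}$ (and on $\calD_{YES}$ conditioned on no collision) is a fixed walk down a decision tree of depth $q$ (the number of queries), and this walk is the same in both worlds until a collision occurs. Formally I would run $\ALG$ against $\calD_{YES}$ and track the first time step at which the current set of queried points contains a pair $x^{(i)}, x^{(j)}$ with $(x^{(i)})_{-[r] \cap [n]}$ agreeing with $(x^{(j)})$ outside $\bJ$ and disagreeing exactly on $\bJ$; call this the \emph{bad event}. Using Lemma~\ref{lem:almost-same-distribution} and a hybrid/coupling argument, if the bad event has probability less than $1/3$ under $\calD_{YES}$ then $\ALG$ accepts $\calD_{YES}$ and $\calD_{NO}$ with nearly the same probability, contradicting the $2/3$ vs.\ $2/3$ guarantee. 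So it remains to upper bound $\Pr_{\calD_{YES}}[\text{bad event}]$ in terms of $q$.

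To bound the bad event, I would reveal $\bJ$ only at the very end and run the first $q$ queries of $\ALG$ against $\calD_{NO}$ (which is identical to $\calD_{YES}$ pre-collision), obtaining a fixed list of at most $q$ points $x^{(1)},\dots,x^{(q)}$. Each unordered pair $\{x^{(i)},x^{(j)}\}$ is "dangerous" only if $x^{(i)}$ and $x^{(j)}$ agree on $[r]$ and differ in a set $T \subseteq \{r+1,\dots,n\}$ of size exactly $\ell$, and then the pair is a collision iff $\bJ = T$. Since $\bJ$ is uniform over $\binom{k-r}{\ell}$ subsets, each dangerous pair is a collision with probability at most $1/\binom{k-r}{\ell}$, and a union bound over at most $\binom{q}{2}$ pairs gives $\Pr[\text{bad}] \le \binom{q}{2}/\binom{k-r}{\ell}$. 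Setting this below $1/3$ forces $q = \Omega\big(\sqrt{\binom{k-r}{\ell}}\big)$, which is the claimed bound.

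The main subtlety — and the step I'd be most careful about — is the adaptivity: the list $x^{(1)},\dots,x^{(q)}$ depends on the answers, which under $\calD_{YES}$ depend on $\bff$ and hence potentially on $\bJ$. The resolution is exactly the content of Lemma~\ref{lem:almost-same-distribution}: \emph{up until the first collision}, the joint distribution of (queried points, answers) is independent of $\bJ$, so we may think of the whole interaction as taking place against $\calD_{NO}$ with $\bJ$ not yet sampled, and only then ask whether the realized point-set contains a pair equal to $T$ for the about-to-be-sampled $\bJ = T$. A clean way to phrase this is a principle-of-deferred-decisions / martingale argument: process the queries one at a time, and at step $t$ the conditional probability that the $t$-th queried point forms a dangerous pair with some earlier point \emph{and} that this pair equals $\bJ$ is at most $(t-1)/\binom{k-r}{\ell}$; summing over $t \le q$ yields the same $\binom{q}{2}/\binom{k-r}{\ell}$ bound. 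Once this independence-until-collision point is handled correctly, the rest is a routine union bound and the $2/3$-vs-$2/3$ contradiction.
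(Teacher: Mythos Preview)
Your proposal is correct and takes essentially the same approach as the paper: invoke Lemma~\ref{lem:almost-same-distribution} to show that, absent a ``collision'' pair $x,x^{\oplus \bJ}$ among the queries, the algorithm's view is identical under the two distributions, and then union-bound the collision probability by roughly $q^2/\binom{k-r}{\ell}$ over the uniform choice of $\bJ$. The paper handles the adaptivity subtlety you flag by conditioning on each fixed decision-tree path $p$ (where the query set is deterministic, so the no-collision event depends only on $\bJ$) rather than via a deferred-decisions argument, but the two phrasings express the same idea.
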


\begin{proof}
Towards a contradiction, suppose $\ALG$ distinguishes the distributions and makes at most $Q = \frac{1}{10} \sqrt{\binom{k -r}{\ell}}$ queries. Since $\ALG$ is deterministic, it corresponds to a decision tree. But now observe that for a particular path $p$ of the decision tree, we have that
			\[\Pr_{\bff \sim \calD_{NO}}[\ALG(\bff) \text{ follows } p] = \Pr_{\bff, \bJ \sim \calD_{YES}}[\ALG(\bff) \text{ follows } p |  p \text{ doesn't query a pair $x, x^{\oplus \bJ}$} ]. \]
		by Lemma \ref{lem:almost-same-distribution}. We then conclude that
		\[\Pr_{\bff \sim \calD_{NO}}[\ALG(\bff) \text{ follows } p] \leq \frac{\Pr_{\bff \sim \calD_{YES}}[\ALG(\bff) \text{ follows } p]}{\Pr_{\bff, \bJ \sim \calD_{YES}} [p \text{ doesn't query a pair $x, x^{\oplus \bJ}$}]}. \]
		Now observe		
			\[ \Pr_{\bff, \bJ \sim \calD_{YES}} [p \text{ doesn't query a pair $x, x^{\oplus \bJ}$}] \geq 1 - \frac{Q^2}{\binom{k - r}{\ell}} \geq .99. \]
		Thus
			\[ \Pr_{\bff \sim \calD_{NO}}[\ALG(\bff) \text{ follows } p] \leq 1.02 \cdot \Pr_{\bff \sim \calD_{YES}}[\ALG(\bff) \text{ follows } p] \]
		 Summing over all rejecting paths we conclude that
			\[\Pr_{\bff \sim \calD_{NO}}[\ALG(\bff) \text{ rejects} ] \leq 1.02 \cdot \Pr_{\bff \sim \calD_{YES}}[\ALG(\bff) \text{ rejects}] < 1/2, \]
		a contradiction. So any tester must make $\Omega \left(\sqrt{\binom{k - r}{\ell}} \right)$ queries as claimed.	
\end{proof}

Next, we need to understand exactly how far functions in $\calD_{NO}$ are from being $k$-juntas. To do so, we'll need the following helper lemma

\begin{lemma}
\label{lem:unbalanced-distance-to-constant}
	Let $\bx_1, ..., \bx_n$ be independent boolean valued random variables such that $\bx_i \sim \bern(p_i)$ for constants $p_i \in [0,1/2]$. If $\frac{1}{2} - \frac{1}{n} \sum_{i = 1}^n p_i \geq \delta$ then 
		\[\left|\E_{\bx_1, ..., \bx_n} \left[ \dist(\bx_1\bx_2...\bx_n, \{0^n, 1^n\}) \right] - \sum_{i = 1}^n p_i \right| \leq n e^{- n\delta^2/3} \]
\end{lemma}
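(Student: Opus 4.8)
\textbf{Proof proposal for Lemma \ref{lem:unbalanced-distance-to-constant}.}

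The plan is to analyze the distance $\dist(\bx, \{0^n, 1^n\})$ by splitting on which of the two constants is closer. Write $W = \sum_{i=1}^n \bx_i$ for the Hamming weight of $\bx$. Then $\dist(\bx, 0^n) = W$ and $\dist(\bx, 1^n) = n - W$, so $\dist(\bx, \{0^n,1^n\}) = \min(W, n-W)$. The target quantity $\sum_i p_i$ is exactly $\E[W]$, so we must show $\big|\E[\min(W, n-W)] - \E[W]\big| \le n e^{-n\delta^2/3}$. Since $\min(W, n-W) \le W$ always, one direction is immediate; the content is the lower bound $\E[\min(W,n-W)] \ge \E[W] - n e^{-n\delta^2/3}$, i.e.\ that $W$ and $\min(W,n-W)$ rarely differ. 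They differ precisely when $W > n - W$, i.e.\ when $W > n/2$, and in that case the difference is $W - (n-W) = 2W - n \le n$.

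First I would bound $\Pr[W > n/2]$. By hypothesis $\E[W] = \sum_i p_i \le n(1/2 - \delta)$, so the event $\{W > n/2\}$ is a deviation of $W$ above its mean by at least $n\delta$. Applying a Chernoff/Hoeffding bound for sums of independent (non-identically distributed) Bernoulli variables gives $\Pr[W > n/2] \le \Pr[W - \E[W] \ge n\delta] \le e^{-n\delta^2/3}$ (the constant $3$ in the exponent is exactly what the standard additive Chernoff bound $\Pr[W \ge \E W + t] \le \exp(-t^2/(2\E W + t))$ yields here, using $\E W \le n/2$ and $t = n\delta$, since $2\E W + t \le n + n\delta \le 2n$ gives exponent $\le n\delta^2/2 \cdot$; a slightly more careful application, or Hoeffding's $\exp(-2t^2/n)$, comfortably beats $e^{-n\delta^2/3}$). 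Then
\[
\E[W] - \E[\min(W, n-W)] = \E\big[(2W - n)\cdot \mathbb{I}_{W > n/2}\big] \le n \cdot \Pr[W > n/2] \le n e^{-n\delta^2/3},
\]
and combined with $\min(W,n-W) \le W$ this gives the claimed two-sided bound.

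I do not anticipate a serious obstacle here; the only mild care needed is in matching the precise exponent constant $1/3$ to a named Chernoff bound (one should state which form of the bound is used, e.g.\ the multiplicative bound $\Pr[W \ge (1+\eta)\mu] \le e^{-\eta^2 \mu/3}$ applied with $\mu = \E[W]$ and $\eta$ chosen so $(1+\eta)\mu = n/2$, noting $\eta\mu = n/2 - \mu \ge n\delta$ and $\eta \le 1$ so $\eta^2\mu \ge (\eta\mu)^2/\mu \ge (n\delta)^2/(n/2) = 2n\delta^2 \ge n\delta^2$, which again beats the stated bound). Since the lemma only needs an inequality and the subsequent application (bounding how far $\calD_{NO}$-functions are from juntas) presumably has slack, I would just invoke whichever standard concentration inequality most cleanly produces $n e^{-n\delta^2/3}$ and not optimize further.
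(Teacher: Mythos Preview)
Your proposal is correct and follows essentially the same approach as the paper: write the distance as $\min(W,n-W)$ with $W=\sum_i \bx_i$, observe that $\E[W]=\sum_i p_i$, and bound the gap $\E[W]-\E[\min(W,n-W)]=\E[(2W-n)\mathbb{I}_{W>n/2}]\le n\Pr[W>n/2]$ via a Hoeffding bound. The paper phrases this as bounding $|\E[\dist(\bx,\{0^n,1^n\})]-\E[\dist(\bx,0^n)]|$ but the computation is identical; your more explicit discussion of which concentration inequality yields the constant $1/3$ is fine (indeed Hoeffding's $\exp(-2t^2/n)$ already gives $e^{-2n\delta^2}\le e^{-n\delta^2/3}$).
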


\begin{proof}
We start by proving that it is unlikely that $\bx_1, ..., \bx_n$ is closer $1^n$ than $0^n$. Indeed, the probability of this occurring is
	\[\Pr_{\bx_1, ..., \bx_n} \left[\sum_{i=1}^n \bx_i > n/2 \right] \leq e^{- n\delta^2/3}\]
by a Hoeffding bound. So we conclude that
	\[\bigg| \E_{\bx_1, ..., \bx_n} \left[ \dist(\bx_1...\bx_n, \{0^n, 1^n\}) \right] - \E_{\bx_1, ..., \bx_n} \left[ \dist(\bx_1...\bx_n, 0^n) \right] \bigg| \leq n e^{- n \delta^2/3}.\]
The result now follows as 
	\[\E[\dist(\bx_1, ..., \bx_n, 0^n)] = \sum_{i=1}^n p_i.\]
This finishes the proof of the lemma.
\end{proof}

Let $\Delta_{t}$ be defined as follows:
	\[\Delta_{t} := \frac{1}{t}\cdot \E_{\bx_1 ... \bx_{t} \sim \bern(\frac{1}{2})} \big[\dist(\bx_1, ..., \bx_{t}, \{0^{t}, 1^{t}\})\big]. \]

\begin{lemma}
\label{lem:no-distribution-distance}
Suppose that $\ell \leq k$, then 	
	\[\Pr_{\bff \sim \calD_{NO}} \left[ \dist(\bff, \calJ_{k}) \leq  p (1-2^{-r}) + \Delta_{2^\ell} \cdot 2^{-r} - e^{-2^\ell \cdot (1/2 - p)^2/12} - 2^{-k/3} \right] = o_k(1)\]
where $\calJ_{k}$ denotes the class of $k$-juntas.
\end{lemma}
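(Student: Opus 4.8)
The plan is to show that with high probability over $\bff \sim \calD_{NO}$, the distance from $\bff$ to \emph{every} $k$-junta is at least the stated quantity, by first reducing to a single worst-case junta structure and then applying concentration. The first step is to fix a set $K \subseteq [n]$ with $|K| = k$ and understand $\dist(\bff, \calJ_K)$, the distance to the class of $k$-juntas depending only on coordinates in $K$. For a fixed $K$, the optimal $k$-junta $g$ is obtained by setting $g(z)$, for each $z \in \{0,1\}^K$, to be the majority value of $\bff$ over the fiber $\{x : x|_K = z\}$; hence $\dist(\bff, \calJ_K) = 2^{-n}\sum_{z \in \{0,1\}^K} \min\big(|\{x|_K = z : \bff(x) = 0\}|, |\{x|_K = z : \bff(x) = 1\}|\big)$. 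Since $r < k$, we can assume $[r] \subseteq K$ by monotonicity of the distance in $K$ — actually more carefully, the worst case over $K$ will want to \emph{include} the biased coordinates $[r]$, and we should argue the distance is minimized (making it easiest for $\bff$ to look like a junta) when $K \supseteq [r]$; we take $K = [r] \cup S$ for some $S$ of size $k - r$ from $\{r+1, \dots, n\}$. Each fiber of $K$ then has $2^{\ell} = 2^{n - k}$ points, and we split into two types: fibers with $z|_{[r]} \ne 0^r$, on which $\bff$ is i.i.d.\ $\bern(p)$, and fibers with $z|_{[r]} = 0^r$ (a $2^{-r}$ fraction of all fibers), on which $\bff$ is i.i.d.\ uniform.

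For the biased fibers, Lemma~\ref{lem:unbalanced-distance-to-constant} applied with $n$ replaced by $2^\ell$ and $\delta = 1/2 - p$ gives that the expected min-count per fiber is $p \cdot 2^\ell \pm 2^\ell e^{-2^\ell(1/2-p)^2/3}$; averaging the distance contribution over all such fibers gives roughly $p(1 - 2^{-r})$. For the uniform fibers, by definition of $\Delta_{2^\ell}$ the expected min-count per fiber is exactly $\Delta_{2^\ell} \cdot 2^\ell$, contributing $\Delta_{2^\ell} \cdot 2^{-r}$ to the distance. So $\E_{\bff}[\dist(\bff, \calJ_K)]$ is within $e^{-2^\ell(1/2-p)^2/3}$ or so of $p(1 - 2^{-r}) + \Delta_{2^\ell} \cdot 2^{-r}$ (absorbing the polynomial-in-$2^\ell$ prefactor into the exponent with a worse constant, which is why the lemma states $/12$ rather than $/3$). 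The next step is concentration: $\dist(\bff, \calJ_K)$ is a function of the $2^n$ independent bits $\bff(x)$, and changing one bit changes the distance by at most $2^{-n}$, so McDiarmid's/Azuma's bounded-difference inequality gives $\Pr[\dist(\bff,\calJ_K) < \E[\dist(\bff,\calJ_K)] - t] \le \exp(-2 t^2 2^n)$. Taking $t = 2^{-k/3}$ makes this $\exp(-\Omega(2^{n - 2k/3})) = \exp(-\Omega(2^{\ell - 2k/3}))$, which is $o_k(1)$ provided $\ell$ is not too much smaller than $2k/3$ — and indeed in the final application $\ell$ is chosen comparably to $k$; if $\ell$ is smaller one instead uses the $2^{-k/3}$ slack more generously or notes $2^n = 2^{k+\ell}$ already beats $2^{2k/3}$. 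Actually the cleanest route: $\exp(-2 \cdot 2^{-2k/3} \cdot 2^n) = \exp(-2 \cdot 2^{k/3 + \ell})$ which is trivially $o_k(1)$.

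Finally, we union bound over all choices of $K$. The number of size-$(k-r)$ subsets $S \subseteq \{r+1,\dots,n\}$ is at most $\binom{n}{k} \le 2^n = 2^{k+\ell}$, and we showed that for each fixed $K$ the bad event has probability at most $\exp(-\Omega(2^{k/3+\ell}))$, so the union bound still gives $o_k(1)$; we should also handle $K \not\supseteq [r]$, but since dropping a biased coordinate from $K$ only increases the distance (replacing a $\bern(p)$-dominated fiber split by a coarser partition can only increase min-counts — one should check this monotonicity, or more simply observe that the minimum over \emph{all} $K$ of the distance is what the tester could exploit, and that minimum is attained at some $K \supseteq [r]$ up to the negligible concentration error). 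The main obstacle I anticipate is the bookkeeping in this last step: carefully arguing that the worst-case $K$ contains $[r]$ (so that the per-fiber analysis is clean), and verifying that the concentration tail, after the union bound over exponentially many $K$'s, is still $o_k(1)$ — this needs $\ell$ and $r$ to be in the right regime, which is exactly what the hypothesis $\ell \le k$ (together with the eventual setting of parameters) secures. The per-fiber expectation computations via Lemmas~\ref{lem:unbalanced-distance-to-constant} and the definition of $\Delta_{2^\ell}$ are routine once the reduction to a fixed $K$ with $[r] \subseteq K$ is in place.
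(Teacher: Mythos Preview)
Your overall architecture matches the paper's: compute $\E[\dist(\bff,\calJ_K)]$ for each fixed $K$, concentrate, then union bound over all $\binom{k+\ell}{k}$ choices of $K$. Your concentration via McDiarmid on the $2^n$ bits is equivalent to the paper's Hoeffding on the $2^k$ independent per-fiber variables $\bz_\rho \in [0,1]$; both give tail $\exp(-\Omega(2^{k/3}))$, which beats the union bound.

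The obstacle you flag at the end is real, and your proposed resolution does not work. Your monotonicity sketch (``dropping a biased coordinate from $K$ only increases the distance'') conflates shrinking $K$ with \emph{swapping} a coordinate of $K$ inside $[r]$ for one outside; the former is trivially true but irrelevant (both sets have size exactly $k$), and the latter has no soft justification. The paper does not argue monotonicity at all. Instead it directly lower-bounds $\E[\dist(\bff,\calJ_K)]$ for \emph{every} $K$ by setting $I = K\cap[r]$ and observing that when $|I|<r$, a fiber with $\rho|_I = 0^{|I|}$ contains a \emph{mixture} of $\bern(p)$ and $\bern(1/2)$ values, with overall mean $p(1-2^{|I|-r}) + \tfrac12\cdot 2^{|I|-r}$. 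Since $r-|I|\ge 1$, this mean is still at most $\tfrac12 - \tfrac12(\tfrac12-p)$, so Lemma~\ref{lem:unbalanced-distance-to-constant} applies with $\delta = (1/2-p)/2$. That halved gap is precisely where the $12$ in the exponent comes from --- not from absorbing a $2^\ell$ prefactor as you guessed (after normalizing by $2^\ell$ there is no prefactor left). The computation then gives $\E[\dist(\bff,\calJ_K)] \ge p(1-2^{-r}) + 2^{-(r+1)} - e^{-2^\ell(1/2-p)^2/12}$ when $|I|<r$, which dominates the $|I|=r$ bound because $\Delta_{2^\ell} < 1/2$. So your intuition that $K\supseteq[r]$ is the worst case is correct, but establishing it requires exactly this mixed-fiber computation rather than a monotonicity shortcut.
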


\begin{proof}
Fix a set $S \subseteq [n]$ of size $k$. Let $\calJ_S$ denote the set of $k$-juntas on $S$. Now take $I = S \cap [r]$ and note that
	\[\dist(\bff, \calJ_S) = \frac{1}{2^k} \sum_{\rho \in \{0,1\}^S} \sum_{y \in \{0,1\}^{[n] \setminus S}} \frac{1}{2^\ell} \left|\bff(\rho \sqcup y) - \maj(\{\bff(\rho \sqcup y): y \in \{0,1\}^{[n] \setminus S} \}) \right| \]
We will lower bound $\E_{\bff}[\dist(\bff, \calJ_S)]$. Consider some fixed $\rho \in \{0,1\}^S$. If $\rho|_I \not = 0$ then
	\[ \frac{1}{2^\ell} \cdot \E_{\bff} \left[ \sum_{y \in \{0,1\}^{[n] \setminus S}} \bff(\rho \sqcup y) \right] = p. \]
So by Lemma \ref{lem:unbalanced-distance-to-constant} it follows that
	\[\E_{\bff} \left[ \frac{1}{2^\ell} \sum_{y \in \{0,1\}^{[n] \setminus S}} \left|\bff(\rho \sqcup y) - \maj(\{\bff(\rho \sqcup y): y \in \{0,1\}^{[n] \setminus S} \}) \right| \right] \geq p - e^{-2^\ell \cdot (1/2-p)^2/3}. \]
So now suppose that $\rho_I = 0^{|I|}$. We take cases on $|I|$. First suppose that $|I| < r$.  Then
	\[ \frac{1}{2^\ell} \cdot \E_{\bff} \left[ \sum_{y \in \{0,1\}^{[n] \setminus S}} \bff(\rho \sqcup y) \right] = p(1-2^{|I|-r}) + \frac{1}{2} \cdot 2^{|I|-r} \]
as the $2^{|I|-r}$ fraction of $x$ values with $x|_{[r]} = 0^r$ are distributed according to $\bern(\frac{1}{2})$ and the rest are distributed according to $\bern(p)$. Using Lemma \ref{lem:unbalanced-distance-to-constant} along with the fact that $r - |I| \geq 1$, we get
		\[\E_{\bff} \left[ \frac{1}{2^\ell} \sum_{y \in \{0,1\}^{[n] \setminus S}} \left|\bff(\rho \sqcup y) - \maj(\{\bff(\rho \sqcup y): y \in \{0,1\}^{[n] \setminus S} \}) \right| \right] \geq p(1-2^{|I|-r}) + \frac{1}{2} \cdot 2^{|I|-r} - e^{-2^\ell \cdot (1/2 - p)^2/12}. \]

All together we see that in this case
	\begin{align*}
		\E_{\bff} [\dist(\bff, \calJ_S)] &\geq (1 - 2^{-|I|}) p + 2^{-|I|} \left (p(1-2^{|I|-r}) + \frac{1}{2} \cdot 2^{|I|-r} \right) - e^{-2^\ell \cdot (1/2 - p))^2/12} \\
		&= p(1-2^{-r}) + \frac{1}{2^{r+1}} - e^{-2^\ell \cdot (1/2 - p))^2/12}
	\end{align*}

Now suppose that $|I| = r$. It then follows that all the entries are $\bern(\frac{1}{2})$ and 
	\[\frac{1}{2^\ell} \cdot \E_{\bff} \left[ \sum_{y \in \{0,1\}^{[n] \setminus S}} \left|\bff(\rho \sqcup y) - \maj(\{\bff(\rho \sqcup y): y \in \{0,1\}^{[n] \setminus S} \}) \right| \right] = \Delta_{2^\ell}\]
So we conclude that in this case
	\[\E_{\bff} \left[ \dist(\bff, \calJ_S) \right] \geq p (1-2^{-r}) + \Delta_{2^\ell} \cdot 2^{-r} - e^{-2^\ell \cdot (1/2 - p)^2/12} \]

Since $\Delta_{2^\ell} < \frac{1}{2}$, it follows that for any set $S \subseteq [n]$	
	\[\E_{\bff} \left[ \dist(\bff, \calJ_S) \right] \geq p (1-2^{-r}) + \Delta_{2^\ell} \cdot 2^{-r} - e^{-2^\ell \cdot (1/2 - p)^2/12}.\]

We now note that $\dist(\bff, \calJ_S)$ is the average of independent random variables in $[0,1]$. Namely, it is the average the random variables
	\[\bz_\rho := \sum_{y \in \{0,1\}^{[n] \setminus S}} \frac{1}{2^\ell} \left|\bff(\rho \sqcup y) - \maj(\{\bff(\rho \sqcup y): y \in \{0,1\}^{[n] \setminus S} \}) \right|\]
Thus, by Hoeffding's inequality we have that
	\[\Pr_{\bff \sim \calD_{NO}} \left [\dist(\bff, \calJ_S) \leq \E[\dist(\bff, \calJ_S)] - 2^{-k} t \right ] \leq e^{-2t^2/2^k} \]
Taking $t$ to be $2^{2k/3}$, it follows that
	\[\Pr_{\bff \sim \calD_{NO}} \left [\dist(\bff, \calJ_S) \leq \E[\dist(\bff, \calJ_S)] - 2^{-k/3} \right ] \leq e^{-2 \cdot 2^{k/3}} \]
We can now take a union bound over all subsets to conclude that
	\[\Pr_{\bff \sim \calD_{NO}} \left [\dist(\bff, \calJ_k) \leq p (1-2^{-r}) + \Delta_{2^\ell} \cdot 2^{-r} - e^{-2^\ell \cdot (1/2 - p)^2/12} -2^{-k/3}\right ]  \leq e^{-2 \cdot 2^{k/3}} (k+\ell)^k = o_k(1)\]
\end{proof}

We will also need an analogous result for functions drawn from $\calD_{YES}$.

\begin{lemma}
\label{lem:d-yes-distance} 	
Suppose that $\ell \geq 2$, then
	\[\Pr_{\bff \sim \calD_{YES}} \left[ \dist(\bff, \calJ_{k}) \geq p (1-2^{-r}) + (\Delta_{2^\ell} - 2^{-2 \ell})) \cdot 2^{-r} + e^{-2^\ell \cdot (1/2 - p)^2/12} + 2^{-k/3} \right] = o_k(1)\]
where $\calJ_{k}$ denotes the class of $k$-juntas.
\end{lemma}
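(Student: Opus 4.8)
The plan is to follow the template of the proof of Lemma~\ref{lem:no-distribution-distance}, except that now we may use one fixed $k$-junta instead of union bounding over all of them, and we invoke Lemma~\ref{lem:k-wise-distance} to extract the extra $2^{-2\ell}\cdot 2^{-r}$ saving.

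First I would fix the outcome $\bJ=J$ of the random set in the definition of $\calD_{YES}$ and set $S:=[n]\setminus J$, a set of size exactly $k=n-\ell$ that contains $[r]$. Since $\calJ_S\subseteq \calJ_k$, we have $\dist(\bff,\calJ_k)\le \dist(\bff,\calJ_S)$ pointwise, so it suffices to upper bound $\dist(\bff,\calJ_S)$; unlike the $\calD_{NO}$ case there is no union bound over $S$. Write $\dist(\bff,\calJ_S)=\frac{1}{2^k}\sum_{\rho\in\{0,1\}^S}\bz_\rho$, where $\bz_\rho:=\frac{1}{2^\ell}\dist(v_\rho,\{0^{2^\ell},1^{2^\ell}\})$ and $v_\rho:=(\bff(\rho\sqcup y))_{y\in\{0,1\}^J}\in\{0,1\}^{2^\ell}$ is the slice above $\rho$ (so $2^\ell\bz_\rho$ counts the entries of $v_\rho$ disagreeing with the slice's majority value). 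Note $[r]\subseteq S$, so the set ``$I$'' of the $\calD_{NO}$ proof is all of $[r]$ here and only its ``$|I|=r$'' case arises.

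Next I would bound $\E[\bz_\rho\mid\bJ=J]$ in two cases. If $\rho|_{[r]}\ne 0^r$, every entry of $v_\rho$ is an independent $\bern(p)$ bit, so Lemma~\ref{lem:unbalanced-distance-to-constant} with $n\leftarrow 2^\ell$ and $\delta\leftarrow\tfrac{1}{2}-p$ gives $\E[\bz_\rho]\le p+e^{-2^\ell(1/2-p)^2/3}\le p+e^{-2^\ell(1/2-p)^2/12}$. If $\rho|_{[r]}=0^r$, then the $2^\ell-1$ entries of $v_\rho$ with $y\ne 1^\ell$ are i.i.d.\ uniform and the entry with $y=1^\ell$ is set so that $\bigoplus_i(v_\rho)_i=1$, so $v_\rho$ is uniform over $\{x\in\{0,1\}^{2^\ell}:\bigoplus_i x_i=1\}$. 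Since $\ell\ge 2$, the length $2^\ell$ is divisible by $4$, so Lemma~\ref{lem:k-wise-distance} applies with $n\leftarrow 2^\ell$ and, using $\E_{\bx\sim\calD}[\dist(\bx,\{0^{2^\ell},1^{2^\ell}\})]=2^\ell\Delta_{2^\ell}$ (by definition of $\Delta_{2^\ell}$), yields $\E[\bz_\rho]=\frac{1}{2^\ell}\E_{\bx\sim\calD'}[\dist(\bx,\{0^{2^\ell},1^{2^\ell}\})]\le\Delta_{2^\ell}-2^{-2\ell}$. A $2^{-r}$ fraction of $\rho\in\{0,1\}^S$ have $\rho|_{[r]}=0^r$, so averaging,
\[\E[\dist(\bff,\calJ_S)\mid\bJ=J]\le p(1-2^{-r})+(\Delta_{2^\ell}-2^{-2\ell})\,2^{-r}+e^{-2^\ell(1/2-p)^2/12}=:B,\]
and this bound on the conditional expectation is the same for every $J$.

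Finally I would add concentration. Conditioned on $\bJ=J$, the variables $\{\bz_\rho\}_{\rho\in\{0,1\}^S}$ are mutually independent --- $\bz_\rho$ is a function of $v_\rho$ only, the parity constraint couples entries within a single slice but never across slices, and all the $\bern(p)$/uniform draws are independent --- each lies in $[0,1]$, and $\dist(\bff,\calJ_S)$ is their average over $2^k$ indices. Hoeffding's inequality then gives $\Pr[\dist(\bff,\calJ_S)\ge B+2^{-k/3}\mid\bJ=J]\le e^{-2\cdot 2^{k/3}}$; since this holds for every $J$ and $\dist(\bff,\calJ_k)\le\dist(\bff,\calJ_{[n]\setminus\bJ})$ pointwise, we conclude $\Pr_{\bff\sim\calD_{YES}}[\dist(\bff,\calJ_k)\ge B+2^{-k/3}]\le e^{-2\cdot 2^{k/3}}=o_k(1)$, which is the claim. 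The only step with real content is the $\rho|_{[r]}=0^r$ case, where the entire gain comes from Lemma~\ref{lem:k-wise-distance}; everything else is bookkeeping --- recognizing $S=[n]\setminus\bJ$ as the relevant junta, noting $[r]\subseteq S$ so that only the ``$|I|=r$'' behaviour occurs, checking the $\ell\ge 2$ divisibility hypothesis of Lemma~\ref{lem:k-wise-distance}, and confirming that conditioning on $\bJ$ keeps the slices independent.
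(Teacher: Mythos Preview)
Your proposal is correct and follows essentially the same approach as the paper: bound $\dist(\bff,\calJ_{[n]\setminus\bJ})$ by splitting into the cases $\rho|_{[r]}\ne 0^r$ (handled via Lemma~\ref{lem:unbalanced-distance-to-constant}) and $\rho|_{[r]}=0^r$ (handled via Lemma~\ref{lem:k-wise-distance}), then apply Hoeffding. The paper's proof is a terse three-line sketch; you have simply filled in the details it omits---the explicit conditioning on $\bJ=J$, the check that $\ell\ge 2$ makes $2^\ell$ divisible by $4$, and the verification that the slices $\{\bz_\rho\}$ are independent.
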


\begin{proof}
	The proof follows very similarly to Lemma \ref{lem:no-distribution-distance}.	
	\begin{align*}
		&\hspace{-0.4cm}\E_{\bff, \bJ \sim \calD_{YES}}[\dist(\bff, \calJ_{[n] \setminus \bJ})]\\ &= \frac{1}{2^k} \E_{\bff, \bJ \sim \calD_{YES}} \left[ \sum_{\rho \in \{0,1\}^{[n] \setminus \bJ}}  \sum_{y \in \{0,1\}^\bJ} \frac{1}{2^\ell} \left|\bff(\rho \sqcup y) - \maj(\{\bff(\rho \sqcup y): y \in \{0,1\}^{\bJ} \}) \right| \right]	\\
		& \leq p (1-2^{-r}) + e^{-2^\ell \cdot (1/2 - p)^2/12} + \left(\Delta_{2^\ell} - 2^{-2\ell}\right)\cdot 2^{-r} 
	\end{align*}
	where $p (1-2^{-r}) + e^{-2^\ell \cdot (1/2 - p)^2/12}$ corresponds to the contribution from terms with $\rho|_{[r]} \not = 0^r$ and $\left(\Delta_{2^\ell} - 2^{-2\ell}\right)\cdot 2^{-r}$ bounds the expectation for terms with $\rho|_{[r]} = 0^r$ by Lemma \ref{lem:k-wise-distance}. Applying a Hoeffding bound now gives us the desired result.
\end{proof}

We can now put everything together and prove the lower bound.

\begin{proof}[Proof of Theorem \ref{thm:tolerant-lower-bound}]
	It only remains to set the parameters so any $(k, \eps_1, \eps_2)$ tolerant tester is forced to distinguish the two distributions. We let $\ell = \lfloor \frac{-1}{10} \log(1-\eps_1 \eps_2^{-1}) \rfloor$. We then take $r$ to be the smallest integer such that
		\[\Delta_{2^\ell} 2^{-r} \leq \eps_2 + 2^{-k/3} + e^{-2^\ell \cdot (0.009)^2/12} \]
	and set $p$ to satisfy
		\[ \eps_2 + 2^{-k/3} + e^{-2^\ell \cdot (0.009)^2/12} = p (1-2^{-r}) + \Delta_{2^\ell} \cdot 2^{-r} .\]
	We assume that $\ell$ and $k$ are sufficiently large such that $2^{-k/3} \leq .005$, $e^{-2^\ell \cdot (0.009)^2/12} \leq .005$, and $\Delta_{2^\ell} \geq .491$. Note that these together imply that $r \geq 0$ and $0 \leq p \leq .491$. 	Moreover, note that by the minimality of $r$, we have that
		\[\Delta_{2^\ell} 2^{-r + 1} \geq \eps_2.\] 
	which implies that $2^{-r} \geq \eps_2$ and thus assuming $\eps_2-\eps_1 \geq 2^{-k/6}$, $r\le k/6$ as promised.
	
	 To see that a $(k, \eps_1, \eps_2)$ tolerant tester must distinguish $\calD_{YES}$ and $\calD_{NO}$, first observe that by Lemma \ref{lem:no-distribution-distance} and our choice of parameters, we have that functions in $\calD_{NO}$ are at least $\eps_2$-far from the set of $k$-juntas with high probability. On the other hand, by Lemma \ref{lem:d-yes-distance}, functions from $\calD_{YES}$ are with high probability 
	 	\[p (1-2^{-r}) + (\Delta_{2^\ell} - 2^{-2 \ell})) \cdot 2^{-r} + e^{-2^\ell \cdot (.009)^2/12} + 2^{-k/3}\]
	 close to some $k$-junta. We'll show this is at most $\eps_1$ under our choice of parameters. To see this first note that
	 	
		\[e^{-2^{\ell} (.009)^2/12} \leq e^{-(1-\eps_1 \eps_2^{-1})^{-1/10} (.009)^2/24} \leq e^{-(1-\eps_1 \eps_2^{-1})^{-1/11}}\]
	where the second inequality assumes $(1 - \eps_1 \eps_2^{-1})^{-1}$ is sufficiently large. Now we assume that $(1-\eps_1 \eps_2^{-1})^{-1} \geq \log^{12}(\frac{1}{\eps_2})$, which we took as a hypothesis in the theorem, and observe
		\[-2\log(\eps_2 - \eps_1) = -2\log(\eps_2) - 2\log(1 - \eps_1 \eps_2^{-1}) \leq 2(1 - \eps_1 \eps_2^{-1})^{-1/12}  + 2(1 - \eps_1 \eps_2^{-1})^{-1/12} \leq (1 - \eps_1 \eps_2^{-1})^{-1/11}\]
	for $(1-\eps_1 \eps_2^{-1})^{-1}$ sufficiently large. Thus, 
		\[e^{-(1-\eps_1 \eps_2^{-1})^{-1/11}} \leq e^{2 \log(\eps_2 - \eps_1)} = (\eps_2 - \eps_1)^2. \]
		
	Plugging these into our expression for the distance of functions in $\calD_{YES}$ to the set of $k$-juntas yields that they are
	\begin{align*}
		\eps_2 + 4 (\eps_2 - \eps_1)^2 - 2^{\frac{1}{5} \log(1-\eps_1 \eps_2^{-1})} \eps_2 & \leq \eps_2 + 4 (\eps_2 - \eps_1)^2 - (1-\eps_1 \eps_2^{-1})^{\frac{1}{5}} \eps_2 \\
		&\leq \eps_2 + 4 (\eps_2 - \eps_1)^2 - 10 (1-\eps_1 \eps_2^{-1}) \eps_2 \leq \eps_1
	\end{align*}

		close to being a $k$-junta with high probability, again assuming $(1-\eps_1\eps_2^{-1})^{-1}$ is sufficiently large.
	
	Thus any tolerant tester must distinguish the two distributions and the result follows from Lemma \ref{lem:query-lower-bound}.
\end{proof}

Finally, we conclude by proving Lemma \ref{lem:k-wise-distance}.

\begin{proof}[Proof of Lemma \ref{lem:k-wise-distance}]
	We'll prove the statement by a coupling argument. Note that we can sample from $\calD$ taking $\bx_1, ..., \bx_n$ to be i.i.d. $\bern(\frac{1}{2})$ random variables. Similarly, we can sample from $\calD'$ by taking $\bx_1, ..., \bx_{n-1}$ to be i.i.d. $\bern(\frac{1}{2})$ and then setting $\bx_n = 1 \oplus \bigoplus_{i=1}^{n-1} x_i$. We now compute
	\begin{align*}
		& \hspace{-0.4cm}\E_{\bx \sim \calD'} \left [ \dist(\bx, \{0^n, 1^n\}) \right]\\ &= \E_{\bx \sim \calD'} \left [ \dist(\bx_1\bx_2...\bx_{n-1}, \{0^{n-1}, 1^{n-1}\}) + \mathbb{I}_{\bx_n \not = \maj(\bx_1, ..., \bx_{n-1})} \right] \\
		& 
	    	= \E_{\bx \sim \calD', \by \sim \bern(\frac{1}{2})} [\dist(\bx_1\bx_2...\bx_{n-1}, \{0^{n-1}, 1^{n-1}\}) + \mathbb{I}_{\bx_n \not = \maj(\bx_1, ..., \bx_{n-1})} \\
	    	& \hspace{2cm}+ \mathbb{I}_{\by \not = \maj(\bx_1, ..., \bx_{n-1})} - \mathbb{I}_{\by \not = \maj(\bx_1, ..., \bx_{n-1})} ]
	    \\
	  	&= \E_{\bx \sim \calD} [\dist(\bx, \{0^n, 1^n\})] + \E_{\bx \sim \calD', \by \sim \bern(\frac{1}{2})} [ \mathbb{I}_{\bx_n \not = \maj(\bx_1, ..., \bx_{n-1})} - \mathbb{I}_{\by \not = \maj(\bx_1, ..., \bx_{n-1})}] \\
	  	&= \E_{\bx \sim \calD} [\dist(\bx, \{0^n, 1^n\})] + \E_{\bx \sim \calD'} [ \mathbb{I}_{\bx_n \not = \maj(\bx_1, ..., \bx_{n-1})}] - \frac{1}{2}
	\end{align*}
	where the first equality uses the fact the fact that $\maj(\bx_1, ..., \bx_n) = \maj(\bx_1, ..., \bx_{n-1})$ since $n$ is a multiple of $4$ and $x$ has odd parity. Now note that 
		\[\Pr_{\bx \sim \calD'} \left[ \bx_n \not = \maj(\bx_1, ..., \bx_{n-1}) \right] = \E_{\bi \sim [n]} \left[ \Pr_{\bx \sim \calD'} [\bx_{\bi} \not = \maj(\bx_{-\bi})] \right] = \E_{\bx \sim \calD'} \left[ \Pr_{\bi \sim [n]} [\bx_{\bi} \not = \maj(\bx_{-\bi})] \right].\]
	Since $4|n$ we have that any string $x$ of odd parity satisfies $\maj(x_{-j}) = \maj(x)$ for all $j$. Moreover, at most $n/2 - 1$ bits of $x$ differ from the majority. Thus,
	\[\E_{\bx \sim \calD'} \left[ \Pr_{\bi \sim [n]} [\bx_{\bi} \not = \maj(\bx_{-\bi})] \right] \leq \left (\frac{n}{2} - 1 \right) \frac{1}{n} = \frac{1}{2} - \frac{1}{n}\]
	as desired.
	
\end{proof}

% !TEX root =  ../TolerantJuntaLowerBound.tex

\section{Lower Bounds for Relaxed Junta Testing}

Recall that for relaxed tolerant junta testing, a $(k, k', \eps_1, \eps_2)$ tester must accept functions that are $\eps_1$-close to some $k$-junta and reject those that are $\eps_2$-far from all $k' > k$ juntas. To prove Theorem \ref{thm:tolerant-gap-lower-bound}, it will suffice to show:

\begin{theorem}
\label{thm:weak-gap-tolerant-lower-bound}
For any $0.01 \leq \eps_1 < \eps_2 \leq 0.49$ and integer $k$ with $\eps_2 - \eps_1 \geq 2^{-O(k)}$ and 
any $(k, k + \lfloor \frac{-\log(\eps_2 - \eps_1)}{20} \rfloor, \eps_1, \eps_2)$-tolerant junta tester must make at least $\left (\frac{k}{-\log(\eps_2 - \eps_1)} \right)^{-\Omega(\log(\eps_2 - \eps_1))}$ queries.
\end{theorem}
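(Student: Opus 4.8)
The plan is to follow the strategy of the proof of Theorem~\ref{thm:tolerant-lower-bound}. By Yao's principle it suffices to exhibit distributions $\calD_{YES}$ and $\calD_{NO}$ over Boolean functions on $n=k+\ell$ variables (for a parameter $\ell$, together with $\ell':=k'-k=\lfloor\frac{-\log(\eps_2-\eps_1)}{20}\rfloor$ and auxiliary parameters $r,p$) such that every $\bff\sim\calD_{YES}$ is $\eps_1$-close to a $k$-junta, every $\bff\sim\calD_{NO}$ is $\eps_2$-far from a \emph{$k'$-junta}, and no deterministic algorithm making $o\big((k/\ell)^{\Omega(\ell)}\big)$ queries distinguishes the two. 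As in Section~2 both distributions will be $\bern(p)$ on the inactive region $\{x:x|_{[r]}\ne 0^r\}$ and will plant a hidden set $\bJ\subseteq\{r+1,\dots,n\}$ with $|\bJ|=\ell$; on the active region $\{x:x|_{[r]}=0^r\}$, $\calD_{YES}$ is engineered so that its restriction to each $\bJ$-subcube is much closer to constant than a uniform block, so that $\bff$ is close to the $k$-junta on $[n]\setminus\bJ$, while $\calD_{NO}$'s restriction to each $\bJ$-subcube is random enough to stay far not just from $k$-juntas but from the larger class $\calJ_{k'}$ --- equivalently, far from constant under \emph{every} further restriction of $\ell'$ coordinates.

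The substantive new point compared with Section~2 is the design of the active region. There, $\calD_{YES}$'s active $\bJ$-subcubes are uniform conditioned on odd parity and $\calD_{NO}$'s are uniform, yielding a distance gap of only $2^{-\Theta(\ell)}$ via Lemma~\ref{lem:k-wise-distance}; that gap is wiped out here, because moving $\calD_{NO}$'s reference class from $\calJ_k$ to $\calJ_{k'}$ decreases its distance by $\Delta_{2^{\ell}}-\Delta_{2^{\ell-\ell'}}=\Theta\big(2^{-(\ell-\ell')/2}\big)$. One must therefore make $\calD_{YES}$'s active blocks \emph{quantitatively} much closer to constant --- of the order of $\Delta_{2^{\ell-\ell'}}$ rather than $\Delta_{2^{\ell}}$ --- while (i) keeping $\calD_{NO}$'s active blocks robustly far from constant under any $\ell'$ extra restrictions, and (ii) keeping the two distributions identical on every small query configuration that does not reveal the hidden data. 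These three desiderata are in tension. The core of the argument is a construction of the active region --- intuitively, planting additional hidden low-dimensional structure inside each $\bJ$-subcube so that its values, though still marginally uniform on any few queries, are as close to constant as a uniform block of size $2^{\ell-\ell'}$ (via a parity constraint like Lemma~\ref{lem:k-wise-distance} but imposed at a coarser scale), whereas $\calD_{NO}$'s active blocks stay balanced after fixing any $\ell'$ of the $\bJ$-coordinates --- together with the verification that the only ways to detect this structure require either a pair of queries whose difference lies in a hard-to-hit set of coordinates, or a query configuration living inside a single $\bJ$-subcube that is itself obtainable only after essentially locating $\bJ$.

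Granting such a construction, the rest of the proof reuses the machinery of Section~2 almost verbatim. The analog of Lemma~\ref{lem:almost-same-distribution} shows that, conditioned on the hidden data, the values of $\bff$ on any query set avoiding a ``revealing configuration'' have the same distribution under $\calD_{YES}$ and $\calD_{NO}$; the analog of Lemma~\ref{lem:query-lower-bound} then upper-bounds, by a birthday/union bound over the hidden data, the probability that $Q$ queries hit such a configuration --- roughly $\binom{Q}{2}\cdot(\ell/k)^{\Omega(\ell)}$ --- which yields the $(k/\ell)^{\Omega(\ell)}$ lower bound. The analogs of Lemmas~\ref{lem:unbalanced-distance-to-constant}, \ref{lem:no-distribution-distance} and~\ref{lem:d-yes-distance} --- using Lemma~\ref{lem:k-wise-distance} for the parity trick (now applied inside a $2^{\ell-\ell'}$-sized object), Hoeffding for concentration, and a union bound over all $\binom{n}{k'}$ candidate $k'$-juntas --- certify that w.h.p.\ $\dist(\bff,\calJ_k)\le p(1-2^{-r})+\big(\Delta_{2^{\ell-\ell'}}-2^{-2(\ell-\ell')}\big)2^{-r}$ for $\bff\sim\calD_{YES}$ and $\dist(\bff,\calJ_{k'})\ge p(1-2^{-r})+\Delta_{2^{\ell-\ell'}}2^{-r}$ for $\bff\sim\calD_{NO}$. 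Finally one sets $\ell-\ell'\approx\tfrac12\log\tfrac{1}{\eps_2-\eps_1}$, so that $\ell=\Theta\big(-\log(\eps_2-\eps_1)\big)$ and $\ell'=\Theta(\ell)$, and tunes $r$ and $p$ so that the two quantities above land on opposite sides of $[\eps_1,\eps_2]$; since $\eps_1,\eps_2\in[0.01,0.49]$ this is possible, and the lower bound becomes $(k/\ell)^{\Omega(\ell)}=\big(k/(-\log(\eps_2-\eps_1))\big)^{-\Omega(\log(\eps_2-\eps_1))}$, as claimed. (This is the statement from which Theorem~\ref{thm:tolerant-gap-lower-bound} follows by rescaling $k$.)

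I expect the main obstacle to be precisely the design and analysis of the active region. One must make $\calD_{YES}$'s active blocks close enough to constant to overcome the $\Theta\big(2^{-(\ell-\ell')/2}\big)$ slack created by enlarging $\calD_{NO}$'s reference class from $\calJ_k$ to $\calJ_{k'}$, while simultaneously keeping $\calD_{NO}$ far from \emph{every} $k'$-junta and preserving indistinguishability against all algorithms making fewer than $(k/\ell)^{\Omega(\ell)}$ queries. In particular one must rule out every ``cheap'' way of detecting the added structure --- neither a low-weight query difference nor a small query configuration confined to one $\bJ$-subcube may suffice --- so that the birthday bound over the choice of $\bJ$ remains the binding constraint; verifying this for the chosen active region, and choosing that region so that all the distance estimates go through, is where the real work lies.
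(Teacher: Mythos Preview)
Your high-level plan (Yao, hidden $\bJ\subseteq\{r+1,\dots,n\}$, $\bern(p)$ outside the active region, then Hoeffding and a union bound to certify the two distance estimates) is exactly the paper's framework, and you correctly isolate the active-region design as the crux. But you do not actually supply a construction, and the one you gesture at --- ``a parity constraint like Lemma~\ref{lem:k-wise-distance} but imposed at a coarser scale'' --- is not what the paper does and is hard to make work. Any way of collapsing a $2^\ell$-sized block to $2^{\ell-\ell'}$ effective degrees of freedom by making $\bff$ constant on cosets (e.g.\ depending only on XORs of groups of $\bJ$-coordinates) creates many \emph{low-weight} pairs $x,x^{\oplus S}$ with $S\subseteq\bJ$, $|S|=O(1)$, on which $\bff$ is forced to agree under $\calD_{YES}$ but is independent under $\calD_{NO}$; such pairs are found with $O(k)$ queries, destroying the lower bound. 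The challenge is to reduce the block's degrees of freedom \emph{while keeping any two nearby queries independent}.

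The paper's key idea is a distance-$d$ proper coloring of $\{0,1\}^\bJ$ (Lemma~\ref{lem:radial-coloring}): one colors $\{0,1\}^\ell$ with $|B_\ell(d)|$ colors so that any two points within Hamming distance $d$ receive distinct colors, and on the active region sets $\bff(\rho\sqcup y)=\by^{\rho}_{\chi(y)}$ for i.i.d.\ unbiased bits $\by^{\rho}_c$. Thus each $\bJ$-block has only $|B_\ell(d)|$ random bits, so by a majorization argument (Lemma~\ref{lem:random-bucket-expectation}) its expected distance to constant is at most $\Delta_{|B_\ell(d)|}$; meanwhile any two queries in the same block at distance $\le d$ have different colors and hence are independent, so the only ``revealing'' pairs are those at distance $>d$ entirely inside $\bJ$, which a random $\bJ$ avoids with probability $1-Q^2(\ell/k)^d$. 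With $d$ chosen so that $|B_\ell(d)|\le 2^{0.1\ell}$ (hence $d=\Theta(\ell)$), the YES distance is governed by $\Delta_{2^{0.1\ell}}$ while the NO distance to $\calJ_{k+\ell/10}$ is governed by $\Delta_{2^{0.9\ell}}$; the gap $\Delta_{2^{0.9\ell}}-\Delta_{2^{0.1\ell}}=\Theta(2^{-0.05\ell})$ comes from Lemma~\ref{lem:delta_properties}, not from a parity trick. This coloring (and the supporting Lemmas~\ref{lem:radial-coloring}, \ref{lem:random-bucket-expectation}, \ref{lem:delta_properties}) is the missing ingredient in your proposal.
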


We can then combine this result with the following observation:

\begin{lemma}
\label{lem:lifting}
Let $f: \{0,1\}^n \rightarrow \{0,1\}$ be a boolean function, $b$ be an integer, and take $F: \{0,1\}^{nb} \rightarrow \{0,1\}$ as
	\[F(x) = f \left( \bigoplus_{i=1, \hdots, b} x_i, \bigoplus_{i = b+1, \hdots, 2b} x_i, \hdots , \bigoplus_{i = nb-b+1, \hdots, nb} x_i \right).\]
Then for any $k \leq n$ we have that 
	\[\dist(F, \calJ_{kb}) = \dist(F, \calJ_{kb + b - 1}) = \dist(f, \calJ_k)\]
\end{lemma}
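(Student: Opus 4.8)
The plan is to establish the two equalities in Lemma~\ref{lem:lifting} by exhibiting, in both directions, distance-preserving correspondences between $k$-juntas (or $(kb+b-1)$-juntas) approximating $F$ and $k$-juntas approximating $f$. Write $x = (x^{(1)}, \dots, x^{(n)})$ with $x^{(j)} \in \{0,1\}^b$, and let $\pi_j(x) = \bigoplus_{i} x^{(j)}_i$ be the parity of the $j$-th block, so $F(x) = f(\pi_1(x), \dots, \pi_n(x))$. The map $\pi = (\pi_1, \dots, \pi_n): \{0,1\}^{nb} \to \{0,1\}^n$ is a surjective affine map under which every fiber $\pi^{-1}(z)$ has exactly the same size $2^{nb - n}$; this uniformity is the engine of the whole argument, since it implies that for any $g: \{0,1\}^n \to \{0,1\}$, the lifted function $g \circ \pi$ satisfies $\dist(F, g\circ\pi) = \dist(f, g)$ exactly (the disagreement fraction is preserved fiberwise).

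The first and easy inequality is $\dist(F, \calJ_{kb + b - 1}) \leq \dist(F, \calJ_{kb}) \leq \dist(f, \calJ_k)$: the right inequality follows because if $g$ is a $k$-junta on a set $T \subseteq [n]$ with $|T| = k$ achieving $\dist(f, \calJ_k)$, then $g \circ \pi$ depends only on the $kb$ coordinates in the blocks indexed by $T$, hence is a $kb$-junta, and by the fiber-uniformity remark $\dist(F, g\circ\pi) = \dist(f,g) = \dist(f,\calJ_k)$; the left inequality is just monotonicity of $\dist(\cdot, \calJ_m)$ in $m$. The substance is the reverse direction: I must show $\dist(f, \calJ_k) \leq \dist(F, \calJ_{kb+b-1})$, i.e., that no $(kb+b-1)$-junta can approximate $F$ better than the best $k$-junta approximates $f$. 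So fix any $h: \{0,1\}^{nb} \to \{0,1\}$ depending only on a coordinate set $R$ with $|R| = kb + b - 1$, and let $T \subseteq [n]$ be the set of blocks $j$ such that $R$ contains \emph{all} $b$ coordinates of block $j$. Since $|R| = kb + b - 1 = (k+1)b - 1 < (k+1)b$, the set $R$ cannot fully contain $k+1$ blocks, so $|T| \leq k$.

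The key claim is then $\dist(F, h) \geq \dist(f, g)$ for the $k$-junta $g$ on $T$ defined by $g(z) = \maj_{w}\, h(w)$ where $w$ ranges over all $\{0,1\}^{nb}$ strings whose block-parities on $T$ agree with $z|_T$ (a well-defined $k$-junta since $g$ depends only on $z|_T$). To prove this, group the points of $\{0,1\}^{nb}$ according to the value of $\pi_j$ for $j \in T$: within each such group, $F$ is constant (equal to $f$ evaluated with the $T$-coordinates fixed and averaged — wait, $F$ is not constant since $F$ depends on blocks outside $T$ too). The cleaner route: for each $z|_T$, among the points with that block-parity profile on $T$, $h$ is \emph{not} constant in general, but crucially, because $R$ misses at least one coordinate in every block outside $T$, $h$ restricted to any such group is "balanced" across the free parity coordinate of each block $j \notin T$ with $j$'s block meeting $R$ — so by a majority/averaging argument the fraction of points in the group where $h$ disagrees with the majority value $g(z)$ is at least as large as... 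Here I expect the main obstacle: carefully setting up the fiber decomposition so that $\dist(F,h)$, computed fiberwise over $\pi^{-1}(z)$, is bounded below by $\mathbb{I}[f(z) \neq g(z)]$ averaged over $z$, using that on each fiber $F$ equals $f(z)$ identically while $h$ may vary but its majority over the fiber is some bit, and the disagreement of $h$ with the constant $f(z)$ on the fiber is at least the disagreement of the fiber-majority of $h$ with $f(z)$. Since the fiber-majority of $h$ over $\pi^{-1}(z)$ depends only on $z|_T$ (because $h$ depends only on $R$, and $R$'s coordinates outside blocks in $T$ are "averaged out" uniformly over each fiber), it equals $g(z)$, giving $\dist(F,h) \geq \mathbb{E}_z \mathbb{I}[g(z) \neq f(z)] = \dist(f,g) \geq \dist(f, \calJ_k)$. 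Combining both directions pins all three quantities equal; I would close by remarking that the argument uses nothing about $f$ beyond being Boolean, and that the equality $\dist(g\circ\pi, F) = \dist(g, f)$ is exactly the fiber-uniformity of the affine map $\pi$.
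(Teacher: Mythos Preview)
Your overall strategy matches the paper's: show $\dist(F,\calJ_{kb})\le\dist(f,\calJ_k)$ by lifting a good $k$-junta through $\pi$, and show $\dist(F,\calJ_{kb+b-1})\ge\dist(f,\calJ_k)$ by arguing that only the fully-contained blocks of a set $R$ of size $kb+b-1$ can matter. The observation that in any block $j\notin T$ the missing coordinate absorbs the parity constraint, so the distribution of $x|_R$ over the fiber $\pi^{-1}(z)$ depends only on $z|_T$, is exactly the right one and is what the paper uses too.

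However, there is a genuine gap in your key step. You claim the pointwise inequality
\[
\Pr_{x\in\pi^{-1}(z)}[h(x)\ne f(z)] \ \ge\ \mathbb{I}\bigl[g(z)\ne f(z)\bigr],
\]
where $g(z)$ is the fiber-majority of $h$. This is false: if $h$ equals $1$ on $60\%$ of $\pi^{-1}(z)$ and $f(z)=0$, the left side is $0.6$ while the right side is $1$. So you cannot compare to the indicator of the fiber-majority on a per-fiber basis.

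The fix is to skip the fiber-majority $g$ altogether and average over $z$ within each $T$-coset first. Let $p(\rho)=\Pr_{x\in\pi^{-1}(z)}[h(x)=1]$ for any $z$ with $z|_T=\rho$ (well-defined by your own argument), and let $q(\rho)=\Pr_{z:z|_T=\rho}[f(z)=1]$. Then
\[
\dist(F,h)=\E_\rho\bigl[p(\rho)(1-q(\rho))+(1-p(\rho))q(\rho)\bigr]\ \ge\ \E_\rho\bigl[\min(q(\rho),1-q(\rho))\bigr]=\dist(f,\calJ_T)\ge\dist(f,\calJ_k),
\]
since $p(1-q)+(1-p)q$ is a convex combination of $q$ and $1-q$. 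The paper does exactly this computation in $\pm1$ notation (and in fact shows the first inequality is an equality: $\dist(F,\calJ_R)=\dist(f,\calJ_T)$), so once this step is corrected your proof coincides with the paper's.
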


We include a proof in the appendix, but intuitively any junta must use all the coordinates from a set of xor'd variables or no variables from it. This then gives the following corollary:

\begin{corollary}
\label{cor:lower-bound-lift}
	Let $k, \ell, b$ be integers and $0 \leq \eps_1 \leq \eps_2 \leq \frac{1}{2}$. If any $(k, k + \ell, \eps_1, \eps_2)$ tester must  make $Q(k, \ell, \eps_1, \eps_2)$ queries, then any $(kb, (k + \ell)b + b - 1, \eps_1, \eps_2)$ tester must make $Q(k, \ell, \eps_1, \eps_2)$ queries.
\end{corollary}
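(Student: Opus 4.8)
The plan is to prove this by a direct reduction: given a purported $(kb, (k+\ell)b + b - 1, \eps_1, \eps_2)$-tester $\calT'$ with query complexity $q < Q(k, \ell, \eps_1, \eps_2)$, I would construct a $(k, k+\ell, \eps_1, \eps_2)$-tester $\calT$ for functions on $\{0,1\}^n$ that also makes only $q$ queries, contradicting the assumed lower bound. The reduction is to run $\calT'$ on the ``lifted'' function $F: \{0,1\}^{nb} \to \{0,1\}$ defined from the input $f: \{0,1\}^n \to \{0,1\}$ exactly as in Lemma \ref{lem:lifting}, i.e., $F(x) = f\big(\bigoplus_{i=1}^b x_i, \ldots, \bigoplus_{i=nb-b+1}^{nb} x_i\big)$.

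The key steps, in order, are as follows. First, I observe that $\calT$ can simulate each query $\calT'$ makes to $F$ using exactly one query to $f$: when $\calT'$ queries $F$ at a point $x \in \{0,1\}^{nb}$, $\calT$ computes the $n$ block-parities of $x$ (no queries to $f$ needed for this, it is just arithmetic on the known string $x$) and queries $f$ at the resulting point in $\{0,1\}^n$, returning the answer to $\calT'$. So the query counts match. Second, I invoke Lemma \ref{lem:lifting} to translate the distance conditions: if $f$ is $\eps_1$-close to a $k$-junta, then $\dist(f, \calJ_k) \le \eps_1$, and since the lemma gives $\dist(F, \calJ_{kb}) = \dist(f, \calJ_k)$, we get that $F$ is $\eps_1$-close to a $kb$-junta; similarly if $f$ is $\eps_2$-far from all $(k+\ell)$-juntas, then $\dist(f, \calJ_{k+\ell}) \ge \eps_2$, and the lemma (applied with $k+\ell$ in the role of $k$, noting $k + \ell \le n$) gives $\dist(F, \calJ_{(k+\ell)b + b - 1}) = \dist(f, \calJ_{k+\ell}) \ge \eps_2$, so $F$ is $\eps_2$-far from all $((k+\ell)b + b - 1)$-juntas. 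Third, I conclude: on these two types of inputs $\calT'$ must (by assumption) accept, respectively reject, with probability at least $2/3$, hence so does $\calT$ on $f$, so $\calT$ is a valid $(k, k+\ell, \eps_1, \eps_2)$-tester making $q < Q(k,\ell,\eps_1,\eps_2)$ queries — a contradiction.

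I do not expect any real obstacle here; the corollary is essentially a bookkeeping consequence of Lemma \ref{lem:lifting}, and the only mild point to be careful about is the hypothesis $k + \ell \le n$ needed to apply the lemma in the ``far'' case — but since the lower bound instances producing $Q(k,\ell,\eps_1,\eps_2)$ live on functions of whatever number of variables we like, we may freely take $n \ge k + \ell$, so this is not a genuine restriction. The one-to-one simulation of queries is immediate because each output coordinate of the block-parity map depends on a disjoint block of coordinates of $x$, so computing the argument to $f$ requires no information about $f$ itself.
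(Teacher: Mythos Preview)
Your proposal is correct and follows essentially the same approach as the paper: both argue by reduction, simulating the $(kb,(k+\ell)b+b-1,\eps_1,\eps_2)$-tester on the lifted function $F$ from Lemma~\ref{lem:lifting}, answering each query to $F$ with one query to $f$, and invoking the lemma (once with $k$, once with $k+\ell$) to carry over the distance guarantees. The paper phrases it as a direct construction rather than a contradiction, but the content is identical.
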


\begin{proof}
We'll construct a $(k, k + \ell, \eps_1, \eps_2)$ tester using a $(kb, (k + \ell) b + b - 1, \eps_1, \eps_2)$ tester $\ALG$. Indeed, given a function $f: \{0,1\}^n \rightarrow \{0,1\}$, we construct $F$ as in Lemma \ref{lem:lifting}. We then run $\ALG$ on $F$ and accept if $\ALG$ accepts and reject otherwise. 

Note that $\dist(F, \calJ_{kb}) = \dist(f, \calJ_k)$ and $\dist(F, \calJ_{(k + \ell) b + b -1}) = \dist(f, \calJ_{k + \ell})$, so this indeed constitutes a $(k, k + \ell, \eps_1, \eps_2)$ tester. Since we can answer each query to $F$ with at most one query to $f$, the corollary follows.
\end{proof}

With this, we can prove Theorem \ref{thm:tolerant-gap-lower-bound}.

\begin{proof}[Proof of Theorem \ref{thm:tolerant-gap-lower-bound}]
Note that it suffices to handle the case when $\gamma \geq \frac{1}{k^{.01}}$: If $\frac{1}{k} \leq \gamma \leq \frac{1}{k^{.01}}$, a $(k, (1 + \gamma)k, \eps_1, \eps_2)$ tester is also a $(k, (1 + \frac{1}{k^{.01}})k, \eps_1, \eps_2)$ tester. Thus, the $k^{-\Omega(\log(\eps_2 - \eps_1))}$ query lower bounds for a $(k, (1 + \frac{1}{k^{.01}})k, \eps_1, \eps_2)$ applies to the $(k, (1 + \gamma)k, \eps_1, \eps_2)$ tester, which proves the desired result.

Let $\ell = \lfloor -\log(\eps_2 - \eps_1) \rfloor$ and set $k' = \lfloor \ell/(100 \gamma) \rfloor$. Now observe that if $\gamma$ is smaller than some appropriate absolute constant, we can apply Theorem \ref{thm:weak-gap-tolerant-lower-bound} to get a $\left( \frac{1}{\gamma} \right)^{- \Omega(\log(\eps_2 - \eps_1))}$ lower bound for a $(k', k' + \lfloor \ell /20 \rfloor, \eps_1, \eps_2)$ tester. Now let $b = \lfloor k/k' \rfloor$. Applying Corollary \ref{cor:lower-bound-lift} then extends this lower bound to $(k'b, k'b + b \lfloor \ell /20 \rfloor + b - 1, \eps_1, \eps_2)$ testers. 

Note note that $k' \leq k^{.12}$. Thus
	\[k'b + b \lfloor \ell /20 \rfloor + b - 1\geq k - k' + 5 k \gamma - 5 k' \gamma - 1 \geq (1 + 4 \gamma)k > (1 + \gamma)k \]
assuming $k$ is sufficiently large. 
\end{proof}

\subsection{Weak Gap Lower Bound}

It now remains to prove Theorem \ref{thm:weak-gap-tolerant-lower-bound}. At a high level, we follow the same proof as with Theorem \ref{thm:tolerant-lower-bound} but with some changes the $\calD_{NO}$ distribution. Since most of the proofs are simple or identical to their counterparts in the original lower bound we banish them to the appendix.

\begin{lemma}
\label{lem:radial-coloring}
	Let $d \leq n$ be integers. There exists a coloring of the boolean cube $\chi: \{0,1\}^n \rightarrow [|B_n(d)|]$ such that for all $x,y \in \{0,1\}^n$ with $\dist(x,y) \leq d$, $\chi(x) \not = \chi(y)$.
\end{lemma}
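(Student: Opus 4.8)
The plan is to recast this as a bounded-degree graph coloring problem. Define a graph $G$ on vertex set $\{0,1\}^n$ in which two distinct strings $x,y$ are joined by an edge exactly when $\dist(x,y) \leq d$. A proper vertex coloring of $G$ with $|B_n(d)|$ colors is precisely a coloring $\chi$ as demanded by the lemma, so it suffices to show $G$ has chromatic number at most $|B_n(d)|$.

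The key step is to compute the maximum degree of $G$. Fix a vertex $x$; its neighbors are exactly the strings of the form $x^{\oplus T}$ where $T \subseteq [n]$ is nonempty with $|T| \leq d$ (equivalently, $x \oplus v$ where $v$ ranges over nonzero vectors of Hamming weight at most $d$). The number of such $T$ is $|B_n(d)| - 1$, since $B_n(d)$ counts all vectors of weight at most $d$ including the zero vector. Hence every vertex of $G$ has degree exactly $|B_n(d)| - 1$, so $\Delta(G) = |B_n(d)| - 1$. Now invoke the standard greedy coloring bound: any graph admits a proper coloring with $\Delta(G) + 1$ colors, obtained by ordering the vertices arbitrarily and assigning to each the smallest color not already used on its previously-colored neighbors. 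This produces a proper coloring of $G$ with $|B_n(d)|$ colors, which is the desired $\chi$.

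I do not anticipate a genuine obstacle here; the statement is essentially a repackaging of the elementary fact $\chi(G) \leq \Delta(G) + 1$ applied to the $d$-th power of the hypercube. The only point requiring care is the off-by-one in the degree count — one must note that $B_n(d)$ includes the all-zeros offset (i.e. the vertex itself), so the degree is $|B_n(d)| - 1$ and the greedy bound gives exactly $|B_n(d)|$ colors, matching the palette size in the statement. (If an explicit, more structured coloring were wanted, one could alternatively take a linear code $C \subseteq \mathbb{F}_2^n$ of minimum distance at least $d+1$, whose cosets are independent sets in $G$, and appeal to the Gilbert–Varshamov bound to guarantee $|C| \geq 2^n/|B_n(d)|$ and hence at most $|B_n(d)|$ cosets; but the greedy argument above is shorter and fully sufficient.)
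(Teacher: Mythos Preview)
Your proposal is correct and matches the paper's proof essentially verbatim: build the graph on $\{0,1\}^n$ with edges between distinct points at Hamming distance at most $d$, observe every vertex has degree $|B_n(d)|-1$, and apply the greedy bound $\chi(G)\le \Delta(G)+1$. The off-by-one observation and the optional coding-theoretic remark are fine but not needed beyond what the paper does.
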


We'll also need the following fact.

\begin{lemma}
\label{lem:random-bucket-expectation}
Let $\lambda_1, ..., \lambda_n$ be non-negative real numbers with $\sum_i \lambda_i = 1$. Let $\bX_1, ..., \bX_n$ be drawn uniformly and independently at random from $\{-1,1\}$. Then
	\[\E \left [ \left |\sum_i \lambda_i \bX_i \right | \right] \geq \E \left [ \left |\sum_i \frac{1}{n} \bX_i \right | \right]. \]
\end{lemma}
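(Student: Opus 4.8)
The plan is to prove Lemma~\ref{lem:random-bucket-expectation} by a symmetrization/convexity argument: the quantity $\E[|\sum_i \lambda_i \bX_i|]$ is a symmetric convex function of the weight vector $\lambda = (\lambda_1,\dots,\lambda_n)$ on the simplex, so it is minimized at the barycenter $\lambda = (1/n,\dots,1/n)$. First I would define $\Phi(\lambda) := \E_{\bX}[\,|\sum_i \lambda_i \bX_i|\,]$ for $\lambda$ ranging over the probability simplex $\Delta_{n-1} = \{\lambda \ge 0 : \sum_i \lambda_i = 1\}$. Two observations: (i) $\Phi$ is convex, since for each fixed sign pattern $\bX = x \in \{\pm1\}^n$ the map $\lambda \mapsto |\sum_i \lambda_i x_i|$ is convex (it is the absolute value of a linear function), and $\Phi$ is an average (expectation) of such convex functions; (ii) $\Phi$ is invariant under permutations of the coordinates, because the distribution of $\bX$ is exchangeable — for any permutation $\sigma$, $\Phi(\lambda_{\sigma(1)},\dots,\lambda_{\sigma(n)}) = \Phi(\lambda_1,\dots,\lambda_n)$.

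Given these two facts, the conclusion follows from a standard averaging argument: for any $\lambda \in \Delta_{n-1}$,
\[
\frac{1}{n!}\sum_{\sigma \in S_n} \Phi(\lambda_{\sigma(1)},\dots,\lambda_{\sigma(n)}) \;\ge\; \Phi\!\left(\frac{1}{n!}\sum_{\sigma \in S_n} (\lambda_{\sigma(1)},\dots,\lambda_{\sigma(n)})\right)
\]
by convexity (Jensen), while the left-hand side equals $\Phi(\lambda)$ by permutation-invariance and the argument of $\Phi$ on the right-hand side is exactly $(1/n,\dots,1/n)$ since averaging a vector over all coordinate permutations yields the constant vector with entries equal to the mean $\frac{1}{n}\sum_i \lambda_i = \frac{1}{n}$. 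Hence $\Phi(\lambda) \ge \Phi(1/n,\dots,1/n)$, which is precisely the claimed inequality.

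I do not anticipate a serious obstacle here; the only things to be careful about are stating convexity cleanly (the absolute value of an affine function is convex, and expectations preserve convexity) and noting that the hypothesis $\lambda_i \ge 0$ with $\sum \lambda_i = 1$ is exactly what puts $\lambda$ in the simplex so that the barycenter argument applies — though in fact the proof works verbatim for any $\lambda$ with $\sum_i \lambda_i = 1$ even without nonnegativity, and also more generally shows $\Phi$ is Schur-convex. An alternative one-line route, if one prefers to avoid invoking the simplex picture, is to write $(1/n,\dots,1/n) = \frac{1}{n!}\sum_\sigma \sigma \cdot \lambda$ as a convex combination of permutations of $\lambda$ and apply Jensen directly together with exchangeability of $\bX$; this is the same argument repackaged. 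Either way the proof is short.
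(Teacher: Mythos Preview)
Your argument is correct. It differs from the paper's proof, though the two share a common core (convexity of $|\cdot|$ plus Jensen).

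The paper proceeds by a \emph{pairwise smoothing} step: if $\lambda_1 > \lambda_2$, replacing both by their average $\overline{\lambda} = (\lambda_1+\lambda_2)/2$ can only decrease $\E[|\sum_i \lambda_i \bX_i|]$. This is verified by conditioning on $\bX_3,\dots,\bX_n$ and reducing to the inequality $\tfrac12(|a+S| + |{-a}+S|) \ge |S|$, which is convexity of the absolute value. One then iterates and invokes a limiting argument to reach the uniform vector. Your proof instead exploits the \emph{global} symmetry in one shot: $\Phi$ is convex and permutation-invariant, so averaging over all of $S_n$ and applying Jensen lands directly at the barycenter $(1/n,\dots,1/n)$. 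This is cleaner and sidesteps the limiting step entirely; it also makes the Schur-convexity of $\Phi$ transparent, as you note. The paper's approach, on the other hand, is slightly more elementary in that it only ever touches two coordinates at a time, and it makes explicit exactly which local move is being used---but at the cost of the limiting argument. Both routes are short and essentially equivalent in spirit.
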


Finally, standard results about random walks give us that

\begin{lemma}[Folklore]
\label{lem:delta_properties}
$\Delta_t$ satisfies the following properties
\begin{enumerate}
\item[(i)] $\Delta_t$ is an increasing function in $t$
\item[(ii)] For $t$ sufficiently large, $\frac{1}{2} - \frac{10}{\sqrt{t}} \leq \Delta_t \leq \frac{1}{2} - \frac{1}{10 \sqrt{t}}$
\end{enumerate}
\end{lemma}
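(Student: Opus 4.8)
The plan is to reduce both statements to elementary facts about the simple symmetric random walk $\bS_t := \sum_{i=1}^t \bX_i$, where $\bX_1,\dots,\bX_t$ are i.i.d.\ uniform on $\{-1,1\}$. First I would put $\Delta_t$ in closed form: for any $x\in\{0,1\}^t$ of Hamming weight $|x|$ we have $\dist(x,\{0^t,1^t\}) = \min(|x|,t-|x|) = \tfrac{t}{2}-\bigl|\,|x|-\tfrac{t}{2}\,\bigr|$, and writing $x_i=(1+\bX_i)/2$ shows that $|x|-\tfrac{t}{2}$ has the distribution of $\tfrac12\bS_t$ when $x$ is uniform. Hence
\[
\Delta_t \;=\; \frac12 \;-\; \frac{1}{2t}\,\E\bigl[\,|\bS_t|\,\bigr].
\]
Thus (i) is equivalent to $\tfrac1t\E[|\bS_t|]$ being non-increasing in $t$ — and I would remark that ``increasing'' here has to be read as non-strict, since e.g.\ $\Delta_2=\Delta_3=1/4$ — while (ii) is equivalent to two-sided bounds of the form $c_1\sqrt t\le\E[|\bS_t|]\le c_2\sqrt t$ for $t$ large.

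For part (i), I would simply invoke Lemma~\ref{lem:random-bucket-expectation} with $n=t+1$ and the weight vector $(\tfrac1t,\dots,\tfrac1t,0)$: these weights are non-negative and sum to $1$, the left-hand side of that lemma becomes $\tfrac1t\E[|\bS_t|]$, and the right-hand side becomes $\tfrac{1}{t+1}\E[|\bS_{t+1}|]$, giving $\tfrac1t\E[|\bS_t|]\ge\tfrac1{t+1}\E[|\bS_{t+1}|]$, which is precisely the monotonicity of $\Delta_t$.

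For part (ii), the upper bound on $\E[|\bS_t|]$ is immediate from Jensen (equivalently Cauchy--Schwarz): $\E[|\bS_t|]\le(\E[\bS_t^2])^{1/2}=\sqrt t$, which is much stronger than $20\sqrt t$. For the lower bound I would use a second-versus-fourth moment comparison: $\E[\bS_t^2]=t$ and $\E[\bS_t^4]=3t^2-2t\le 3t^2$, so Hölder's inequality $\E[\bS_t^2]\le(\E[|\bS_t|])^{2/3}(\E[\bS_t^4])^{1/3}$ yields $\E[|\bS_t|]\ge\sqrt{t/3}$. Plugging both bounds into the closed form above gives $\tfrac12-\tfrac1{2\sqrt t}\le\Delta_t\le\tfrac12-\tfrac1{2\sqrt{3t}}$, and since $\tfrac{1}{2\sqrt3}>\tfrac1{10}$ and $\tfrac12<10$ this implies both inequalities of (ii) — in fact for every $t\ge1$, so the hypothesis ``$t$ sufficiently large'' is merely a convenience. (Alternatively one may just quote the classical asymptotics $\E[|\bS_t|]=\sqrt{2t/\pi}\,(1+o(1))$ for the mean absolute deviation of a symmetric binomial.)

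I do not expect a real obstacle here. The only things requiring a bit of care are matching the (loose) constants $10$ and $1/10$ and reading (i) as ``non-decreasing''; the two genuinely useful observations are the reduction of $\Delta_t$ to $\E[|\bS_t|]$ and the fact that Lemma~\ref{lem:random-bucket-expectation}, applied with a single zero weight, directly encodes the monotonicity statement.
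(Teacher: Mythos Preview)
Your proof is correct. The paper does not actually prove this lemma: it is stated as ``Folklore'' and left without proof, with only the sentence ``standard results about random walks give us that'' preceding it. So there is no approach in the paper to compare against; you have supplied a self-contained argument where the paper simply appeals to common knowledge.

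Two remarks worth recording. First, your use of the paper's own Lemma~\ref{lem:random-bucket-expectation} with the weight vector $(1/t,\dots,1/t,0)$ to obtain monotonicity is clean and keeps everything internal to the paper. Second, your observation that $\Delta_2=\Delta_3=1/4$ is correct and shows that (i) must indeed be read as ``non-decreasing''; this is consistent with how the paper uses the lemma (only the weak inequality $\Delta_{|B_\ell(d)|}\le\Delta_{2^{0.1\ell}}$ is ever invoked). The moment computation in (ii) is standard and the constants you obtain, $\tfrac12-\tfrac{1}{2\sqrt t}\le\Delta_t\le\tfrac12-\tfrac{1}{2\sqrt{3t}}$, comfortably imply the stated bounds for all $t\ge1$.
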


With this we have everything we need to construct our new distributions. We again proceed by Yao's lemma and construct a $\calD_{YES}$ and $\calD_{NO}$ distributions, which are again boolean functions over $n := k + \ell$ variables. We take $\ell, r$ and $d$ to be parameters, which we'll set later. We again will ensure that $r < k$ and take $d = \Theta(\ell)$. 

\begin{flushleft}
\begin{enumerate}
    	\item[] $\calD_{NO}$: 
	A boolean function $\bff:\{0,1\}^n\rightarrow\{0,1\}$ is drawn as follows:
	\begin{enumerate}
	\item For each $x\in \{0,1\}^n$ with $x|_{[r]} \not = 0^r$, independently draw $\bff(x)$ from $\bern(p)$.
	\item For each $x\in \{0,1\}^n$ with $x|_{[r]} = 0^r$, draw $\bff(x)$ uniformly
	and independently at random.
    \end{enumerate}
	    \item[] $\calD_{YES}$: A boolean function $\bff:\{0,1\}^n\rightarrow \{0,1\}$ is drawn as follows:
	\begin{enumerate}
	    \item Randomly choose a set $\bJ \subseteq \{r+1, ..., n\}$ of size $\ell$.
	    \item Let $\chi$ be a coloring of $\{0,1\}^{\bJ}$ from Lemma \ref{lem:radial-coloring} such that points within distance $d$ have different colors.
	   	\item For each $\rho \in \{0,1\}^{[n] \setminus \bJ}$, sample $\by^\rho_1, ..., \by_{|B_\ell(d)|}^\rho \sim \bern(\frac{1}{2})$.
	    \item For each $x\in \{0,1\}^n$ with $x|_{[r]} \not = 0^r$, independently draw $\bff(x)$ from $\bern(p)$. 
	    \item For each $x\in \{0,1\}^n$ with $x|_{[r]} = 0^r$ set $\bff(x) = \by^{x|_{[n] \setminus \bJ}}_{\chi(x|_{\bJ})}$.  	
	\end{enumerate}
\end{enumerate}
\end{flushleft}	

We now follow the previous proof. Note that by construction we have that

\begin{lemma}
\label{lem:weak-gap-almost-same-distribution}
Consider a set of points $x^{(1)}, ..., x^{(m)} \in \{0,1\}^n$ and $y_1,\ldots,y_m\in \{0,1\}$.
Then
	\[\Pr_{\bff, \bJ \sim \calD_{YES}} \left[\forall i\hspace{2mm}  \bff(x^{(i)}) = y_i \mid \forall i,j  \quad x^{(i)}|_{[n] \setminus \bJ} \not = x^{(j)}|_{[n] \setminus \bJ} \lor \dist(x,y) \leq d \right] = \Pr_{\bff \sim \calD_{NO}} \left[\forall i \hspace{2mm} \bff(x^{(i)}) = y_i \right]\]
\end{lemma}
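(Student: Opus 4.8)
The plan is to mimic the proof of Lemma \ref{lem:almost-same-distribution}, with the coloring $\chi$ playing the role that $2^\ell$-wise independence played before. First I would fix a set $J \subseteq \{r+1,\dots,n\}$ that is \emph{compatible} with the query pattern, i.e. such that for all $i,j$ we have $x^{(i)}|_{[n]\setminus J} \not= x^{(j)}|_{[n]\setminus J}$ or $\dist(x^{(i)},x^{(j)}) > d$; it suffices to prove that conditioned on $\bJ = J$, the events $\bff(x^{(i)}) = y_i$ are jointly independent with the same marginals as under $\calD_{NO}$, and then average over all compatible $J$. Partition the indices into those with $x^{(i)}|_{[r]} \not= 0^r$ (call them $1,\dots,a$) and those with $x^{(i)}|_{[r]} = 0^r$ (indices $a+1,\dots,m$). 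For the first group, $\bff(x^{(i)}) \sim \bern(p)$ independently by construction step (d), exactly as in $\calD_{NO}$; these are also independent of the second group since they read off disjoint randomness.

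For the second group, by step (e) we have $\bff(x^{(i)}) = \by^{\rho^{(i)}}_{\chi(x^{(i)}|_J)}$ where $\rho^{(i)} := x^{(i)}|_{[n]\setminus J}$. The key point is that for any two distinct indices $i,j$ in this group, the pair $(\rho^{(i)},\chi(x^{(i)}|_J))$ and $(\rho^{(j)},\chi(x^{(j)}|_J))$ are distinct: if $\rho^{(i)} \not= \rho^{(j)}$ they read from different blocks of fresh $\bern(1/2)$ variables; and if $\rho^{(i)} = \rho^{(j)}$, then by compatibility we must have $\dist(x^{(i)},x^{(j)}) > d$, which (since $x^{(i)}$ and $x^{(j)}$ agree off $J$) forces $\dist(x^{(i)}|_J, x^{(j)}|_J) > d$, hence $\chi(x^{(i)}|_J) \not= \chi(x^{(j)}|_J)$ by the defining property of $\chi$ from Lemma \ref{lem:radial-coloring}. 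Therefore the $\bff(x^{(i)})$ over the second group are reads of distinct, independent, uniform bits, hence jointly independent and uniform — matching $\calD_{NO}$, where all these values are also i.i.d.\ uniform. Combining the two groups, $\Pr_{\bff,\bJ\sim\calD_{YES}}[\forall i\ \bff(x^{(i)})=y_i \mid \bJ = J] = \prod_{i=1}^a p^{y_i}(1-p)^{1-y_i} \cdot 2^{-(m-a)} = \Pr_{\bff\sim\calD_{NO}}[\forall i\ \bff(x^{(i)})=y_i]$, and averaging this identity over all compatible $J$ (weighted by $\Pr[\bJ = J]$, which is uniform) gives the claim.

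The main obstacle — really the only subtle point — is verifying that distinctness of the pairs $(\rho^{(i)}, \chi(x^{(i)}|_J))$ follows from the conditioning event, in particular handling the case $\rho^{(i)} = \rho^{(j)}$ correctly: one must note that the Hamming distance between $x^{(i)}$ and $x^{(j)}$ is then entirely concentrated on the coordinates in $J$, so the hypothesis $\dist(x^{(i)},x^{(j)}) > d$ transfers to the restrictions to $J$ and the coloring property applies. Everything else is bookkeeping identical to the proof of Lemma \ref{lem:almost-same-distribution}.
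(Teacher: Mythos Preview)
Your overall strategy is exactly the intended one (the paper just asserts the lemma ``by construction''), but there is a genuine error in the execution: you have the inequality in the compatibility condition backwards, and this breaks the key step.

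The conditioning event in the lemma is that for every pair $i,j$ one has $x^{(i)}|_{[n]\setminus \bJ}\neq x^{(j)}|_{[n]\setminus \bJ}$ \emph{or} $\dist(x^{(i)},x^{(j)})\le d$. You instead define a compatible $J$ by ``$\dist(x^{(i)},x^{(j)})>d$'' in the second disjunct, and then argue that when $\rho^{(i)}=\rho^{(j)}$ the compatibility forces $\dist(x^{(i)}|_J,x^{(j)}|_J)>d$ and hence $\chi(x^{(i)}|_J)\neq\chi(x^{(j)}|_J)$. But Lemma~\ref{lem:radial-coloring} only guarantees that points \emph{within} distance $d$ receive different colors; for points at distance greater than $d$ the colors may well coincide, so your deduction fails. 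Indeed, under your inverted condition the two queries could read the \emph{same} bit $\by^{\rho}_{c}$, destroying independence.

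The fix is simply to use the correct direction: when $\rho^{(i)}=\rho^{(j)}$ (so the whole Hamming distance lies in $J$), the conditioning event gives $\dist(x^{(i)}|_J,x^{(j)}|_J)\le d$, and then the defining property of $\chi$ yields $\chi(x^{(i)}|_J)\neq\chi(x^{(j)}|_J)$, so the two queries read distinct independent $\bern(1/2)$ bits. With this correction the rest of your argument (partition into the $x|_{[r]}\neq 0^r$ and $x|_{[r]}=0^r$ groups, compute the product, average over compatible $J$) goes through verbatim.
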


\begin{lemma}
\label{lem:weak-gap-query-lower-bound}
	Any deterministic algorithm $\ALG$ that distinguishes between $\calD_{YES}$ and $\calD_{NO}$ with probability at least $2/3$, i.e., $\Pr_{\bff \sim \calD_{YES}} [\ALG(\bff) \text{ accepts}] \geq 2/3$ and $\Pr_{\bff \sim \calD_{NO}} [\ALG(\bff) \text{ rejects}] \geq 2/3$, must make at least $\Omega((k/\ell)^{d/2})$ queries. 
\end{lemma}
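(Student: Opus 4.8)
The plan is to mirror the proof of Lemma~\ref{lem:query-lower-bound} essentially verbatim, using Lemma~\ref{lem:weak-gap-almost-same-distribution} in place of Lemma~\ref{lem:almost-same-distribution} and the appropriate combinatorial bound in place of the birthday-paradox estimate $Q^2/\binom{k-r}{\ell}$. Towards a contradiction, suppose $\ALG$ is a deterministic decision tree of depth $Q = \frac{1}{10}(k/\ell)^{d/2}$ (up to constants) that distinguishes the two distributions. Fix any root-to-leaf path $p$; by Lemma~\ref{lem:weak-gap-almost-same-distribution}, conditioned on the event that $p$ never queries two points $x^{(i)},x^{(j)}$ with $x^{(i)}|_{[n]\setminus\bJ}=x^{(j)}|_{[n]\setminus\bJ}$ and $\dist(x^{(i)}|_\bJ,x^{(j)}|_\bJ)\le d$ (call this the ``bad collision'' event), the probability that $\bff\sim\calD_{YES}$ makes $\ALG$ follow $p$ equals the probability that $\bff\sim\calD_{NO}$ does. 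So as in Lemma~\ref{lem:query-lower-bound} it suffices to show that, over the random choice of $\bJ$ (and $\bff\sim\calD_{YES}$), the probability that $p$ produces a bad collision is at most, say, $0.01$; summing over rejecting leaves then gives $\Pr_{\calD_{NO}}[\ALG\text{ rejects}]\le 1.02\cdot\Pr_{\calD_{YES}}[\ALG\text{ rejects}]<1/2$, the desired contradiction.

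The heart of the argument is therefore the collision bound. Consider a fixed pair of queried points $x,x'\in\{0,1\}^n$ with $x\ne x'$. A bad collision between them requires simultaneously $x|_{[n]\setminus\bJ}=x'|_{[n]\setminus\bJ}$ and $\dist(x|_\bJ,x'|_\bJ)\le d$. Let $T=\{i\in\{r+1,\dots,n\}: x_i\ne x'_i\}$ be the set of coordinates on which $x,x'$ differ (if they differ somewhere in $[r]$ no collision is possible, so assume $T\subseteq\{r+1,\dots,n\}$). For the first condition we need $T\subseteq\bJ$; for the second we need $|T|\le d$. So a bad collision occurs only if $1\le|T|\le d$ and $T\subseteq\bJ$. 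The probability over a uniformly random $\ell$-subset $\bJ$ of $\{r+1,\dots,n\}$ (which has $n-r=k+\ell-r\ge k-r$ elements, and with $r<k$ this is comparable to $k$) that a fixed $t$-element set is contained in $\bJ$ is $\binom{n-r-t}{\ell-t}/\binom{n-r}{\ell}\le(\ell/(n-r))^t\le(\ell/(k-r))^t$. Summing the geometric-type series over $t=1,\dots,d$, and noting $\ell/(k-r)$ is small (since $\ell=\Theta(-\log(\eps_2-\eps_1))$ is tiny compared to $k$), the dominant term is $t=d$ and the per-pair collision probability is $O((\ell/(k-r))^1)$ in the worst case $t=1$ — wait, that is too big; the point is that when $x,x'$ are actual queries of $\ALG$ we only get collisions contributing when $|T|\le d$, but for small $|T|$ the probability $(\ell/(k-r))^{|T|}$ is already large. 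So the correct accounting is: we need $Q^2$ times the \emph{maximum} over pairs of the collision probability to be small, and the maximum is attained at $|T|=1$, giving roughly $\ell/(k-r)$ per pair — this does \emph{not} immediately give the claimed $(k/\ell)^{d/2}$ bound, which means the union bound must be organized more cleverly.

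The right way, matching the target bound $(k/\ell)^{d/2}$, is to union bound not over pairs of queries but to observe the following: the probability that \emph{some} pair among the $Q$ queries collides badly is at most $Q^2 \cdot \max_{1\le t\le d}(\ell/(k-r))^{t}$ only if we are careless; instead, note that for a bad collision with $|T|=t$ we actually need $t$ \emph{specific} coordinates to land in $\bJ$, and the relevant quantity is how many pairs of queried points have a given small difference-set — but since $\ALG$ queries at most $Q$ points, at most $\binom{Q}{2}$ pairs exist, and among those the expected number with $T\subseteq\bJ$ and $|T|\le d$ is $\sum_{t=1}^d (\#\text{pairs at Hamming distance }t)\cdot(\ell/(k-r))^t \le \binom{Q}{2}\cdot d\cdot(\ell/(k-r))$; to make this $o(1)$ we'd need $Q^2\ll(k-r)/\ell$, i.e.\ $Q\ll\sqrt{k/\ell}$, not $(k/\ell)^{d/2}$. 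So the honest statement is that I expect the actual argument (and hence the appendix proof the authors defer to) to exploit the \emph{coloring} structure: a single bad query pair is not enough to distinguish — one needs enough points mapping to the \emph{same} color class, which requires $d$ coordinates of agreement structure, and a more refined counting (how many $d$-subsets of $\bJ$ are "hit") gives the $(k/\ell)^{d/2}$ via a birthday bound on $\binom{n-r}{d}\approx(k/\ell)^{d}$ many color classes. \textbf{The main obstacle} is precisely getting this refined collision/birthday bound right: identifying the correct "object" (a $d$-subset of $\bJ$, or a color class) whose count is $\sim(k/\ell)^d$ so that $Q<(k/\ell)^{d/2}$ queries cannot hit two queries into the same object except with small probability, and verifying that failing to hit the same object indeed makes Lemma~\ref{lem:weak-gap-almost-same-distribution} applicable. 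Once that combinatorial estimate is in hand, the rest (the path-by-path comparison and summing over rejecting leaves) is identical to Lemma~\ref{lem:query-lower-bound}.
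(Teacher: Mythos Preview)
Your overall plan (mirror Lemma~\ref{lem:query-lower-bound}, path-by-path comparison, union bound over pairs) is exactly what the paper does, but you have the direction of the key inequality reversed, and that is what sends you off the rails.

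Look again at the condition in Lemma~\ref{lem:weak-gap-almost-same-distribution}: the event being conditioned on is that for every pair $x^{(i)},x^{(j)}$ either they differ on $[n]\setminus\bJ$ \emph{or} $\dist(x^{(i)},x^{(j)})\le d$. Hence the \emph{bad} event for a pair $x,x'$ is
\[
x|_{[n]\setminus\bJ}=x'|_{[n]\setminus\bJ}\quad\text{and}\quad \dist(x,x')\ge d,
\]
not $\le d$ as you wrote. (This matches the construction: the coloring $\chi$ gives points within distance $d$ \emph{different} colors, so their $\bff$-values are independent; only pairs at distance $>d$ can share a color and be correlated.) With the correct direction, let $T$ be the difference set of $x,x'$; a bad collision requires $T\subseteq\bJ$ and $|T|\ge d$. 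For a fixed $T$ of size $t\ge d$,
\[
\Pr_{\bJ}\big[T\subseteq\bJ\big]=\frac{\binom{n-r-t}{\ell-t}}{\binom{n-r}{\ell}}\le\Big(\frac{\ell}{n-r}\Big)^{t}\le\Big(\frac{\ell}{k}\Big)^{d},
\]
since $t\ge d$ and $\ell/k<1$. Now the naive union bound over the at most $Q^2$ pairs gives $\Pr[\text{some bad pair}]\le Q^2(\ell/k)^d$, which is $\le 0.01$ once $Q\le\frac{1}{10}(k/\ell)^{d/2}$. No refined color-class birthday argument is needed; the rest of your write-up (comparing path probabilities and summing over rejecting leaves) is correct and identical to the paper's proof.
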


Applying Lemma \ref{lem:no-distribution-distance} with our new parameters then gives

\begin{lemma}
\label{lem:weak-gap-no-distribution-distance}
Suppose that $\ell \leq k$, then 	
	\[\Pr_{\bff \sim \calD_{NO}} \left[ \dist(\bff, \calJ_{k + \ell/10}) \leq  p (1-2^{-r}) + \Delta_{2^{0.9\ell}} \cdot 2^{-r} - e^{-2^{0.9\ell} \cdot (1/2 - p)^2/12} - 2^{-(k + \ell/10)/3} \right] = o_k(1)\]
where $\calJ_{k + \ell/10}$ denotes the class of $(k + \ell/10)$-juntas.
\end{lemma}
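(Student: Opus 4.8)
The plan is to run the proof of Lemma~\ref{lem:no-distribution-distance} essentially verbatim, tracking how two parameter changes propagate. First, $\bff$ is now a function on $n = k+\ell$ variables but we compare against $(k+\ell/10)$\nobreakdash-juntas, so a fixed candidate junta set $S$ with $|S| = k+\ell/10$ leaves $n - |S| = 0.9\ell$ free coordinates; hence every inner average is over $2^{0.9\ell}$ bits rather than $2^\ell$, which is exactly why $\Delta_{2^{0.9\ell}}$ and the error term $e^{-2^{0.9\ell}(1/2-p)^2/12}$ appear. Second, $\calD_{NO}$ is structurally identical to the one in Section~2 (the radial coloring only enters $\calD_{YES}$), so nothing about the distribution itself needs to be reanalyzed; in particular $\calD_{NO}$ is still a product distribution over the $2^n$ function values.

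Concretely, I would fix $S \subseteq [n]$ with $|S| = k+\ell/10$, write $\dist(\bff,\calJ_S)$ as the average over $\rho \in \{0,1\}^S$ of $\bz_\rho := 2^{-0.9\ell}\sum_{y \in \{0,1\}^{[n]\setminus S}} \big|\bff(\rho\sqcup y) - \maj(\{\bff(\rho\sqcup y)\})\big|$, set $I = S\cap[r]$, and split on $\rho|_I$. When $\rho|_I \neq 0^{|I|}$, every $\bff(\rho\sqcup y)$ is $\bern(p)$, so Lemma~\ref{lem:unbalanced-distance-to-constant} (with $n = 2^{0.9\ell}$) gives $\E_\bff[\bz_\rho] \geq p - e^{-2^{0.9\ell}(1/2-p)^2/12}$. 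When $\rho|_I = 0^{|I|}$ and $|I| < r$, a $2^{|I|-r}$ fraction of the $y$'s have $x|_{[r]} = 0^r$ (hence $\bern(1/2)$) and the rest are $\bern(p)$; since $r - |I| \geq 1$ the average bias is bounded away from $1/2$ by a constant, so Lemma~\ref{lem:unbalanced-distance-to-constant} yields $\E_\bff[\bz_\rho] \geq p(1-2^{|I|-r}) + \tfrac12 2^{|I|-r} - e^{-2^{0.9\ell}(1/2-p)^2/12}$. When $|I| = r$ all $2^{0.9\ell}$ bits are $\bern(1/2)$ and $\E_\bff[\bz_\rho] = \Delta_{2^{0.9\ell}}$. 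Averaging over $\rho$ and using $\Delta_{2^{0.9\ell}} < 1/2$, this gives, for every $S$,
\[
\E_{\bff \sim \calD_{NO}}\big[\dist(\bff,\calJ_S)\big] \;\geq\; p(1-2^{-r}) + \Delta_{2^{0.9\ell}}\cdot 2^{-r} - e^{-2^{0.9\ell}(1/2-p)^2/12}.
\]

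It then remains to apply Hoeffding to $\dist(\bff,\calJ_S)$, which is the average of the $2^{k+\ell/10}$ independent variables $\bz_\rho \in [0,1]$: taking deviation parameter $t = 2^{2(k+\ell/10)/3}$ gives $\Pr[\dist(\bff,\calJ_S) \leq \E[\dist(\bff,\calJ_S)] - 2^{-(k+\ell/10)/3}] \leq e^{-2\cdot 2^{(k+\ell/10)/3}}$, and a union bound over the at most $(k+\ell)^{k+\ell/10}$ choices of $S$ bounds the total failure probability by $e^{-2\cdot 2^{(k+\ell/10)/3}}(k+\ell)^{k+\ell/10} = o_k(1)$, which is the claimed statement. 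I do not expect a genuine obstacle here; the only things to double-check are that the substitutions $k \mapsto k+\ell/10$ and $2^\ell \mapsto 2^{0.9\ell}$ are applied consistently, and that $0.9\ell$ is still large enough for Lemma~\ref{lem:unbalanced-distance-to-constant} and the Hoeffding estimates to be meaningful, which holds under the standing hypotheses ($\ell \leq k$ and $\ell$ sufficiently large).
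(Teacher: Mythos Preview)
Your proposal is correct and matches the paper's approach exactly: the paper simply states that the lemma follows by ``applying Lemma~\ref{lem:no-distribution-distance} with our new parameters,'' and your write-up is a faithful expansion of that, correctly identifying the two substitutions ($|S|=k+\ell/10$ so $n-|S|=0.9\ell$, and the Hoeffding/union-bound step with $2^{k+\ell/10}$ many $\bz_\rho$'s). Nothing further is needed.
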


Finally, we have that
\begin{lemma}
\label{lem:weak-gap-d-yes-distance} 	
Suppose that $\ell \geq 2$, then
	\[\Pr_{\bff \sim \calD_{YES}} \left[ \dist(\bff, \calJ_{k}) \geq p (1-2^{-r}) + \Delta_{|B_\ell(d)|} \cdot 2^{-r} + e^{-2^\ell \cdot (1/2 - p)^2/12} + 2^{-k/3} \right] = o_k(1)\]
where $\calJ_{k}$ denotes the class of $k$-juntas.
\end{lemma}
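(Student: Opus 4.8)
The plan is to follow the proof of Lemma~\ref{lem:d-yes-distance} almost verbatim, the only substantive change being the analysis of the structured part of the cube under the new $\calD_{YES}$. As before, the natural candidate is the junta supported on $[n]\setminus\bJ$, which has $n-\ell=k$ relevant coordinates, so that $\calJ_{[n]\setminus\bJ}\subseteq\calJ_k$ and hence $\dist(\bff,\calJ_k)\le\dist(\bff,\calJ_{[n]\setminus\bJ})$. It therefore suffices to upper bound $\dist(\bff,\calJ_{[n]\setminus\bJ})$, and---unlike in Lemma~\ref{lem:no-distribution-distance}, where a lower bound forced a union bound over all size-$k$ supports---no union bound is needed here.

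First I would write, conditioning on $\bJ$, $\dist(\bff,\calJ_{[n]\setminus\bJ}) = \frac{1}{2^k}\sum_{\rho\in\{0,1\}^{[n]\setminus\bJ}}\bz_\rho$ with $\bz_\rho = \frac{1}{2^\ell}\sum_{y\in\{0,1\}^\bJ}\big|\bff(\rho\sqcup y) - \maj(\{\bff(\rho\sqcup y'):y'\in\{0,1\}^\bJ\})\big|$, and record that $\bz_\rho = \min(W_\rho, 2^\ell-W_\rho)/2^\ell$ where $W_\rho$ counts the ones among the $\bff(\rho\sqcup y)$. For $\rho$ with $\rho|_{[r]}\ne 0^r$---a $1-2^{-r}$ fraction of all $\rho$, since $\bJ\subseteq\{r+1,\dots,n\}$---the bits $\bff(\rho\sqcup y)$ are i.i.d.\ $\bern(p)$, so $\E[\bz_\rho]\le\E[W_\rho]/2^\ell=p$. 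For $\rho$ with $\rho|_{[r]}=0^r$ the construction gives $\bff(\rho\sqcup y)=\by^\rho_{\chi(y)}$; setting $n_c=|\chi^{-1}(c)|$, $\lambda_c=n_c/2^\ell$ (so $\sum_c\lambda_c=1$), and $\bX_c=2\by^\rho_c-1\in\{-1,1\}$, a short calculation shows $\bz_\rho=\tfrac12-\tfrac12\big|\sum_{c=1}^{|B_\ell(d)|}\lambda_c\bX_c\big|$. Applying Lemma~\ref{lem:random-bucket-expectation} with $|B_\ell(d)|$ buckets (assigning weight $0$ to any unused color) and the identity $\E\big|\sum_{i=1}^N\bX_i\big|=N(1-2\Delta_N)$, which follows directly from the definition of $\Delta_N$, gives $\E\big|\sum_c\lambda_c\bX_c\big|\ge 1-2\Delta_{|B_\ell(d)|}$ and hence $\E[\bz_\rho]\le\Delta_{|B_\ell(d)|}$. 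Averaging over $\rho$ then yields, for every $\bJ$, $\E_{\bff}[\dist(\bff,\calJ_{[n]\setminus\bJ})\mid\bJ]\le p(1-2^{-r})+\Delta_{|B_\ell(d)|}\cdot 2^{-r}$.

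Next I would upgrade this to a high-probability statement exactly as in Lemma~\ref{lem:no-distribution-distance}: conditioned on $\bJ$, the $2^k$ variables $\{\bz_\rho\}$ are independent (each is a function of a disjoint block of the underlying $\bern(p)$/$\bern(\tfrac12)$ bits) and lie in $[0,1]$, so Hoeffding's inequality with deviation $2^{-k}t$ and $t=2^{2k/3}$ gives $\Pr[\dist(\bff,\calJ_{[n]\setminus\bJ})\ge\E[\dist]+2^{-k/3}]\le e^{-2\cdot 2^{k/3}}=o_k(1)$, uniformly in $\bJ$. Since $\dist(\bff,\calJ_k)\le\dist(\bff,\calJ_{[n]\setminus\bJ})$ and the threshold in the statement exceeds $p(1-2^{-r})+\Delta_{|B_\ell(d)|}\cdot 2^{-r}+2^{-k/3}$ by the nonnegative quantity $e^{-2^\ell(1/2-p)^2/12}$ (which can simply be absorbed, or recovered directly by invoking Lemma~\ref{lem:unbalanced-distance-to-constant} in place of the crude bound $\E[\bz_\rho]\le p$ in the first case), the claimed bound follows.

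I do not expect a serious obstacle. The one mildly non-routine point is recognizing that, on the structured region, the per-$\rho$ distance to the majority equals $\tfrac12-\tfrac12\big|\sum_c\lambda_c\bX_c\big|$ with the color-class sizes serving as weights, so that Lemma~\ref{lem:random-bucket-expectation} collapses the estimate to the uniform-weight quantity $\Delta_{|B_\ell(d)|}$; monotonicity of $\Delta_t$ (Lemma~\ref{lem:delta_properties}) handles any colors that happen to be unused. The inclusion $\calJ_{[n]\setminus\bJ}\subseteq\calJ_k$ and the independence needed for Hoeffding are both immediate from the construction.
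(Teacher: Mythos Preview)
Your proposal is correct and follows essentially the same approach as the paper's proof: bound $\dist(\bff,\calJ_{[n]\setminus\bJ})$, split into $\rho|_{[r]}\ne 0^r$ versus $\rho|_{[r]}=0^r$, and on the structured part rewrite $\bz_\rho=\tfrac12-\tfrac12\big|\sum_c\lambda_c\bX_c\big|$ and invoke Lemma~\ref{lem:random-bucket-expectation} to obtain $\Delta_{|B_\ell(d)|}$, then finish with Hoeffding. Your treatment of the unstructured part via the crude bound $\E[\bz_\rho]\le\E[W_\rho]/2^\ell=p$ is in fact slightly cleaner than the paper's use of Lemma~\ref{lem:unbalanced-distance-to-constant} there, but the argument is otherwise the same.
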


Combining these all together and setting parameters now gives us the theorem:

\begin{proof}[Proof of Theorem \ref{thm:weak-gap-tolerant-lower-bound}]
	It only remains to set the parameters so any $(k, \eps_1, \eps_2)$ tolerant tester is forced to distinguish the two distributions. We let $\ell = 10 \lfloor -\log(1-\eps_1 \eps_2^{-1}) \rfloor$. We then take $r$ to be the smallest integer such that
		\[\Delta_{2^{0.9\ell}} 2^{-r} \leq \eps_2 + 2^{-(k + 0.1 \ell)/3} + e^{-2^{0.9\ell} \cdot (0.009)^2/12} \]
	and set $p$ to satisfy
		\[ \eps_2 + 2^{-(k + 0.1 \ell)/3} + e^{-2^{0.9\ell} \cdot (0.009)^2/12} = p (1-2^{-r}) + \Delta_{2^{0.9\ell}} \cdot 2^{-r} .\]
	We assume that $\ell$ and $k$ are sufficiently large such that $2^{-(k + 0.1 \ell)/3} \leq .005$, $e^{-2^{0.9 \ell} \cdot (0.009)^2/12} \leq .005$, and $\Delta_{2^{0.9\ell}} \geq .491$. Note that these together imply that $r \geq 0$ and $0 \leq p \leq .491$. Moreover, note that by the minimality of $r$, we have that
		\[\Delta_{2^{0.9\ell}} 2^{-r + 1} \geq \eps_2.\] 
	which implies that $2^{-r} \geq \eps_2$, yielding $r = O(1)$ and thus that $r < k$ as promised.
	
	Finally, we set $d$ to be the largest integer such that
	 	\[|B_{\ell}(d)| \leq 2^{0.1 \ell}.\]
	This clearly implies that $d = \Theta(\ell)$ as promised.
	
	 To see that a $(k, \eps_1, \eps_2)$ tolerant tester must distinguish $\calD_{YES}$ and $\calD_{NO}$, first observe that by Lemma \ref{lem:weak-gap-no-distribution-distance} and our choice of parameters, we have that functions in $\calD_{NO}$ are at least $\eps_2$-far from the set of $(k + \ell/10)$-juntas with high probability. On the other hand, by Lemma \ref{lem:weak-gap-d-yes-distance} and Lemma 
	 \ref{lem:delta_properties}, functions from $\calD_{YES}$ are with high probability at most
	 	\[p (1-2^{-r}) + \Delta_{2^{0.1\ell}} \cdot 2^{-r} + e^{-2^\ell \cdot (.009)^2/12} + 2^{-k/3}\]
	 close to some $k$-junta. We'll show this is at most $\eps_1$ under our choice of parameters. Note this is at equal to
	 	\[\eps_2 + \Delta_{2^{0.1\ell}} \cdot 2^{-r} - \Delta_{2^{0.9\ell}} \cdot 2^{-r} + e^{-2^\ell \cdot (.009)^2/12} + e^{-2^{0.9\ell} \cdot (0.009)^2/12} + 2^{-k/3} + 2^{-(k + 0.1 \ell)/3}\]
	Again by Lemma \ref{lem:delta_properties}, we have that if $\eps_2 - \eps_1$ is sufficiently small
	 	\[\Delta_{2^{0.1\ell}} \cdot 2^{-r} - \Delta_{2^{0.9\ell}} \cdot 2^{-r} \leq 10 \sqrt{2^{-0.9\ell}} - \frac{1}{10} \sqrt{2^{-0.1\ell}} \leq - \frac{1}{20} \sqrt{2^{-0.1\ell}} \]
	Combining this with our hypothesis that $\eps_2 - \eps_1 \geq 2^{k/6}$ and the fact that $2^{-r} \geq \eps_2$, we get that the distance of functions from $\calD_{YES}$ is at most 
	 	\[\eps_2 - \frac{1}{20} \sqrt{2^{-0.1\ell}} \cdot \eps_2 + 4(\eps_1 - \eps_2)^2 \leq \eps_2 - \frac{1}{20} (1 - \eps_1 \eps_2^{-1})^{\frac{1}{2}} \eps_2 + (\eps_2 - \eps_1)^2 \leq \eps_2 - 2 (1 - \eps_1 \eps_2^{-1}) \eps_2 + (\eps_2 - \eps_1)^2 < \eps_1\]
	 again assuming $\eps_2 - \eps_1$ is sufficiently small. The proof is now complete since	
	 	\[\ell/10 \geq -\log(\eps_2 - \eps_1) + \log(\eps_2) - 1 \geq -\log(\eps_2 - \eps_1)/20.\]
	since $\eps_2 \geq 0.01$. 
\end{proof}

% !TEX root =  ../TolerantJuntaLowerBound.tex

\section{Barriers to Stronger Lower Bounds}

There are several interesting open questions raised by this work.
Can we further improve the lower bound for tolerant junta testing? 
For the relaxed model, can we rule out $(k,2k,\eps_1,\eps_2)$-testers
  that make $\text{poly}(k,(\eps_2-\eps_1)^{-1})$ queries? Additionally, can we use our techniques to prove lower bounds for testing monotonicity (tolerantly)? 

We conclude with some limitations of our approach to getting stronger bounds. Optimistically, one may hope that asking for smaller than $(2^\ell - 1)$-wise independence may give better bounds, but this will not work naively even if we change $\calD_{NO}$. Indeed, suppose we have a distribution $\calD_{NO}$ over boolean functions on $\{0,1\}^n$ with $n := k + \ell$. Let $\calD_{YES}^J$ for a subset $J \subseteq [n]$ of size $\ell$ also be a distribution over boolean functions on $\{0,1\}^n$. Moreover, assume that when $\bff \sim \calD_{YES}^J$ and $\bgg \sim \calD_{NO}$ we have that for any $\rho \in \{0,1\}^{[n] \setminus J}$ and any $m$ points $y^{(1)}, ..., y^{(m)} \in \{0,1\}^J$, $\bff(\rho \sqcup y^{(1)}), ..., \bff(\rho \sqcup y^{(m)})$ and $\bgg(\rho \sqcup y^{(1)}), ..., \bgg(\rho \sqcup y^{(m)})$ are identically distributed. 

It turns out that under these assumptions, if we follow our proof strategy naively we cannot prove better lower bounds. To see this, we will need the following lemma.

\begin{lemma}
\label{lem:k-wise-limit}
Let $\calD_1$ and $\calD_2$ be distributions over boolean functions $f: \{0,1\}^\ell \rightarrow \{0,1\}$. Moreover, suppose that for any $\bx^{(1)}, ..., \bx^{(m)} \in \{0,1\}^\ell$ and $y_1, .., y_m \in \{0,1\}$ we have that
	\[\Pr_{\bff \sim \calD_1} \left[\forall i \quad \bff(x_i) = y_i \right] = \Pr_{\bgg \sim \calD_2} \left[\forall i \quad  \bgg(x_i) = y_i \right] \]
Then 
	\[\bigg| \E_{\bff \sim \calD_1} \left [ \dist(\bff, \{0,1\}) \right] - \E_{\bgg \sim \calD_2} \left[ \dist(\bgg, \{0,1\}) \right]  \bigg| \leq O \left(\sqrt{\frac{\log(m)}{m}} \right). \]
where $\dist(f,\{0,1\}) := \min \{ \Pr_{\bx \sim \{0,1\}^\ell}[f(x) = 0], \Pr_{\bx \sim \{0,1\}^\ell}[f(x) = 1]\}$.
\end{lemma}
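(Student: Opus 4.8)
The plan is to exploit the $m$-wise agreement of $\calD_1$ and $\calD_2$ to pass from the quantity $\dist(\bff,\{0,1\})$, which depends on all $2^\ell$ values of $\bff$, to an empirical estimate computed from only $m$ evaluations, and this estimate will have \emph{exactly} the same law under the two distributions. Throughout, write $\mu_f := \Pr_{\bx\sim\{0,1\}^\ell}[f(\bx)=1]$ and $g(t):=\min(t,1-t)$, so that $g$ is $1$-Lipschitz on $[0,1]$ and $\dist(f,\{0,1\})=g(\mu_f)$.

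First I would draw $\bx^{(1)},\dots,\bx^{(m)}$ independently and uniformly from $\{0,1\}^\ell$, independently of $\bff$, and set $\wh\mu:=\frac1m\sum_{i=1}^m\bff(\bx^{(i)})$. I claim $\wh\mu$ has the same distribution under $\calD_1$ and $\calD_2$: for any fixed points $x^{(1)},\dots,x^{(m)}$, the law of the bit-tuple $(\bff(x^{(1)}),\dots,\bff(x^{(m)}))$ is determined by the numbers $\Pr_{\bff}[\forall i\ \bff(x^{(i)})=y_i]$ over $y\in\{0,1\}^m$, which the hypothesis asserts are identical for $\calD_1$ and $\calD_2$ (this remains true even if some of the $x^{(i)}$ coincide: a repeated point merely forces the corresponding $y_i$'s to agree, and both sides collapse to the same lower-order probability, or to $0$). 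Averaging over the common distribution of the random points then gives $\E_{\bff\sim\calD_1}[h(\wh\mu)]=\E_{\bff\sim\calD_2}[h(\wh\mu)]$ for any bounded $h$, and in particular for $h=g$.

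Next I would bound the bias introduced by replacing $g(\mu_{\bff})$ with $g(\wh\mu)$, working under either $\calD_i$. Conditioned on $\bff=f$, the bits $f(\bx^{(1)}),\dots,f(\bx^{(m)})$ are i.i.d.\ $\bern(\mu_f)$, since each depends only on its own independent uniform point; so Hoeffding gives $\Pr[\,|\wh\mu-\mu_f|>t\mid \bff=f\,]\le 2e^{-2mt^2}$. Taking $t=\sqrt{(\ln m)/(2m)}$ makes the failure probability at most $2/m$; on the complementary event $|g(\wh\mu)-g(\mu_{\bff})|\le|\wh\mu-\mu_{\bff}|\le t$ by Lipschitzness, and otherwise it is at most $\tfrac12$. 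Hence $\E_{\bff}\,|g(\wh\mu)-g(\mu_{\bff})|\le\sqrt{(\ln m)/(2m)}+1/m=O(\sqrt{\log m/m})$ under both $\calD_1$ and $\calD_2$. Combining this with the exact identity $\E_{\calD_1}[g(\wh\mu)]=\E_{\calD_2}[g(\wh\mu)]$ from the previous paragraph and the triangle inequality,
\[\big|\E_{\bff\sim\calD_1}[g(\mu_{\bff})]-\E_{\bff\sim\calD_2}[g(\mu_{\bff})]\big|\;=\;O\!\left(\sqrt{\tfrac{\log m}{m}}\right),\]
which is the lemma since $g(\mu_{\bff})=\dist(\bff,\{0,1\})$.

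I do not expect a genuine obstacle; the two places needing a little care are (i) checking that the hypothesis, stated for fixed query points, transfers to i.i.d.\ uniformly random query points and is unaffected by coincidences among them, and (ii) the elementary observation that the balancedness $g(\mu_f)$ is estimated from $m$ uniform samples with only $\wt O(1/\sqrt m)$ bias. In fact a Cauchy--Schwarz bound, $\E_{\bff}|\wh\mu-\mu_{\bff}|\le\big(\E[\mu_{\bff}(1-\mu_{\bff})]/m\big)^{1/2}\le\tfrac1{2\sqrt m}$, already yields the cleaner estimate $O(1/\sqrt m)$, so the stated $O(\sqrt{\log m/m})$ is comfortably safe; I would present whichever of the two variants reads more cleanly in context.
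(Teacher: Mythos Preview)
Your proposal is correct and follows essentially the same approach as the paper: both introduce $m$ i.i.d.\ uniform query points, use the $m$-wise indistinguishability hypothesis to equate the laws of the resulting empirical averages under $\calD_1$ and $\calD_2$, and then invoke a Hoeffding-type concentration bound to control the gap between the empirical estimate and $\dist(\bff,\{0,1\})$. Your framing via the $1$-Lipschitz map $g(t)=\min(t,1-t)$ is a slightly cleaner packaging of the paper's computation with $\lvert\sum_x(2\bff(x)-1)\rvert/2^\ell$, and your Cauchy--Schwarz remark correctly shows that the $\log m$ factor is in fact unnecessary.
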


We leave the proof for the appendix. That said, intuitively such a pair of distributions would give a lower bound against estimating the distance of a function $f: \{0,1\}^\ell \rightarrow \{0,1\}$ to constant, and we know that this can be done without many samples.

Now note that an algorithm that makes $n^r$ queries can query every point in a ball $B(x,r)$ for some $x \in \{0,1\}^n$. Thus, in order to avoid giving any evidence that could distinguish $\calD_{YES}^J$ and $\calD_{NO}$, we must take $m \geq \binom{\ell}{r}$. But Lemma \ref{lem:k-wise-limit}, implies that if $\bff$ and $\bgg$ are $\epsilon_1$-close and $\epsilon_2$-far from being a junta on $[n] \setminus J$ with high probability then $m \leq O((\epsilon_2 - \epsilon_1)^{-3})$. Simplifying, we see that we at best get a $n^{-\Omega(\log(\epsilon_2 - \epsilon_1))}$ lower bound.

Moreover, we can also observe that a better lower bound cannot use a uniformly random functions as $\calD_{NO}$: Namely, we can distinguish a uniformly random distribution from a distribution on functions closer than random to $k$-juntas by sampling a random $x \in \{0,1\}^n$ and querying all points within hamming distance $r$. We then check if there is a set $S \subseteq [n]$ of size $k$ such that $B(x,r) \cap \{y \in \{0,1\}^n: y_S = x_S\}$ is biased. For $r$ suitably large, we expect few biased balls under the random distribution, but for functions that are closer than random to juntas we expect to see a biased ball with reasonable probability. Formally,

\begin{lemma}
\label{lem:testing-random-functions}
Let $\calD_{YES}$ a distribution over boolean functions $f:\{0,1\}^n \rightarrow \{0,1\}^n$ such that 
		\[\Pr_{\bff \sim \calD_{YES}}[\dist(\bff, \calJ_k)\geq \frac{1}{2} - \eps] = o(1)\]
where $\eps \geq \Omega(2^{-(n-k)/10})$ and $\eps \geq 2^{-n/128}$. Take $\calD_{NO}$ to be the uniform distribution over all boolean functions. Then there exists an algorithm $\ALG$ that makes at most $n^{O(\log(1/\eps))}$ queries and distinguishes $\calD_{YES}$ and $\calD_{NO}$ with probability $2/3$.
\end{lemma}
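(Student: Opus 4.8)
The plan is to build an algorithm that looks for a \emph{biased subcube slice}: a direction $x \in \{0,1\}^n$ together with a candidate set $S \subseteq [n]$ of size $k$ such that, restricted to the Hamming ball $B(x,r)$, the points agreeing with $x$ on $S$ (i.e.\ the slice $\{y \in B(x,r): y|_S = x|_S\}$) have a noticeably biased average $f$-value. The key quantitative input is that if $\bff \sim \calD_{YES}$ then with probability $1-o(1)$ there is a $k$-junta $g$ on some set $S$ with $\dist(\bff, g) < \tfrac12 - \eps$; hence on a $1/2^k$ fraction of the slices $\{y : y|_S = c\}$, averaged appropriately, $\bff$ has bias $\Omega(\eps)$ away from $1/2$ on at least an $\Omega(\eps)$-fraction of cosets $c$ — a standard averaging argument from $\dist(\bff,g) < 1/2-\eps$. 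I would set $r = \Theta(\log(1/\eps))$ so that a ball $B(x,r)$ has $|B_n(r)| = n^{\Theta(\log(1/\eps))}$ points, which is the claimed query budget, and is large enough that a biased slice within the ball can be detected: a slice of $B(x,r)$ cut out by fixing $k$ coordinates still has roughly $\binom{n-k}{\le r}$ points, which is $\mathrm{poly}(n)^{\log(1/\eps)}$ and $\gg 1/\eps^2$, so empirical bias concentrates. The algorithm samples $x \in \{0,1\}^n$ uniformly, queries all of $B(x,r)$, and accepts iff some size-$k$ set $S$ yields a slice whose empirical $\bff$-average deviates from $1/2$ by more than a threshold $\tau = c\eps$; it is accepting $\calD_{YES}$ and rejecting $\calD_{NO}$, so to match the "distinguish with probability $2/3$" convention I output accordingly.

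The two things to check are \textbf{soundness} (uniform $\bff$ is rejected whp) and \textbf{completeness} ($\calD_{YES}$ is accepted with probability $\ge 2/3$, or at least bounded below by a constant, which suffices after the standard amplification absorbed into the $O(\cdot)$). For soundness: fix $x$ and $S$; the slice has $N := |\{y \in B(x,r): y|_S = x|_S\}|$ points, each an independent fair coin under the uniform distribution, so the empirical average is within $\tau$ of $1/2$ except with probability $\le 2e^{-2N\tau^2}$. Since $N \ge \binom{n-k}{\le r} \cdot \Omega(1)$ and $r = \Theta(\log(1/\eps))$ with the hypotheses $\eps \ge 2^{-n/128}$ and $\eps \ge \Omega(2^{-(n-k)/10})$ ensuring $N$ is polynomially large in $n$ and in particular $N \gg \eps^{-2}\log(\text{\#pairs})$, a union bound over the $\le 2^n \binom{n}{k}$ choices of $(x,S)$ — all exponential in $n$ — still gives failure probability $o(1)$ because $e^{-2N\tau^2}$ beats $2^{O(n)}$ once $N \ge \Omega(n/\eps^2)$; this is exactly where the lower bounds on $\eps$ are used. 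For completeness: condition on the $\calD_{YES}$-event that $\bff$ is $(\tfrac12-\eps)$-close to a $k$-junta $g$ on a set $S^\star$. An averaging argument gives an $\Omega(\eps)$-fraction of cosets $c \in \{0,1\}^{S^\star}$ on which $g$ is constant and $\Pr_{y: y|_{S^\star}=c}[\bff(y) \ne g(c)] \le \tfrac12 - \Omega(\eps)$; for a uniformly random $x$, with constant probability $x|_{S^\star}$ lands in a good coset, and then within $B(x,r)$ the slice is a uniformly random $N$-subset of that coset, so its empirical average is within $\tau$ of the true coset bias (Hoeffding again, using $N \gg \eps^{-2}$), hence deviates from $1/2$ by $\ge \Omega(\eps) - \tau \ge \tau$ for $\tau$ a small enough constant multiple of $\eps$. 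So $\ALG$ finds the witness $S^\star$ and accepts with constant probability.

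The main obstacle is making the union bound in the soundness step go through: we are union-bounding over $2^{O(n)}$ candidate pairs $(x,S)$, so we need each biased-slice false positive to have probability $2^{-\omega(n)}$, which forces the slice size $N$ to be at least $\Omega(n/\eps^2)$. This is precisely why one needs $r$ as large as $\Theta(\log(1/\eps))$ \emph{and} the stated lower bounds on $\eps$ (so that $\binom{n-k}{\le r}$ is genuinely that large and $B(x,r)$ is not so large as to blow the query bound past $n^{O(\log(1/\eps))}$). A secondary subtlety, which I would handle by a slightly more careful definition of "biased slice," is that querying $B(x,r)$ only reveals $\bff$ on the intersection of a coset with the ball, not the whole coset, so the relevant empirical average in both the analysis and the algorithm must be over $B(x,r) \cap \{y : y|_S = x|_S\}$ consistently; once this is fixed, both directions are routine Chernoff/Hoeffding estimates, and I do not expect further difficulty.
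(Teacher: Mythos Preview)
Your overall algorithmic idea matches the paper's---sample centers, query Hamming balls, and look for a size-$k$ set $S$ whose slice is biased---but there are two genuine gaps in the analysis as written.

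\textbf{Soundness does not go through.} You union bound over all $2^n\binom{n}{k}$ pairs $(x,S)$ and assert that $N := |B_{n-k}(r)| \ge \Omega(n/\eps^2)$ follows from the hypotheses. It does not: $N \le 2^{n-k}$, and the hypotheses place no upper bound on $n$ in terms of $n-k$ or $\eps$. Take e.g.\ $n-k = 10\log(1/\eps)$ (so $\eps \ge \Omega(2^{-(n-k)/10})$ is tight) and $n$ astronomically large; then $N \le (1/\eps)^{10}$ while $n/\eps^2$ is unbounded. Dropping the needless union over $x$ and union-bounding only over $\binom{n}{k}$ sets helps, but still requires $N\eps^2 \gtrsim \log\binom{n}{k}$, which again fails when $n$ is huge relative to $n-k$. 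The paper avoids this entirely by sampling $m = \Theta(n^2/\eps^2)$ centers and thresholding on the \emph{fraction} of centers with a biased $S$-slice: because far-apart balls see disjoint (hence independent under $\calD_{NO}$) portions of $\bff$, the indicators concentrate with error $e^{-\Omega(m\eps^2)} = e^{-\Omega(n^2)}$ \emph{per set} $S$, which survives the union over $\binom{n}{k}\le n^n$ sets. This repeated-sampling-and-thresholding step is the missing idea.

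\textbf{Completeness uses Hoeffding where it does not apply.} Once you condition on the $\calD_{YES}$ event, $\bff$ is a fixed function; the slice $B(x,r)\cap\{y:y|_{S^\star}=x|_{S^\star}\}$ is a deterministic (not random) subset of the coset, so Hoeffding cannot certify that its empirical average tracks the coset bias. The correct argument (which the paper uses) is double averaging: since
\[
\frac{1}{2^{n-k}}\sum_{y}\frac{1}{|B_{n-k}(r)|}\sum_{z\in B(y,r)}(2f(\rho\sqcup z)-1)=\frac{1}{2^{n-k}}\sum_{y}(2f(\rho\sqcup y)-1),
\]
an $\Omega(\eps)$-fraction of ball centers $y$ within a biased coset $\rho$ have ball-bias $\ge \eps/2$. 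Combined with an $\Omega(\eps)$-fraction of biased cosets, a random center succeeds with probability $\Omega(\eps^2)$---not ``constant'' as you wrote---which is why the paper's many-sample scheme is also needed on the completeness side.
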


We again leave the proof for the appendix. Note that this rules out the possibility of a better lower bound with $n = \poly(k)$. When $n = k^{\omega(1)}$, we remark that the techniques of \cite{iyer2021junta} are likely able to remove our testers dependence on $n$ and give a $(k/\eps^2)^{O(\log(1/\eps))}$ query bound.

Thus, further improvements must move beyond $m$-wise independence and random functions. A promising way of circumventing both these barriers would be to consider using distributions $\calD_1$ and $\calD_2$ over functions $f:\{0,1\}^\ell \rightarrow \{0,1\}$ that are only identically distributed on balls $B(x,r)$ for $x \in \{0,1\}^\ell$ rather than distributed identically for any $m$ points. Indeed, \cite{pallavoor2022approximating} shows that there are distributions $\calD_1$ and $\calD_2$ over functions $f: \{0,1\}^n \rightarrow \{0,1\}$ with functions in $\calD_2$ being $\Omega(1)$ closer (in expectation) to constant than those from $\calD_1$ and such that $\calD_1$ and $\calD_2$ are identically distributed along any ball $B(x,r)$ of radius $O(\sqrt{n})$.

\bibliographystyle{alpha}
\bibliography{references}

\appendix
\section{Missing Proofs from Section 3}

\begin{proof}[Proof of Lemma \ref{lem:lifting}] 
	For simplicity of notation, we'll assume $f: \{ \pm 1\}^n \rightarrow \{ \pm 1\}$. We'll show that $\dist(F,\calJ_{kb}) \leq \dist(f, \calJ_k)$ and $\dist(F, \calJ_{kb + b -1}) \geq \dist(f, \calJ_k)$. Since $\dist(F, \calJ_{kb}) \geq \dist(F, \calJ_{kb + b - 1})$ the result follows.
	
	We start by showing $\dist(F,\calJ_{kb}) \leq \dist(f, \calJ_k)$. Indeed, let $S \subseteq \{\pm 1\}^n$ be a minimum set of changes needed to make $f$ into a $k$-junta. Let
		\[ m(x) = \left ( \bigoplus_{i = 1 ... b} x_i, \bigoplus_{i = b + 1, ..., 2b} x_i, \hdots, \bigoplus_{i = nb - b + 1, ..., nb} x_i \right)\]
	To get a $kb$-junta, it clearly suffices to change the values of $F$ under $m^{-1}(S)$. Since $m^{-1}(S) = |S| 2^{nb}/ 2^n$, we conclude $\dist(F,\calJ_{kb}) \leq \dist(f, \calJ_k)$.

	We now claim that $\dist(F, \calJ_{kb + b -1}) \geq \dist(f,\calJ_k)$. Indeed, fix a set $J \subseteq [nb]$ of size $kb + b - 1$. Without loss of generality, we let $\ell$ be the largest integer such that $[0, \ell b] \subseteq J$ and assume every subsequent block is missing at least one element. Note that fixing a prefix $\rho \in \{\pm 1\}^J$ will then fix the first $\ell$ bits of $m(x)$ to $z^\rho \in \{\pm 1\}^\ell$. We can then compute
		\begin{align*}
			\dist(g, \calJ_J) &= \frac{1}{2^{kb + b - 1}} \sum_{\rho \in \{\pm 1\}^J} \frac{1}{2} - \frac{1}{2^{n b - kb - b + 2} }\left| \sum_{y \in \{\pm 1\}^{[n b] \setminus J}} g(\rho \sqcup y) \right|	 \\
			&= \frac{1}{2^{kb + b - 1}} \sum_{\rho \in \{\pm 1\}^J} \frac{1}{2} - \frac{1}{2^{n b - kb - b + 2} } \cdot \frac{2^{nb-kb - b + \ell + 1}}{2^n} \left| \sum_{b \in \{\pm 1\}^{[n] \setminus [\ell]}} f(z^\rho \sqcup b) \right| \\
			&= \frac{1}{2^{kb + b - 1}} \frac{2^{kb+b - 1}}{2^\ell} \sum_{a \in \{\pm 1\}^\ell}  \frac{1}{2} - \frac{2^{\ell}}{2^{n+1}} \left| \sum_{b \in \{\pm 1\}^{[n] \setminus [\ell]}} f(a \sqcup b) \right| \\
			&= \frac{1}{2^\ell} \sum_{a \in \{\pm 1\}^\ell}  \frac{1}{2} - \frac{1}{2^{n - \ell +1}} \left| \sum_{b \in \{\pm 1\}^{[n] \setminus [\ell]}} f(a \sqcup b) \right| = \dist(f, \calJ_{[\ell]}) \\			\end{align*}
	
	The result now follows since $\ell$ is at most $k$.
\end{proof}

\begin{proof}[Proof of Lemma \ref{lem:radial-coloring}]
If we make a graph $G$ with vertices $\{0,1\}^n$ and an edge between $x,y$ with $\dist(x,y) \leq d$. Note every vertex in $G$ has degree at most $B_n(d) - 1$. It then follows by a greedy coloring that we need at most $|B_n(d)|$ colors.
\end{proof}

\begin{proof}[Proof of Lemma \ref{lem:random-bucket-expectation}]
	Without loss of generality suppose that $\lambda_1 > \lambda_2$. We claim that 
		\[\E \left [ \left |\sum_i \lambda_i \bX_i \right | \right] \geq \E \left [ \left |\frac{\lambda_1 + \lambda_2}{2} (\bX_1 + \bX_2) + \sum_{i = 3}^n \bX_i \right | \right]. \]
	Let $\overline{\lambda} = \frac{\lambda_1 + \lambda_2}{2}$. Now note
		\begin{align*}
			 \E & \left [ \left |\sum_{i=1}^n \lambda_i \bX_i \right | \right] - \E \left [ \left |\overline{\lambda} (\bX_1 + \bX_2) + \sum_{i=3}^n \lambda_i \bX_i \right | \right] \\
			 &= \E_{\bX_3, \hdots  \bX_n} \left [ \E_{\bX_1, \bX_2} \left[ \left |\sum_{i=1}^n \lambda_i \bX_i \right | - \left| \overline{\lambda} (\bX_1 + \bX_2) + \sum_{i = 3}^n \lambda_i \bX_i \right| \bigg| \bX_3, ..., \bX_n \right ] \right ] \\
			&=  \frac{1}{4} \E_{\bX_3, \hdots  \bX_n} \left [ \left |\lambda_1 - \lambda_2 + \sum_{i=3}^n \lambda_i \bX_i \right | + \left |\lambda_2 - \lambda_1 + \sum_{i=3}^n \lambda_i \bX_i \right | - 2 \left| \sum_{i = 3}^n \lambda_i \bX_i \right| \right ] \\
		\end{align*}
	which is non-negative by Jensen's inequality. The lemma now follows by a limiting argument.
\end{proof}

\begin{proof}[Proof of Lemma \ref{lem:weak-gap-query-lower-bound}]
Towards a contradiction, suppose $\ALG$ distinguishes the distributions and makes at most $Q = \frac{1}{10} (k/\ell)^{d/2}$ queries. Call a pair of points $x,y \in \{0,1\}^n$ bad if $x|_{[n] \setminus \bJ} = y|_{[n] \setminus \bJ}$ and $\dist(x,y) \geq d$. Since $\ALG$ is deterministic, it corresponds to a decision tree. But now observe that for a particular path $p$ of the decision tree, we have that
			\[\Pr_{\bff \sim \calD_{NO}}[\ALG(\bff) \text{ follows } p] = \Pr_{\bff, \bJ \sim \calD_{YES}}[\ALG(\bff) \text{ follows } p |  p \text{ doesn't query a bad pair} ]. \]
		by Lemma \ref{lem:weak-gap-almost-same-distribution}. We then conclude that
		\[\Pr_{\bff \sim \calD_{NO}}[\ALG(\bff) \text{ follows } p] \leq \frac{\Pr_{\bff \sim \calD_{YES}}[\ALG(\bff) \text{ follows } p]}{\Pr_{\bff, \bJ \sim \calD_{YES}} [p \text{ doesn't query a bad pair}]}. \]
		Now observe		
			\[ \Pr_{\bff, \bJ \sim \calD_{YES}} [p \text{ doesn't query a bad pair}] \geq 1 - Q^2 \cdot (\ell/k)^d \geq .99. \]
		Thus
			\[ \Pr_{\bff \sim \calD_{NO}}[\ALG(\bff) \text{ follows } p] \leq 1.02 \cdot \Pr_{\bff \sim \calD_{YES}}[\ALG(\bff) \text{ follows } p] \]
		 Summing over all rejecting paths we conclude that
			\[\Pr_{\bff \sim \calD_{NO}}[\ALG(\bff) \text{ rejects} ] \leq 1.02 \cdot \Pr_{\bff \sim \calD_{YES}}[\ALG(\bff) \text{ rejects}] < 1/2, \]
		a contradiction. So any tester must make $\Omega((k/\ell)^{d/2})$ queries as claimed.	
\end{proof}

\begin{proof}[Proof of Lemma \ref{lem:weak-gap-d-yes-distance}]
	We claim that $\bff$ is close to a junta on $[n] \setminus \bJ$:
	\begin{align*}
		&\hspace{-0.4cm} \E_{\bff, \bJ \sim \calD_{YES}}[\dist(\bff, \calJ_{[n] \setminus \bJ})]\\ &= \frac{1}{2^k} \E_{\bff, \bJ \sim \calD_{YES}} \left[ \sum_{\rho \in \{0,1\}^{[n] \setminus \bJ}}  \sum_{y \in \{0,1\}^{\bJ}} \frac{1}{2^\ell} \left|\bff(\rho \sqcup y) - \maj(\{\bff(\rho \sqcup y): y \in \{0,1\}^{\bJ} \}) \right| \right]	\\
		& \leq p (1-2^{-r}) + e^{-2^\ell \cdot (1/2 - p)^2/12} + \frac{1}{2^k} \E_{\bff, \bJ \sim \calD_{YES}} \left[ \sum_{\substack{\rho \in \{0,1\}^{[n] \setminus \bJ} \\ \rho|_{[r]} = 0^r}} \sum_{y \in \{0,1\}^{\bJ}} \frac{1}{2^\ell} \left|\bff(\rho \sqcup y) - \maj(\{\bff(\rho \sqcup y): y \in \{0,1\}^{\bJ} \}) \right| \right]
	\end{align*}
	by Lemma \ref{lem:unbalanced-distance-to-constant}. To bound the second term, let $\lambda_i = |\chi^{-1}(i)|/2^\ell$ and observe that for a fixed $\rho$ with $\rho|_{[r]} = 0^r$ we have that
	\begin{align*}
	\E_{\bff, \bJ \sim \calD_{YES}} &\left[ \sum_{y \in \{0,1\}^{\bJ}} \frac{1}{2^\ell} \left|\bff(\rho \sqcup y) - \maj(\{\bff(\rho \sqcup y): y \in \{0,1\}^{\bJ} \}) \right| \right] \\
		&= \frac{1}{2} - \frac{1}{2} \E_{\bX_i \sim \{-1,1\}} \left[ \left| \sum_{i = 1}^{|B_\ell(d)|} \lambda_i \bX_i \right | \right] \leq \frac{1}{2} - \frac{1}{2} \E_{\bX_i \sim \{-1,1\}} \left[ \left| \sum_{i = 1}^{|B_\ell(d)|} \frac{1}{|B_\ell(d)|} \bX_i \right| \right] = \Delta_{|B_\ell(d)|}
	\end{align*}
	by Lemma \ref{lem:random-bucket-expectation}. Thus,
		\[\E_{\bff, \bJ \sim \calD_{YES}}[\dist(\bff, \calJ_{[n] \setminus \bJ})] \leq p (1-2^{-r}) + e^{-2^\ell \cdot (1/2 - p)^2/12} + \Delta_{|B_\ell(d)|} 2^{-r} \]
	as desired.
\end{proof}

\section{Missing Proofs from Section 4}
\begin{proof}[Proof of Lemma \ref{lem:k-wise-limit}]
	Fix some function $h: \{0,1\}^\ell \rightarrow \{\pm 1\}$ and choose $\bx^{(1)}, ..., \bx^{(m)} \in \{0,1\}^\ell$ uniformly and independently at random. Note that
		\[ \E_{\bx^{(i)}} [h(\bx^{(i)})] = \frac{\sum_{x \in \{0,1\}^\ell} h(x) }{2^\ell}. \]
	Moreover, since $h(\bx^{(1)}), ..., h(\bx^{(m)})$ are independent Chernoff-Hoeffding bounds give
	
		\[\Pr_{\bx^{(1)}, ..., \bx^{(m)}} \left[ \left| \frac{1}{m} \sum_{i=1}^m h(\bx^{(i)}) - \frac{\sum_{x \in \{0,1\}^\ell} h(x) }{2^\ell} \right| \geq \eps \right] \leq 2e^{- \Omega(\eps^2 m)} \]
	So we conclude that 
		\[\left | \E_{\bx^{(1)}, ..., \bx^{(m)}} \left [ \left| \frac{1}{m} \sum_{i=1}^m h(\bx^{(i)}) \right | - \frac{\left | \sum_{x \in \{0,1\}^\ell} h(x) \right |}{2^\ell}  \right ] \right| \leq \eps + 4e^{- \Omega(\eps^2 m)}. \]
	It now follows that
		\[\E_{\bff \sim \calD_1} \left[ \left | \E_{\bx^{(1)}, ..., \bx^{(m)}} \left [ \left| \frac{1}{m} \sum_{i=1}^m (2\bff(\bx^{(i)}) -1 ) \right | - \frac{\left | \sum_{x \in \{0,1\}^\ell} (2\bff(x) - 1) \right |}{2^\ell}  \right ] \right| \right] \leq \eps + 4e^{- \Omega(\eps^2 m)}. \]
	By Jensen's inequality,
		\[\left | \E_{\bff \sim \calD_1, \bx^{(1)}, ..., \bx^{(m)}} \left[  \left| \frac{1}{m} \sum_{i=1}^m (2\bff(\bx^{(i)})-1) \right | \right] - \E_{\bff \sim \calD_1} \left[ \frac{\left | \sum_{x \in \{0,1\}^\ell} (2\bff(x)-1) \right |}{2^\ell}  \right] \right| \leq \eps + 4e^{- \Omega(\eps^2 m)}. \]		

	Analogously, we have that
		\[\left | \E_{\bgg \sim \calD_2, \bx^{(1)}, ..., \bx^{(m)}} \left[  \left| \frac{1}{m} \sum_{i=1}^m (2\bgg(\bx^{(i)})-1) \right | \right] - \E_{\bgg \sim \calD_2} \left[ \frac{\left | \sum_{x \in \{0,1\}^\ell} (2\bgg(x)-1) \right |}{2^\ell}  \right] \right| \leq \eps + 4e^{- \Omega(\eps^2 m)}. \]		
	But now as 
		\[\E_{\bff \sim \calD_1, \bx^{(1)}, ..., \bx^{(m)}} \left[  \left| \frac{1}{m} \sum_i (2\bff(\bx^{(i)})-1) \right | \right] = \E_{\bgg \sim \calD_2, \bx^{(1)}, ..., \bx^{(m)}} \left[  \left| \frac{1}{m} \sum_i (2\bgg(\bx^{(i)})-1) \right | \right]\]
	we conclude that
	\[\left| \E_{\bff \sim \calD_1} \left [ \frac{\left| \sum_{x \in \{0,1\}^\ell} (2\bff(x) - 1) \right|}{2^\ell} \right] - \E_{\bgg \sim \calD_2} \left[ \frac{\left| \sum_{x \in \{0,1\}^\ell} (2\bgg(x) - 1)\right|}{2^\ell} \right]  \right| \leq 2 \eps + 8 e^{- \Omega(\eps^2 m)}. \]
	Finally, note that for a function $f:\{0,1\}^\ell \rightarrow \{0,1\}$
		\[\dist(f, \{0,1\}) = \frac{1}{2} - \frac{1}{2} \cdot \frac{\left| \sum_{x \in \{0,1\}^\ell} (2f(x) - 1) \right|}{2^\ell}. \]
	Thus
		\[\bigg| \E_{\bff \sim \calD_1} \left [\dist(\bff, \{0,1\}) \right] - \E_{\bgg \sim \calD_2} \left[\ \dist(\bgg, \{0,1\}) \right]  \bigg| \leq \eps + 4e^{- \Omega(\eps^2 m)}. \]
Taking $\eps = \Theta \left( \sqrt{\frac{\log(m)}{m}} \right)$ then gives the result.
\end{proof}

\begin{proof}[Proof of Lemma \ref{lem:testing-random-functions}]
We consider the following algorithm, $\ALG$: Set $r = \log(8/\eps^2) + \log\log(32/\eps)$. We'll assume that $\eps \geq 1000 \cdot 2^{-(n-k)/10}$, which implies $r \leq n-k$. Sample $m := 1024n^2/\eps^2$ random points $\bx^{(1)}, ..., \bx^{(m)}$ and query every point in a ball of radius $r$ around each $\bx^{(i)}$. For each set $S \subseteq [n]$ of size $k$, compute 
	\[e_S(\bx^{(i)}) := \frac{1}{|B_{n-k}(r)|} \left| \sum_{z \in B(\bx^{(i)},r) \cap \{y \in \{0,1\}^n: y_S = \bx_S^{(i)} \} } (2\bff(z) - 1) \right| \]
	If for some set $S$, 
	\[ \frac{1}{m} \sum_{i = 1}^m \mathbb{I}(e_S(\bx^{(i)}) > \eps/2) \geq \eps^2/8\]
	then accept. Otherwise, reject.
	
	\begin{claim}
		\[\Pr_{\bff \sim \calD_{YES}}[\ALG(\bff) \text{ accepts}] = 1 - o(1) \]	
	\end{claim}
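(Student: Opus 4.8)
The plan is to show that, with probability $1-o(1)$ over $\bff\sim\calD_{YES}$, the ``accept'' condition is already triggered by the distinguished set $S$ that witnesses $\bff$ being close to a $k$-junta. By hypothesis, with probability $1-o(1)$ there is a set $S\subseteq[n]$ with $|S|=k$ and $\dist(\bff,\calJ_S)\le \frac{1}{2}-\eps$; I would condition on this event and fix such an $S$. Writing the distance as an average over subcubes, $\dist(\bff,\calJ_S)=\E_{\rho\sim\{0,1\}^S}\big[\frac{1}{2}-\frac{1}{2}\beta(\rho)\big]$, where $\beta(\rho):=\big|\E_{w\sim\{0,1\}^{n-k}}[2\bff(\rho\sqcup w)-1]\big|$ is the bias of $\bff$ on the subcube $\{y:y_S=\rho\}$, the assumption gives $\E_\rho[\beta(\rho)]\ge 2\eps$, and since $\beta(\rho)\in[0,1]$ a standard averaging bound yields $\Pr_\rho[\beta(\rho)\ge\eps]\ge\eps$. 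So at least an $\eps$-fraction of the $2^k$ subcubes are ``biased''.

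First I would analyze a single sample $\bx^{(i)}$. Fix a biased subcube $\rho$ and let $h\colon\{0,1\}^{n-k}\to\{\pm1\}$ be the restriction $w\mapsto 2\bff(\rho\sqcup w)-1$, so that $|\E_w h(w)|\ge\eps$. Consider the ball-averaging operator $g(w):=\frac{1}{|B_{n-k}(r)|}\sum_{z\in B_{n-k}(w,r)}h(z)$. The key point is that this operator preserves the mean, $\E_w[g(w)]=\E_w[h(w)]$, because every point of $\{0,1\}^{n-k}$ lies in exactly $|B_{n-k}(r)|$ balls of radius $r$ (here $r\le n-k$ by the stated choice of $r$ together with $\eps\ge 1000\cdot 2^{-(n-k)/10}$). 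Hence $\E_w[|g(w)|]\ge|\E_w[g(w)]|\ge\eps$ by Jensen, and since $|g|\le1$ another averaging bound gives $\Pr_w[|g(w)|>\eps/2]\ge\eps/2$. Noting that, conditioned on $\bx^{(i)}$ landing in subcube $\rho$, its coordinates outside $S$ are uniform and $e_S(\bx^{(i)})=|g(\bx^{(i)}|_{[n]\setminus S})|$, I obtain $\Pr[\,e_S(\bx^{(i)})>\eps/2\mid\bx^{(i)}\in\text{subcube }\rho\,]\ge\eps/2$ for every biased $\rho$, and therefore $\Pr_{\bx^{(i)}}[e_S(\bx^{(i)})>\eps/2]\ge\Pr_\rho[\rho\text{ biased}]\cdot\frac{\eps}{2}\ge\eps^2/2$.

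Next I would aggregate over the $m$ i.i.d.\ samples. The indicators $\mathbb{I}(e_S(\bx^{(i)})>\eps/2)$ are i.i.d.\ Bernoulli with mean $p\ge\eps^2/2$, so their sum has mean $mp\ge 512\,n^2$ for $m=1024\,n^2/\eps^2$. Since $\eps^2/8\le p/4$, a multiplicative Chernoff bound gives $\Pr\big[\frac{1}{m}\sum_i\mathbb{I}(e_S(\bx^{(i)})>\eps/2)<\eps^2/8\big]\le\exp(-\Omega(mp))=\exp(-\Omega(n^2))=o(1)$. On the complement of this event the algorithm accepts, because the accept condition holds for the fixed good $S$. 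Removing the conditioning and adding back the $o(1)$ failure probability of the hypothesis then yields $\Pr_{\bff\sim\calD_{YES}}[\ALG(\bff)\text{ accepts}]=1-o(1)$, as claimed.

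The main obstacle is not any individual inequality but getting the quantitative pieces to line up: since $\eps$ may be exponentially small in $n$, one must use the \emph{multiplicative} form of the Chernoff bound and choose $m=\Theta(n^2/\eps^2)$ precisely so that $mp=\Omega(n^2)$ survives, and one must verify $r\le n-k$ so that the radius-$r$ balls genuinely lie inside a single subcube (which is what makes the normalization by $|B_{n-k}(r)|$ correct and the averaging operator mean-preserving). The conceptual crux is exactly that mean-preservation step, which is what converts ``the subcube is globally $\eps$-biased'' into ``an $\Omega(\eps)$-fraction of radius-$r$ balls inside it are $\Omega(\eps)$-biased'', i.e.\ into something detectable by the local queries the algorithm makes.
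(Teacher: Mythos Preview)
Your proposal is correct and follows essentially the same approach as the paper: both arguments condition on $\bff$ being $(\tfrac12-\eps)$-close to a junta on some $S$, use an averaging argument to find an $\eps$-fraction of biased subcubes, use mean-preservation of the radius-$r$ ball-averaging operator to find an $\eps/2$-fraction of biased balls within each such subcube, and finish with a Chernoff bound over the $m$ i.i.d.\ samples. Your phrasing via Jensen ($\E|g|\ge|\E g|$) is slightly cleaner than the paper's sign-case split, but the two arguments are the same in substance and in the constants that matter.
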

	
	\begin{proof}
		Let $f$ be a function that is $(\frac{1}{2} - \eps)$-close to some junta on a set of relevant variables $S$ of size $k$. 
		Note
			\[\dist(f, \calJ_S) = \frac{1}{2^{k}}\sum_{\rho \in \{0,1\}^S} \frac{1}{2} \left(1 - \frac{1}{2^{n-k}} \left| \sum_{y \in \{0,1\}^{[n] \setminus S}}  2f(\rho \sqcup y) - 1 \right| \right) \leq \frac{1}{2} - \eps. \]
		Rearranging, we have that
			\[ \frac{1}{2^{k}}\sum_{\rho \in \{0,1\}^S} \frac{1}{2^{n-k}} \left| \sum_{y \in \{0,1\}^{[n] \setminus S}}  2f(\rho \sqcup y) - 1 \right| \geq 2\eps. \]
		By an averaging argument, it now follows that for at least $\eps 2^k$ values of $\rho$ we have that
			\[ \frac{1}{2^{n-k}} \left| \sum_{y \in \{0,1\}^{[n] \setminus S}}  2f(\rho \sqcup y) - 1 \right| \geq \eps. \]
			
		Fix a particular $\rho \in \{0,1\}^S$ such that the above holds and assume that 
			\[ \frac{1}{2^{n-k}}  \sum_{y \in \{0,1\}^{[n] \setminus S}}  (2f(\rho \sqcup y) - 1) \geq \eps. \]
		Now observe
			\[\frac{1}{2^{n-k}} \sum_{y \in \{0,1\}^{[n] \setminus S}} \frac{1}{|B_{n-k}(r)|} \sum_{x \in B(y,r)} (2f(\rho \sqcup x)-1) = \frac{1}{2^{n-k}} \sum_{y \in \{0,1\}^{[n] \setminus S}} (2f(\rho \sqcup y)-1) \geq \eps. \]
		So by another averaging argument, we get that for at least $\eps 2^{n-k-1}$ values of $y \in \{0,1\}^{[n] \setminus S}$
			\[	\frac{1}{|B_{n-k}(r)|} \left| \sum_{x \in B(y,r)} 2f(\rho \sqcup x) -1 \right| \geq \eps/2 \]
		An analogous argument gives the same result when	
			\[ \frac{1}{2^{n-k}}  \sum_{y \in \{0,1\}^{[n] \setminus S}}  (2f(\rho \sqcup y) - 1) \leq -\eps. \]
				
		We can now conclude that
			\[\Pr_{\bx_i}[e_{S}(\bx^{(i)}) \geq \eps/2] \geq \eps^2/2.\]
		So by a Chernoff bound we conclude that	
			\[\Pr_{\bx^{(1)}, ..., \bx^{(m)}} \left [\frac{1}{m} \sum_{i = 1}^m \mathbb{I}(e_S(\bx^{(i)}) > \eps/2) \leq \eps^2/8 \right] \leq e^{-m\eps^2/8}.\]
		Since $m \geq 1024 n^2/\eps^2$, we conclude that we accept such a function $f$ with high probability. Finally, as a function $\bff \sim \calD_{YES}$ is $(\frac{1}{2} - \eps)$ close to a $k$-junta with high probability the result follows.
\end{proof}
	
	\begin{claim}
		\[\Pr_{\bff \sim \calD_{NO}}[\ALG(\bff) \text{ accepts}] = o(1)\]	
	\end{claim}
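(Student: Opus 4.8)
The plan is to show that a uniformly random $\bff$ is accepted by $\ALG$ with probability only $o(1)$, by proving that with high probability \emph{every} candidate junta‑set $S$ has its estimator $A_S:=\frac1m\sum_{i=1}^m\mathbb{I}(e_S(\bx^{(i)})>\eps/2)$ far below the acceptance threshold $\eps^2/8$. The key single‑ball estimate is: for a fixed $S$, a fixed $c\in\{0,1\}^S$, and a fixed center $x$ with $x|_S=c$, the quantity $e_S(x)$ equals $N^{-1}$ times the absolute value of a sum of $N:=|B_{n-k}(r)|$ independent $\pm1$ bits (the values $2\bff(z)-1$ for $z\in B(x,r)\cap\{y:y|_S=c\}$), so Hoeffding's inequality gives
\[\Pr_{\bff}\big[e_S(x)>\eps/2\big]\le 2\exp\!\big(-\eps^2 N/8\big)=:\beta.\]
The role of the parameter choice $r=\log(8/\eps^2)+\log\log(32/\eps)$, together with $\eps\ge\Omega(2^{-(n-k)/10})$ (which forces $r\le (n-k)/3$, so that $N\ge\binom{n-k}{r}$ is polynomially larger than $\eps^{-2}$), is precisely to make $\beta$ tiny: a short computation then yields $\beta\le\eps^2/64$, with lots of room to spare when $\eps$ is small.

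Next I would use that the $m$ sampled query‑balls are, with high probability, pairwise disjoint. Indeed
\[\Pr_{\bx^{(1)},\dots,\bx^{(m)}}\big[\exists\, i\ne j:\ B(\bx^{(i)},r)\cap B(\bx^{(j)},r)\ne\emptyset\big]\le\binom{m}{2}\,|B_n(2r)|\,2^{-n}=:\eta,\]
and because $m=1024n^2/\eps^2$ while the choice of $r$ keeps $|B_n(2r)|$ far below $2^n$ --- this is where $\eps\ge 2^{-n/128}$ enters --- we get $\eta=o(1)$. Condition on the event $\mathcal D$ that the $m$ balls are pairwise disjoint; note $\mathcal D$ depends only on the centers $\bx^{(i)}$. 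For any fixed realization of $\bx^{(1)},\dots,\bx^{(m)}$ in $\mathcal D$ and any fixed $S$, the indicator $\mathbb{I}(e_S(\bx^{(i)})>\eps/2)$ is a function of $\bff$ restricted to $B(\bx^{(i)},r)$, and these sets are disjoint; hence, over the randomness of $\bff$, the $m$ indicators are genuinely independent, each $\bern(p)$ with $p\le\beta\le\eps^2/64$.

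It then remains to combine a Chernoff bound with a union bound over the $\binom nk\le n^k$ candidate sets $S$. For a fixed tuple $\bx=(\bx^{(1)},\dots,\bx^{(m)})\in\mathcal D$ and a fixed $S$, the count $mA_S$ is a sum of $m$ independent $\bern(\le\beta)$ variables, so $\Pr_\bff[mA_S\ge\eps^2 m/8\mid\bx]\le(8e\beta/\eps^2)^{\eps^2 m/8}\le\exp(-\Omega(\eps^2 m))=\exp(-\Omega(n^2))$, where the bound $\beta\le\eps^2/64$ is used to keep the base $8e\beta/\eps^2$ bounded away from $1$ and $m=1024n^2/\eps^2$ is used for the last equality. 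Summing over $S$ and taking expectation over $\bx$,
\[\Pr_{\bff,\bx}\big[\ALG(\bff)\text{ accepts}\big]\le\Pr[\neg\mathcal D]+n^k\exp(-\Omega(n^2))\le\eta+n^k\exp(-\Omega(n^2))=o(1),\]
where the last equality uses $k\le n$. Together with the first (completeness) claim, this completes the proof of Lemma \ref{lem:testing-random-functions}.

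The step I expect to be the genuine bottleneck is the union bound over all $\binom nk$ candidate sets $S$: it goes through only because the per‑$S$ failure probability is \emph{doubly} small, $\exp(-\Omega(\eps^2 m))=\exp(-\Omega(n^2))$, which dwarfs $n^k$. Engineering this is exactly what pins down the two quantitative hypotheses --- $\eps\ge\Omega(2^{-(n-k)/10})$ makes $|B_{n-k}(r)|$ large enough that $\beta\le\eps^2/64$, and $\eps\ge 2^{-n/128}$ keeps $r$ (hence $|B_n(2r)|$) small enough that the sampled balls are pairwise disjoint with probability $1-o(1)$. If one were unwilling to assume disjointness, the independence step above would have to be replaced by a concentration argument --- e.g., partitioning $\{0,1\}^n$ into $|B_{n-k}(2r)|$ color classes via the radial coloring of Lemma \ref{lem:radial-coloring} (so that the relevant balls within each class are pairwise disjoint) and applying Chernoff class by class --- which works but is more delicate.
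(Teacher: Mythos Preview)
Your proof is correct and follows essentially the same approach as the paper: Hoeffding on a single ball to bound $\Pr[e_S(\bx^{(i)})>\eps/2]$, condition on well-separated centers so the $m$ indicators are independent over $\bff$, then Chernoff plus a union bound over the $\le n^k$ choices of $S$. The only cosmetic differences are that the paper conditions on $\dist(\bx^{(i)},\bx^{(j)})\ge n/4$ rather than on literal ball disjointness, and it bounds $\beta$ via $N\ge 2^r$ (yielding $\beta\le\eps/16$) rather than via $N\ge\binom{n-k}{r}$ as you do; your tighter bound $\beta\le\eps^2/64$ is in fact cleaner for the subsequent Chernoff step since it sits safely below the acceptance threshold $\eps^2/8$.
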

	
	\begin{proof}
		Fix a set $S \subseteq [n]$ of size $k$. By a Hoeffding bound, we have that 
			\[ \Pr_{\bff \sim \calD_{NO}, \bx^{(i)}} \left [ e_S(\bx^{(i)}) \geq \eps/2 \right] \leq 2e^{-\eps^2 |B_{n-k}(r)|/8} \leq 2e^{-\eps^2 2^r/8} \leq 2e^{-\log(32/\eps)} = \eps/16 \]
				
		Since $r \leq n/4$, conditioned on $\dist(\bx^{(i)}, \bx^{(j)}) \geq n/4$ for all $i \not = j$, we have that the events $\mathbb{I}(e_S(\bx^{(i)}) \geq \eps/2)$ are independent. (Note that without this assumption the events $\mathbb{I}(e_S(\bx^{(i)}) \geq \eps/2)$ are only independent given $f$.) Applying Chernoff bounds then gives us that
				\[\Pr_{\bff \sim \calD_{NO}} \left[\frac{1}{m} \sum_{i = 1}^m \mathbb{I}(e_S(\bx_i) > \eps/2) \geq \eps^2/8 \bigg | \dist(\bx^{(i)}, \bx^{(j)}) \geq n/4 \quad \forall i \not = j \right] \leq e^{-\eps m/64}.\]
		Taking a union bound over all sets $S$,
			\[\Pr_{\bff \sim \calD_{NO}} \left[\ALG(\bff) \text{ accepts} \big | \dist(\bx^{(i)}, \bx^{(j)}) \geq n/4 \quad \forall i \not = j \right] \leq n^k e^{-\eps^2 m/64} = o(1).\]
		It now remains to compute the probability that the $\bx^{(i)}$'s are far from one another. By a Chernoff bound
			\[\Pr_{\bx^{(1)}, \bx^{(2)}} \left [ \dist(\bx^{(1)}, \bx^{(2)}) \leq n/4 \right] \leq e^{-n/16}. \]
		A union bound then yields
			\[\Pr_{\bx^{(1)}, \bx^{(2)}, ..., \bx^{(m)}} \left [\exists i \not = j: \quad  \dist(\bx^{(i)}, \bx^{(j)}) \leq n/4 \right] \leq m^2 e^{-n/16} \leq 2^{20} \cdot n^4 \cdot 2^{n/32} \cdot e^{-n/16} = o(1). \]

	\end{proof}
\end{proof}

\end{document}